\newtheorem{theorem}{Theorem}[section]
\newtheorem{lemma}{Lemma}[section]
\newtheorem{follow}{Corollary}[section]
\newtheorem{pr}{Proposition}[section]
\theoremstyle{definition}
\newtheorem{remark}[theorem]{Remark}
\newcommand{\labelnummer}{\mbox{\normalfont (\roman{numcount})}}%
\let\curlabelspeicher\@currentlabel%
    \let\saveitem\item%
    \def\item{\saveitem%
      \def\@currentlabel{{\upshape\curlabelspeicher}$\,$\labelnummer}}%
    \let\savelabel\label%
    \def\label##1{\savelabel{##1}%
      \@bsphack%
        \ifmmode\else%
          \protected@write\@auxout{}%
          {\string\newlabel{##1item}{{\labelnummer}{\thepage}}}%
        \fi%
      \@esphack%
    }%
\renewcommand{\appendix}{\def\thesection{\textsc{Appendix}}}
 \let\leq\le
 \let\geq\ge
\DeclareMathOperator{\tr}{tr\kern1pt}
\newif\ifper\pertrue
\def\per{.}
\def\bti{\@ifnextchar[\bbti\bbbti}
\def\bbti[#1]#2{#2, #1.}
\def\bbbti#1{#1.}
\def\z{\@ifnextchar[\zz\zzz}
\def\zz[#1]#2#3#4#5{\perfalse\emph{#2} \textbf{#3}, #4 (#5) [#1]}
\def\zzz#1#2#3#4{\emph{#1} \textbf{#2}, #3 (#4)\ifper\per\fi\pertrue}
\def\pub{\@ifstar\pubstar\pubnostar}
\def\pubnostar{\@ifnextchar[\@@pubnostar\@pubnostar}
\def\@@pubnostar[#1]#2#3#4{#2, #3, #4, #1\ifper\per\fi\pertrue}
\def\@pubnostar#1#2#3{#1, #2, #3\ifper\per\fi\pertrue}
\def\pubstar[#1]#2#3#4{\perfalse #2, #3, #4 [#1]\pertrue}
\newcommand{\bel}{\begin{equation} \label}
\newcommand{\ee}{\end{equation}}
\newcommand{\eps}{\epsilon}
\newcommand{\re}{{\mathbb R}}
\def\beq{\begin{equation}}
\def\eeq{\end{equation}}
\newcommand{\bea}{\begin{eqnarray}}
\newcommand{\eea}{\end{eqnarray}}
\newcommand{\beas}{\begin{eqnarray*}}
\newcommand{\eeas}{\end{eqnarray*}}
\newcommand{\Pre}[1]{\ensuremath{\mathrm{Re} \left( #1 \right)}}
\newcommand{\R}{\mathbb{R}}
\newcommand{\N}{\mathbb{N}}
\begin{document}

\title[Edge states for Iwatsuka Hamiltonians]{Edge states induced by Iwatsuka Hamiltonians with positive magnetic fields}

\author[P.\ D.\ Hislop]{Peter D.\ Hislop}
\address{Department of Mathematics,
    University of Kentucky,
    Lexington, Kentucky  40506-0027, USA}
\email{hislop@ms.uky.edu}

\author[E.\ Soccorsi]{Eric Soccorsi}
\address{Aix Marseille Universit\'e,
CNRS, CPT, UMR 7332, 13288 Marseille, France
\& Universit\'e du Sud-Toulon-Var, CNRS, CPT, UMR 7332, 83957 La Garde, France }
\email{eric.soccorsi@univ-amu.fr}

\thanks{Version of \today}

\begin{abstract}
We study purely magnetic Schr\"odinger operators in two-dimensions $(x,y)$ with magnetic fields $b(x)$ that depend
only on the $x$-coordinate. The magnetic field $b(x)$ is assumed to be bounded, there are constants $0 < b_- < b_+ < \infty$
so that $b_- \leq b(x) \leq b_+$, and outside of a strip of
small width $-\epsilon < x < \epsilon$, where $0 < \epsilon < b_-^{-1/2}$, we have $b(x) = b_\pm x$ for $\pm x > \epsilon$.
The case of a jump in the magnetic field at $x=0$ corresponding to $\epsilon=0$ is also studied.
We prove that the magnetic field creates an effective barrier near $x=0$
that causes edge currents to flow along it consistent with the classical interpretation.
We prove lower bounds on edge currents carried by states with energy localized inside the energy bands of the Hamiltonian.
We prove that these edge current-carrying states are well-localized in $x$ to a region of size $b_-^{-1/2}$,
also consistent with the classical interpretation.
We demonstrate that the edge currents are stable with respect to various magnetic and electric perturbations. These lower bounds on the edge current hold for all time.
For a family of perturbations compactly supported in the $y$-direction, we prove that the time asymptotic current exists and satisfies the same lower bound.
\end{abstract}

\maketitle \thispagestyle{empty}

\tableofcontents

\vspace{.2in}

\noindent {\bf  AMS 2000 Mathematics Subject Classification:} 35J10, 81Q10,
35P20.\\
\noindent {\bf  Keywords:}
Schr\"odinger operators, magnetic field, magnetic edge states, magnetic barrier, asymptotic velocity. \\


\section{Introduction: Magnetic barriers and edge currents}\label{sec:intro1}
\setcounter{equation}{0}

Quantum Hall systems describe charge transport in bounded or unbounded regions in the plane in the presence of a transverse magnetic field. The typical Hall system is described by an electron moving in the plane subject to a {\em constant} transverse magnetic field. The Hall conductance is quantized \cite{bvesb,combes-germinet} and is stable under perturbations by random potentials. Confined systems, such as motion in a half-plane or a strip are also interesting as a current flowing along an edge is created. Confinement may be obtained by Dirichlet boundary conditions or an electrostatic potential barrier. The edge currents in these situations were explored in \cite{CHS,fgw,HS1,HS2}. In this article, we are interested in edge currents created by purely magnetic barriers.

The spectral properties of magnetic Schr\"odinger operators of the form $H = (-i \nabla - A)^2$ on ${\rm L}^2 (\R^2)$ have been of interest for many years. It is known that they depend only on the transverse magnetic field given by the third component of the cross product:
$b(x,y) = ( \nabla \times A(x,y) )_3$.
The case when $b(x,y) = b_0$ is the Landau Hamiltonian.
The spectrum is pure point consisting of infinitely-degenerate eigenvalues $E_n(b_0) = (2n + 1)b_0$, for $n=0,1,2, \ldots$, the Landau levels. Many works consider the case where the magnetic field is asymptotically constant $b(x,y) \rightarrow b_0$ as $r = \sqrt{x^2 + y^2} \rightarrow \infty$. If $b_0 = 0$, then the essential spectrum is the half-line $[0, \infty)$ (see, for example, \cite{leinfelder}). If, in addition, the magnetic field is short-range $|b(x,y)| \sim |(x,y)|^{-1 - \delta}$, for any $\delta > 0$, then the spectrum is purely absolutely continuous
(see, for example, \cite{ikebe-saito}).



Iwatsuka \cite{iwatsuka1} studied the case
when the magnetic field is not constant at infinity. In particular, he considered the case when $b(x,y) = b(x)$ is a
function of $x$ only. He supposed that $b(x)$ is bounded, $0 < M_- \leq b(x) \leq M_+ < \infty$. We are interested in his
model for which $b(x)$ has different limits as
$x \rightarrow \pm \infty$. Under these conditions, Iwatsuka proved that the spectrum is absolutely continuous.
Several years later, the transport properties of purely magnetic Schr\"odinger operators on ${\rm L}^2 (\R^2)$ were investigated by
physicists Reijniers and Peeters \cite{reijniers-peeters}.
They supposed that $b(x)$ assumes constant value $b_-$ for $x<0$ and $b_+$ for $x>0$.
They argued that this magnetic discontinuity creates an effective edge and that currents flow along the edge.
This is the magnetic analog of the barriers created by Dirichlet bound conditions along $x=0$ or a confining electrostatic
potential filling the half-space $x<0$ described in the paragraph above.
Partially motivated by \cite{reijniers-peeters}, Dombrowski, Germinet, and Raikov \cite{DGR} studied the edge conductance for generalized Iwatsuka models. We discuss this in section \ref{subsec:relation1}.

In this article, we prove the existence, localization, and stability of magnetic edge currents.
We consider a family of magnetic fields $b(x)$ with the following properties. We first consider an Iwatsuka-type model with a sharp transition at $x = 0$. Let $0<b_-<b_+<\infty$ and $b(x):= b_{\pm}$ for $x \in \re_{\pm}^* := \re_\pm \backslash \{ 0 \}$.
Following the notation of \cite{DGR}, we set
\bel{a1}
\beta(x):=\int_0^x b(s) ds = b_{\pm} x,\ x \in \re_{\pm}^*,
\ee
and consider the two-dimensional vector potential $A(x,y) = (A_1 (x,y), A_2 (x,y))$ defined by $A_1 (x,y) :=0$ and $A_2 (x,y):=\beta(x)$ generating this magnetic field.

Let $p_x := -i \partial_x$ and $p_y := - i \partial_y$ be the two
momentum operators. The two-dimensional magnetic Schr\"odinger operator $H(A)$
is defined on the dense domain ${\rm C}_0^\infty (\re^2) \subset {\rm L}^2 (\re^2)$ by
\bel{eq:sharp1}
H=H(A) := (-{\rm i} \nabla - A)^2 = p_x^2 + (p_y - \beta(x))^2.
\ee
We will call the Hamiltonian in \eqref{eq:sharp1} {\it the sharp Iwatsuka model}. We study the edge current flowing along $x=0$ created by the discontinuity in the magnetic field there. We prove the existence of states carrying a nonzero edge current. We show that the current is localized in a neighborhood about $x=0$ with width the order of $b_-^{-1/2}$.
We consider a smoothed version of $b(x)$ in section \ref{sec:smooth1} for which the magnetic field is bounded $0 < b_- \leq b(x) \leq b_+ < \infty$,
with $b(x) = b_\pm$ for $\pm x > \epsilon > 0$, for some $\epsilon > 0$. In order to preserve the localization of edge currents, we take $\epsilon < {b_-}^{-1/2}$.
Finally, we prove the currents are stable with respect to various families of magnetic and electric perturbations.

In a companion article with N.\ Dombrowski \cite{DHS}, we study the Iwatsuka model
for which $b_- = - b$, and $b_+ = b > 0$. The Hamiltonian for this model is symmetric with respect to the
reflection $x \rightarrow -x$ and the band functions have different asymptotics as $k \rightarrow \pm \infty$.
A characteristic of the model is the existence of so-called `snake orbits' (see \cite{reijniers-peeters}) along the magnetic edge $x=0$.

\subsection{Relation to edge conductance}\label{subsec:relation1}

As mentioned above, Dombrowski, Germinet, and Raikov \cite{DGR} studied the quantization of the Hall edge conductance for a generalized family of Iwatsuka models including the model discussed here.
Let us recall that the Hall edge conductance is defined as follows.
We consider the situation where the edge lies along the $y$-axis as discussed above.
Let $I \subset \R$ be a compact energy interval. We choose a smooth increasing function $g$ so $g(s) = s$ on $I = [a,b]$ and $0 \leq g \leq 1$. It follows that $g'_{ | [ a,b]} = 1$. We can arrange it so there is an $\sigma > 0$ so that ${\rm supp} ~g' \subset [a - \sigma, b + \sigma]$.
Let $\chi = \chi (y)$ be an $x$-translation invariant smooth function with $\mbox{supp} ~\chi' \subset [-1/2, 1/2]$. The edge Hall conductance is
defined by
\beq\label{eq:edgy1}
\sigma_e^I (H) = - 2 \pi {\rm tr} ~ ( g'(H) i[ H, \chi ] ) ,
\eeq
whenever it exists. The edge conductance measures the current across the axis $y=0$ with energies below the energy interval $I$.

One of the main results of \cite{DGR} in this setting is the quantization of edge currents for the Iwatsuka model.
Roughly speaking, for a fixed energy level $E$, the edge conductance at $E$ counts the number of Landau levels below $E$ carrying an edge current. Applied to the model studied here, for which $b(x) \rightarrow b_\pm$ as $x \rightarrow \pm \infty$, with $0 < b_- < b_+$, they proved \cite[Corollary 2.3]{DGR} for $I \subset \R$, any energy interval $I \subset (- \infty, b_+) \cap ( (2 n - 1) b_-, (2n + 1)b_-) \neq \emptyset$, for some positive integer $n \geq 1$, that the edge conductance is quantized: $\sigma_e^I (H) = n$. We complement this result as follows. We prove that there are $n$ nonempty intervals $\Delta_j, j=1, \ldots, n$ located below $I$ and a finite constant $c > 0$, see Theorem \ref{lm-a3}, so that for any state $\psi = \mathbb{P}(\Delta_j) \psi$, where $\mathbb{P}(\Delta_j)$ is the spectral projector for $H$ and interval $\Delta_j$,
we have
\beq\label{eq:edgy2}
\langle \psi, v_y \psi \rangle \geq c b_-^{1/2} \| \psi \|^2 > 0, ~~ v_y := p_y - \beta (x) .
\eeq
This indicates that such a state $\psi$ carries a nontrivial edge current.

\subsection{Contents}\label{subsec:contents1}

We recall the basic characteristics of the sharp Iwatsuka model in section 2 and present the standard fiber decomposition. The band functions are studied extensively in section 3. The main result, Theorem \ref{prop-a2}, provides a quantitative lower bound on the derivative of any band function for quasi-momentum $k$ in specified intervals. Edge currents and their spatial localization around $x=0$ are studied in section 4. In section 5, we treat the case of a smoothed magnetic field that is piecewise
constant outside of an interval $[- \epsilon, \epsilon]$, for any $0 < \epsilon < C_0 b_-^{-1/2}$.
We prove the existence and localization of edge currents in this case also. In section 6, we prove the stability of edge currents under certain families of magnetic and electric perturbations. We discuss lower bounds on the time-evolved edge current and show that they are stable under time evolution in section 7. In addition, we prove that for a class of perturbations that have compact support in the $y$-direction, the asymptotic velocity exists and is bounded below indicating the existence of edge currents for all time.


\subsection{Notation}\label{subsec:note1}

We write $\langle \cdot, \cdot \rangle$ and $\| \cdot \|$ for the inner product and norm on ${\rm L}^2(\R^2)$.
The functions are written with coordinates $(x,y)$, or, after a partial Fourier transform with respect to $y$,
we work with functions $f(x,k) \in L^2(\R^2)$. We often view these functions $f(x,k)$ on ${\rm L}^2 (\R_x)$ as parameterized by $k \in \R$.
In this case, we also write $\langle f(\cdot, k), g(\cdot, k ) \rangle$ and $\| f(\cdot, k) \|$ for the inner product
and related norm on $L^2 (\R_x)$. So whenever an explicit dependance on the parameter $k$ appears, the functions should be considered on $L^2 (\R_x)$. We indicate explicitly in the notation, such as $\| \cdot \|_X$, for $X = {\rm L}^2 (\R_\pm)$, when we work on those spaces. We write $\| \cdot \|_\infty$ for $\| \cdot \|_{{\rm L}^\infty (X)}$ for $X= \R, \R_\pm, ~{\rm or} ~ \R^2$.
For a subset $X \subset \R$, we denote by $X^*$ the set $X^* := X \backslash \{ 0 \}$.

\subsection{Acknowledgements}\label{subsec:acknowledgements1}

PDH thanks the Centre de Physique Th\'eorique, CNRS, Luminy, Marseille, France, for its hospitality. PDH was partially supported by the Universit\'e du Sud Toulon-Var, La Garde, France, and National Science Foundation grant 11-03104
during the time part of the work was done. ES thanks the University of Kentucky, Lexington, KY, USA, where part of this work was done, for its warm welcome.


\section{Preliminary analysis of the sharp Iwatsuka model with two constant magnetic fields}\label{sec:sharp1}

\setcounter{equation}{0}

Since the Hamiltonian defined in \eqref{a1}--\eqref{eq:sharp1} is invariant with respect to translations in the $y$-direction, it can be reduced to a family of parameterized Schr\"odinger operators on $L^2 (\R)$.
Let ${\mathcal F}$ denote the
partial Fourier transform with respect to $y$,
\beq\label{eq:fourier1}
 ({\mathcal F}u)(x,k): = \hat{u}(x,k) = \frac{1}{\sqrt{2\pi}} \int_{\re} e^{-{\rm i} yk}
u(x,y)dy,\ (x,k) \in \re^2.
\eeq
The operator $H$ admits a partial Fourier decomposition with
respect to the $y$-variable, and the Hilbert space ${\rm L}^2 (\re^2 )$ can be
expressed as a constant fiber direct integral over $\re$ with fibers ${\rm L}^2(\re)$,
\bel{a4}
{\mathcal F} H {\mathcal F}^*= \int^{\oplus}_{\re} h(k) dk
\ee
with
\bel{a5}
h(k) := p_x^2 + V(x,k)\ {\rm on}\ {\rm L}^2 (\re),\ V(x,k):=(k-\beta(x))^2.
\ee

In light of \eqref{a1}, the potential $V(x,k)$, $k \in \re$, is unbounded as $|x|$ goes to infinity, hence $h(k)$ has a compact resolvent.
Let $\left\{\omega_j(k)\right\}_{j=1}^{\infty}$ be the increasing
sequence of the eigenvalues of the operator $h(k)$, $k \in \re$.
Since all the eigenvalues $\omega_j(k)$ are simple (see \cite{HS1}[Proposition A2]), they depend
analytically on $k \in \re$ (see \cite{K} or \cite{RS}).
We refer to the functions $\omega_j(k)$ as the {\em band functions}.

Let us introduce two Landau Hamiltonians on $\re^2$ each with a constant magnetic field $b_-$ or $b_+$:
\bel{a8}
h_{\pm}(k):=p_x^2 + V_{\pm}(x,k),\ V_{\pm}(x,k):=(k-b_{\pm} x)^2.
\ee
We then have the simple comparison for the operators
\beq\label{eq:ev-bounds1}
h_-(k) \leq h(k) \leq h_+(k  b_+ \slash b_-), ~~ \mbox{for} ~~k \leq 0 ,
\eeq
and
\beq\label{eq:ev-bounds2}
h_-(k b_- \slash b_+) \leq h(k) \leq h_+(k), ~~ \mbox{for} ~~ k \geq 0.
\eeq
These inequalities and the mini-max principle imply that
\bel{a9}
(2j-1) b_- \leq \omega_j(k) \leq (2j-1)b_+,\ j \in {\mathbb N}^*.
\ee

Let $\left\{\psi_j(k)\right\}_{j=1}^{\infty}$ be the ${\rm L}^2(\re)$-normalized eigenfunctions of
$h(k)$ satisfying
\bel{a10}
h(k) \psi_j(x,k) = \omega_j(k) \psi_j(x,k),\ x \in \re.
\ee
We choose all $\psi_j(k)$ to be real and the ground state eigenfunction to satisfy $\psi_1(x,k) > 0$, for $x \in
\re$ and $k \in \re$. Since $V(.,k) \in {\rm C}^0(\re) \cap {\rm C}^{\infty}(\re^*)$, the functions $\psi_j(.,k) \in {\rm C}^2(\re) \cap {\rm C}^{\infty}(\re^*)$, $j \in {\mathbb N}^*$, from \cite{HS1}[Proposition A1].
Moreover, the orthogonal projections $P_j (k) := |\psi_j(k)\rangle \langle
\psi_j(k)|$ , $j \in {\mathbb N}^*$, depend analytically on $k$ (see \cite{K} or
\cite{RS}).\\

Similarly, we write $\{ \psi_j^{\pm}(k) \}_{j=1}^{+\infty}$ the ${\rm L}^2(\re)$-normalized and real analytic eigenfunctions of the
Landau Hamiltonians $h_{\pm}(k)$, with $\psi_1^{\pm}(x,k)>0$, for $x \in \re$ and $k \in \re$.


\section{Estimates on the derivative of the band functions}\label{sec:band-deriv1}
\setcounter{equation}{0}

We first derive two useful expressions for the derivative of the band functions.
Fix $j \in {\mathbb N}^*$. According to the Feynman-Hellmann Theorem, we have
\bel{eq:f-h1}
\omega_j'(k) =  \int_{\re} \left( \frac{d h}{dk}\right) (k) \psi_j(x,k) \psi_j(x,k) ~dx
 = 2 \int_{\re} (k-\beta(x)) \psi_j(x,k)^2 ~dx,
 \ee
since $\frac{d}{dk} h(k)=\frac{\partial}{\partial k} V(x,k)= 2 (k-\beta(x))$.
Hence, it follows that
\bea
\omega_j'(k) & = & -\sum_{\zeta=+,-} b_{\zeta} ^{-1} \int_{\re_{\zeta}} \frac{\partial }{\partial x} (k-b_{\zeta}x)^2 \psi_j(x,k)^2 dx \nonumber \\
& = & \sum_{\zeta=+,-} b_{\zeta}^{-1} \left( \zeta k^2 \psi_j(0,k)^2 + 2\int_{\re_{\zeta}} (k-b_{\zeta}x)^2 \psi_j(x,k) \psi_j'(x,k) dx \right), \label{a11}
\eea
by integrating by parts.
We used the fact that $\lim_{x \rightarrow \pm \infty} | V(x,k) | \psi_j(x,k)^2=0$. This follows from the decay of the eigenfunctions established in the proof of Theorem \ref{lm-a4}.
Putting \eqref{a10} and \eqref{a11} together we get that
\bea
\omega_j'(k) & = & -(b_-^{-1}-b_+^{-1}) k^2 \psi_j(0,k)^2 \nonumber \\
& & + 2 \sum_{\zeta=+,-} b_{\zeta} ^{-1} \int_{\re_{\zeta}} (\omega_j(k) \psi_j(x,k) + \psi_j''(x,k)) \psi_j'(x,k) dx \nonumber \\
& = & (b_-^{-1}-b_+^{-1}) \left( (\omega_j(k)-k^2) \psi_j(0,k)^2 + \psi_j'(0,k)^2 \right). \label{a12}
\eea


\subsection{Positivity of the derivative of the band functions}\label{subsec:positivity1}

We prove that the bands are monotone increasing functions of $k$.
Since
\bea\label{eq:identity1}
\omega_j(k) \psi_j(0,k)^2 &= & 2 \omega_j(k) \int_{-\infty}^0 \psi_j(x,k) \psi_j'(x,k) dx  \nonumber \\
 &=& 2 \int_{-\infty}^0 h(k) \psi_j(x,k) \psi_j'(x,k) dx,
 \eea
 we see that
\beas
\omega_j(k) \psi_j(0,k)^2 & = & 2 \int_{-\infty}^0 ( - \psi_j''(x,k) + (b_-x-k)^2 \psi_j(x,k) ) \psi_j'(x,k) )~ dx \nonumber \\
& = & -\psi_j'(0,k)^2 + \int_{-\infty}^0 (b_-x-k)^2 (\psi_j(x,k)^2)'  dx \nonumber \\
& = &  -\psi_j'(0,k)^2 - 2b_- \int_{-\infty}^0 (b_-x-k) \psi_j(x,k)^2  dx + k^2 \psi_j(0,k)^2,
\eeas
by standard computations and provided $\lim_{x \rightarrow - \infty} (b_- x - k)^2 \psi_j(x,k)^2 = 0$. As a consequence, we have
$$ (\omega_j(k)-k^2) \psi_j(0,k)^2 + \psi_j'(0,k)^2 = -2b_- \int_{-\infty}^0 (b_-x-k) \psi_j(x,k)^2  dx. $$
Substituting this into the right side of \eqref{a12}, we obtain
\bel{a12b}
\omega_j'(k) = 2 \left( \frac{b_+}{b_-} -1 \right)  \int_{-\infty}^0 (k-b_-x) \psi_j(x,k)^2  dx .
\ee
Note that for $k \geq 0$, the right side of \eqref{a12b} is positive.

In order to get an expression that is positive
for $k < 0$, we start in a similar manner
from the identity
$$\omega_j(k) \psi_j(0,k)^2 =-2 \omega_j(k) \int_0^{+\infty} \psi_j(x,k) \psi_j'(x,k) dx = -2 \int_0^{+\infty} h(k) \psi_j(x,k) \psi_j'(x,k) dx,$$
and we obtain in the same way that
$$ (\omega_j(k)-k^2) \psi_j(0,k)^2 + \psi_j'(0,k)^2 = 2b_+ \int_0^{+\infty} (b_+x-k) \psi_j(x,k)^2  dx, $$
and thus
\bel{a12c}
\omega_j'(k) = 2 \left( \frac{b_+}{b_-}-1 \right)  \int_0^{+\infty} (b_+ x-k) \psi_j(x,k)^2  dx,
\ee
with the aid of \eqref{a12}. Note that for $k < 0$, the right side of \eqref{a12c} is positive.

\noindent
Combining these two results, we obtain the following lemma.

\begin{lemma}
\label{lm-a1}
For every $j \in {\mathbb N}^*$ and for every $k \in \re$, the derivative of the band function is positive: $\omega_j'(k) > 0$.
\end{lemma}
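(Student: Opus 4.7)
The plan is to exploit the two integral representations \eqref{a12b} and \eqref{a12c} already established, which each share the positive prefactor $2(b_+/b_- - 1)$. Splitting into the cases $k\ge 0$ and $k\le 0$ (with overlap at $k=0$), I would handle each sign of $k$ with the representation whose integrand is manifestly nonnegative on the relevant half-line, and only then worry about upgrading the resulting inequality to a strict one.

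For $k\ge 0$, I would invoke \eqref{a12b}: on $(-\infty,0]$ the weight $k-b_-x$ satisfies $k-b_-x\ge -b_-x\ge 0$, so the integrand is pointwise nonnegative and $\omega_j'(k)\ge 0$ follows at once. The case $k\le 0$ is treated symmetrically from \eqref{a12c}, since $b_+x-k\ge b_+x\ge 0$ on $[0,+\infty)$.

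It remains to upgrade these to strict inequalities. For $k\ge 0$, the bound $k-b_-x\ge -b_-x>0$ for $x<0$ shows that the integral in \eqref{a12b} vanishes only if $\psi_j(\cdot,k)\equiv 0$ on $(-\infty,0)$, hence on $(-\infty,0]$ by continuity. Using the ${\rm C}^1$-regularity of $\psi_j(\cdot,k)$ at $x=0$ recalled after \eqref{a10}, one then reads off the Cauchy data $\psi_j(0,k)=\psi_j'(0,k)=0$, and the standard uniqueness theorem for the linear second-order ODE $-\psi''+V(\cdot,k)\psi=\omega_j(k)\psi$ on $(0,+\infty)$ forces $\psi_j(\cdot,k)\equiv 0$, contradicting $\|\psi_j(\cdot,k)\|_{{\rm L}^2(\re)}=1$. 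The case $k\le 0$ is handled identically starting from \eqref{a12c}. The main (mild) obstacle is precisely this unique-continuation argument for strict positivity; once one commits to using both integral representations and to leaning on the ${\rm C}^1$-matching of $\psi_j$ at $x=0$, the rest of the proof is immediate from the identities already in hand.
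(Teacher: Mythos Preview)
Your proposal is correct and follows essentially the same route as the paper: use \eqref{a12b} for $k\ge 0$ and \eqref{a12c} for $k\le 0$, noting that the integrands are nonnegative on the relevant half-lines. The paper simply asserts that the right-hand sides are positive, whereas you spell out the unique-continuation argument (vanishing on a half-line forces $\psi_j(0,k)=\psi_j'(0,k)=0$, hence $\psi_j\equiv 0$ by ODE uniqueness) that upgrades nonnegativity to strict positivity; this is a detail the paper leaves implicit, so your write-up is in fact more complete on this point.
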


\begin{figure}
\centering
\includegraphics{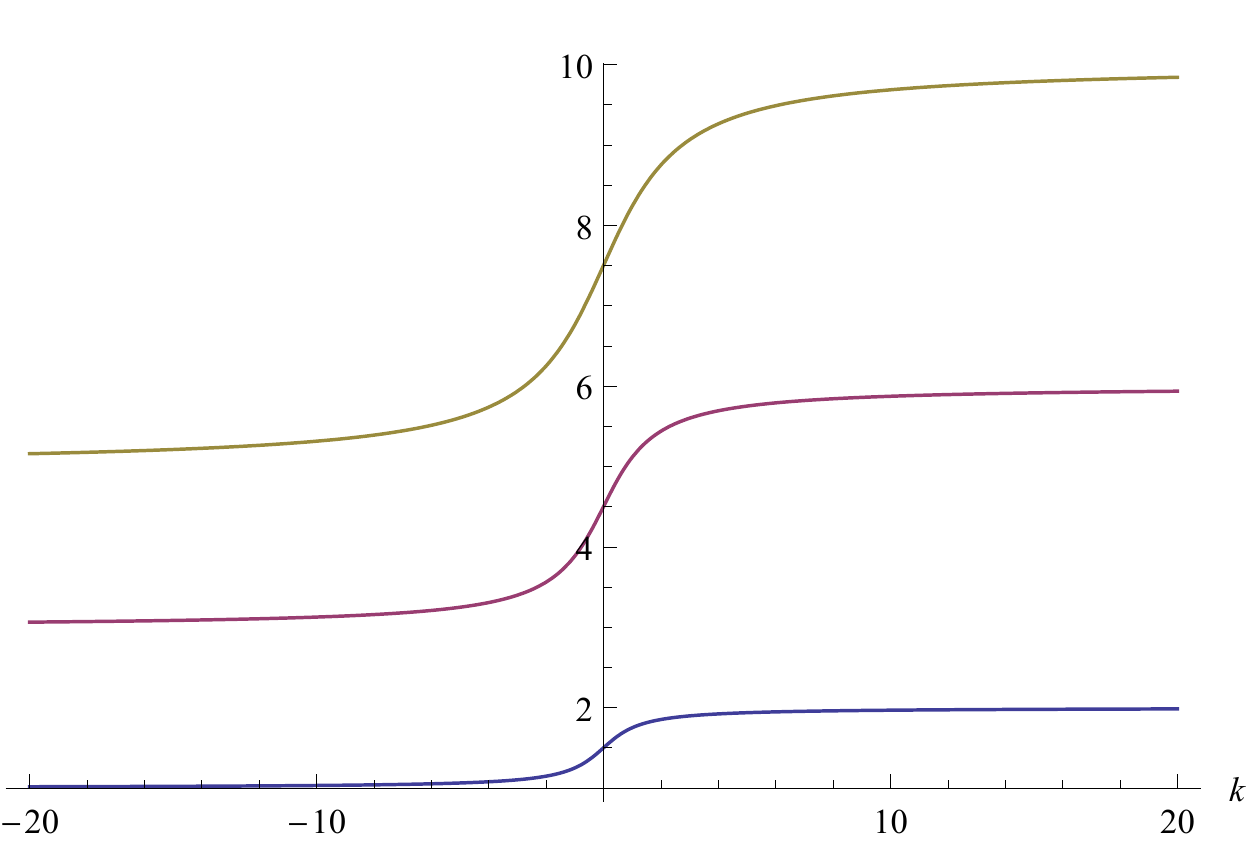}
\caption{Approximated shape of the band functions $k \mapsto \omega_j(k)$, for $j=1,2,3$, of the sharp Iwatsuka Hamiltonian with $b_+=2b_-=2$.}
\end{figure}

As the band functions $k \mapsto \omega_j(k)$, $j \in \N^*$, are non constant from Lemma \ref{lm-a1}, the spectrum of $H$ is purely absolutely continuous according to \cite{RS}[Theorem XIII.86]. Moreover we have
$\lim_{k \rightarrow \pm \infty} \| (h(k) - (2j-1) b_{\pm}) \psi_j^{\pm}(.,k) \| = 0$, directly from \eqref{a19}. From this and \eqref{a9} then follows that
\beq\label{eq:bflimit1}
\lim_{k \rightarrow \pm \infty} \omega_j(k) = (2j-1) b_\pm,\ j \in \N^*,
\eeq
from which we get that
$$ \sigma(H)= \sigma_{ac}(H) = \bigcup_{j \in \N^*} \overline{\omega_j(\re)} =  \bigcup_{j \in \N^*} [ (2j-1) b_-,(2j-1) b_+]. $$

\subsection{A positive lower bound on the derivative of the band function}\label{subsec:strict-pos1}

We next obtain a strictly positive lower bound on the derivative $\omega_j'(k)$ for $k$ in selected intervals in $\re$. The edge currents are carried by states in the range of the spectral projector for $H$ and intervals $\Delta_j$ in the energy bands $[(2j-1)b_-, (2j-1)b_+]$, obtained as the range of $\omega_j(k)$, for $k \in \re$. In general, these bands may overlap as seen from \eqref{a9}. In the next lemma, we fix an integer $n$ and show that if $b_\pm$ satisfy a certain relation, then all the bands $\{ \omega_j (k) ~|~ k \in \re \}$, for $j = 1,2, \ldots, n$ do not overlap. This allows us to characterize $\omega_j^{-1} (\Delta_j)$, for certain intervals $\Delta_j \subset \mbox{Ran} ~ \omega_j$.
Finally, we use the fact that $h(k)$ is close to $h_\pm (k)$ for $k \in \R_\pm$, respectively, and that the eigenfunctions and eigenvalues of $h_\pm (k)$ are well known.

\begin{pr}
\label{prop-a2}
Let $b_->0$, $b_+=r b_-$ with $r \in (1,3^{1 \slash 2}]$, and let $n$ denote the unique positive integer satisfying
\bel{a12d}
\left( \frac{2n+3}{2n+1} \right)^{1 \slash 2} < r \leq  \left( \frac{2n+1}{2n-1} \right)^{1 \slash 2}.
\ee Fix $j \in {\mathbb N}_n^*:=\{1,2,\ldots,n \}$
and consider $\Delta_j:=((2j-1+\delta_j) b_-,(2j-1-\delta_j) b_+)$ where $\delta_j$ verifies
\bel{a13}
0 < \delta_j < (2j-1) \left( \frac{r-1}{r+1}\right) < 1/2.
\ee
Then
\begin{itemize}
\item Disjointness of inverse images: $\omega_l^{-1}(\Delta_j)=\emptyset$, for every $l \in {\mathbb N}^* \backslash \{ j \}$.
\item Positivity of the derivative: For each $j \in {\mathbb N}_n^*:=\{1,2,\ldots,n \}$,
there exists a constant $c_j>0$, independent of $k$, $b_{\pm}$ and {$\delta_j$},
such that we have
\bel{a14}
\omega_j'(k) \geq c_j  \delta_j^3 \left( \frac{r-1}{r^3} \right)   b_-^{1 \slash 2},\ k \in \omega_j^{-1}(\Delta_j).
\ee
\end{itemize}
\end{pr}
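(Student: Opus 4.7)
\noindent\emph{Disjointness.} By \eqref{a9}, $\omega_l(\R) \subset [(2l-1)b_-, (2l-1)b_+]$ for every $l \in \N^*$, so the plan is to separate $\Delta_j$ from each such band with $l \neq j$. The constraint \eqref{a13} guarantees $\Delta_j \neq \emptyset$. For $l = j+1$, separation reduces to $(2j+1) > (2j-1-\delta_j) r$, which follows from $r \leq r^2 \leq (2n+1)/(2n-1) \leq (2j+1)/(2j-1)$ (using $r > 1$, \eqref{a12d} and $j \leq n$) together with $\delta_j > 0$. For $l = j-1$ (possible when $j \geq 2$), we need $(2j-3)r < 2j-1+\delta_j$, which reduces to the elementary inequality $\sqrt{(2n+1)/(2n-1)} < (2n-1)/(2n-3)$, equivalent to $(2n-1)^3 - (2n+1)(2n-3)^2 = 8n^2 - 10 > 0$, valid for $n \geq 2$. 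The remaining bands are separated \emph{a fortiori}.

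\noindent\emph{Reduction of the derivative bound to a mass estimate.} Fix $k \in \omega_j^{-1}(\Delta_j)$; we treat the case $k \geq 0$, the case $k \leq 0$ being symmetric via \eqref{a12c}. The key observation from \eqref{a12b} is that the integrand is nonnegative on $\R_-$ and satisfies $k - b_- x \geq b_-^{1/2}$ whenever $x \leq -b_-^{-1/2}$. Restricting the integral to this region yields a bound of the form
\beq\label{eq:plan-reduction}
\omega_j'(k) \;\geq\; c\,(r-1)\, b_-^{1/2}\! \int_{-\infty}^{-b_-^{-1/2}} \psi_j(x,k)^2\,dx,
\eeq
so the whole task reduces to a mass estimate
\beq\label{eq:plan-mass}
\int_{-\infty}^{-b_-^{-1/2}} \psi_j(x,k)^2\,dx \;\geq\; c'\,\frac{\delta_j^3}{r^{2}}.
\eeq

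\noindent\emph{Proof of the mass bound.} We plan to argue by contradiction. If the mass in \eqref{eq:plan-mass} were very small, then $\psi_j(\cdot,k)$ would be essentially supported in $(-b_-^{-1/2}, \infty)$, a region where the potential of $h(k)$ differs from that of the Landau Hamiltonian $h_+(k)$ only inside a window of width $b_-^{-1/2}$ around $0$. Using the Landau eigenfunction $\psi_j^+(\cdot,k)$ as a trial vector in the min--max characterization of $\omega_j(k)$ and its neighbors, we compare the $j$-th eigenvalue of $h(k)$ with the Landau spectrum $\{(2m-1)b_+ : m \in \N^*\}$. The disjointness established in part one forces the relevant index to be $m = j$, so $\omega_j(k)$ would lie arbitrarily close to $(2j-1)b_+$. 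But membership $\omega_j(k) \in \Delta_j$ enforces the gap $(2j-1)b_+ - \omega_j(k) \geq \delta_j b_+$, and quantifying this gap yields the required polynomial lower bound \eqref{eq:plan-mass} on the missing mass.

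\noindent\emph{Main obstacle.} The delicate part is tracking the exponents of $\delta_j$ and $r$ through the perturbative comparison. A naive linear perturbation estimate would give only the power $\delta_j^1$; the cubic exponent $\delta_j^3$ accumulates from three distinct sources: the linear spectral gap, the Hermite overlap integrals appearing in the min--max computation, and the conversion between $L^2$-mass on the half-line $\R_-$ and mass outside the magnetic-length neighborhood of the origin. The $r$-dependence must be handled carefully because the Landau eigenfunctions of $h_+(k)$ and $h_-(k)$ are built on different magnetic lengths $b_\pm^{-1/2}$, forcing the rescaling factors to combine into the prefactor $(r-1)/r^3$ of \eqref{a14}.
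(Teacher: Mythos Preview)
Your disjointness argument is correct and essentially parallels the paper's.

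For the derivative bound your strategy differs substantially from the paper's, and your ``proof of the mass bound'' has a genuine gap. You propose to argue by contradiction: if the mass on $(-\infty,-b_-^{-1/2})$ were small, $\psi_j(\cdot,k)$ would be an approximate eigenfunction of $h_+(k)$, forcing $\omega_j(k)$ close to a Landau level of $h_+$ and contradicting the gap $\delta_j b_+$. But a contradiction scheme of this type shows only that the mass is \emph{positive}; it does not by itself produce a \emph{quantitative} lower bound of the form $c'\delta_j^3/r^2$ with $c'$ independent of $k$, $b_\pm$, $\delta_j$. To extract such a bound you would need an explicit estimate $\dist(\omega_j(k),\sigma(h_+(k)))\le C_j\, M^\alpha\, b_-$, and this in turn requires bounding $\int_{-\infty}^0 (V_+-V)\psi_j^2\,dx$ in terms of $M$. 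The difficulty is that on $(-b_-^{-1/2},0)$ the perturbation $(V_+-V)(x,k)$ scales with $k$, and nothing in your outline supplies a $\delta_j$-uniform bound on $k\in\omega_j^{-1}(\Delta_j)\cap\re_+$. Your account of how $\delta_j^3$ ``accumulates from three distinct sources'' is speculation, not an estimate; and using $\psi_j^+$ as a min--max trial vector gives information about $\omega_j(k)$, not about the mass of $\psi_j$.

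The paper takes a direct route that avoids any mass bound. For $k\ge 0$ it expands $\psi_j(k)$ in the Landau basis $\{\psi_l^+(k)\}_l$ with coefficients $\alpha_{j,l}^+(k)$, uses the operator inequality $r^2 h(k)\ge h_+(k)$ to prove $\sum_{l\le j}|\alpha_{j,l}^+(k)|^2\ge \delta_j/(2(2j-1))$, and then applies Cauchy--Schwarz to the identity $((2l-1)b_+-\omega_j(k))\alpha_{j,l}^+(k)=\int_{-\infty}^0(V_+-V)\psi_j\psi_l^+\,dx$, obtaining
\[
|(2l-1)b_+-\omega_j(k)|\,|\alpha_{j,l}^+(k)|\ \le\ r^{1/2}\,\|(k-b_-x)^{1/2}\psi_j\|_{{\rm L}^2(\re_-)}\,\|(k-b_+x)^{3/2}\psi_l^+\|_{{\rm L}^2(\re_-)}.
\]
The first norm on the right equals a constant (depending only on $r$) times $\omega_j'(k)^{1/2}$ by \eqref{a12b}, and the second is $\tilde c_l\,b_+^{3/4}$ by an explicit Hermite integral. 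Squaring, summing over $l\le j$, and invoking the coefficient bound gives \eqref{a14} directly: $\delta_j^2$ from the spectral-gap factor squared and the extra $\delta_j$ from the coefficient sum. A parallel argument with $h_-(k)$ and \eqref{a12c} handles $k\le 0$ and in fact yields the stronger power $\delta_j^2$ there.
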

\begin{proof}
1. Proof of part 1. We first note that $\delta_j$ fulfilling the first and second inequalities in \eqref{a13} satisfies $\delta_j \in (0,1 \slash 2)$. Indeed, we have
$$
\delta_j < (2j-1) (r^2-1) \slash  (r+1)^2 < (2j-1)  ( (2n+1) \slash (2n-1)-1) \slash ( ( (2n+3) \slash (2n+1) )^{1 \slash 2} +1 )^2,
$$
from  \eqref{a12d}-\eqref{a13}, hence
$\delta_j <2 \left( \left( \frac{2n+3}{2n+1} \right)^{1 \slash 2} +1 \right)^{-2}  \left( \frac{2j-1}{2n-1} \right) < 1 \slash 2$.
We next consider $\omega_l^{-1}(\Delta_j)$ as described in the proposition. Due to \eqref{a12d}, we have
$$
b_- \geq \left( \frac{2n-1}{2n+1} \right)^{1 \slash 2} b_+ \geq \left( \frac{2j-1}{2j+1} \right)^{1 \slash 2} b_+ ,
$$
since $j \leq n$. Hence, it follows from this and \eqref{a9} that
$$
\inf_{k \in \re} \omega_{j+1}(k) =(2j+1) b_- > (2j-1) b_+.
$$
This yields $\omega_l^{-1} (\Delta_j)=\emptyset$ for $l \geq j+1$. Similarly, for $n \geq j \geq 2$, it holds true that
$$
\sup_{k \in \re} \omega_{j-1}(k)=(2j-3)b_+ \leq (2j-3) \left( \frac{2n+1}{2n-1} \right)^{1 \slash 2} b_- \leq (2j-3) \left( \frac{2j-1}{2j-3} \right)^{1 \slash 2} b_- <(2j-1) b_-,
$$
so that $\omega_l^{-1}(\Delta_j)=\emptyset$ for $l \in {\mathbb N}_{j-1}^*$.\\

2. Proof of part 2.
To prove the remaining part of this proposition, and the lower bound \eqref{a14}, we examine the two cases $k \leq 0$ and $k \geq 0$ separately.\\

\noindent
{\it Case: $k \leq 0$.} Setting $\alpha_{j,l}^{-}(k):=\langle \psi_j(k) , \psi_l^{-}(k) \rangle$ for all $l \in {\mathbb N}^*$, and using the operator inequality $h(k) \geq h_-(k)$, which holds true for all $k \leq 0$,
we get
$$ 0 \leq \langle \psi_j(k), (h(k)-h_-(k)) \psi_j(k) \rangle = \sum_{l \geq 1} (\omega_j(k) - (2l-1)b_-) | \alpha_{j,l}^{-}(k)|^2,$$
and hence
\beas
\sum_{l=1}^j (\omega_j(k) - (2l-1)b_-) | \alpha_{j,l}^{-}(k)|^2 & \geq & \sum_{l \geq j+1} ((2l-1)b_- -\omega_j(k)) | \alpha_{j,l}^-(k)|^2 \\
& \geq & ((2j+1)b_- - \omega_j(k)) \sum_{l \geq j+1} | \alpha_{j,l}^-(k)|^2.
\eeas
From this and the normalization condition $\sum_{l \geq j+1} | \alpha_{j,l}^-(k)|^2=1-\sum_{l=1}^j | \alpha_{j,l}^-(k)|^2$ then follows that
$2 \sum_{l=1}^j (j+1-l) b_- | \alpha_{j,l}^{-}(k)|^2 \geq (2j+1)b_- -\omega_j(k)$, giving
\bel{a14a}
\sum_{l=1}^j | \alpha_{j,l}^{-}(k)|^2 \geq \frac{(2j+1)b_- - \omega_j(k)}{2j b_-}.
\ee
Bearing in mind that
$\omega_j(k) <(2j-1) \left( \frac{2n+1}{2n-1} \right)^{1 \slash 2} b_- \leq (2j-1) \left( \frac{2j+1}{2j-1} \right)^{1 \slash 2} b_-  < 2j b_-$, we deduce from \eqref{a14a} that
\bel{a14b}
\sum_{l=1}^j | \alpha_{j,l}^{-}(k)|^2 >  \frac{1}{2j},\ k \in \omega_j^{-1}(\Delta_j) \cap \re_-.
\ee
Further, since
$$\langle (h(k)-h_-(k)) \psi_j(k) , \psi_l^-(k) \rangle = \int_0^{+\infty} ((b_+ x-k)^2-(b_- x-k)^2) \psi_j(x,k)  \psi_l^-(x,k) {\rm d} x, $$
for each $l \in {\mathbb N}_j^*$, and since $b_+ x-k \geq b_- x-k \geq 0$ for all $(x,k) \in \re_+ \times \re_-$, we find that
$$
| (\omega_j(k) - (2l-1) b_-) \alpha_{j,l}^-(k)  | \leq  \int_0^{+\infty} (b_+ x-k)^2 |\psi_j(x,k)|  |\psi_l^-(x,k)| {\rm d} x, $$
and hence
\bel{a14c}
| (\omega_j(k) - (2l-1) b_-) \alpha_{j,l}^-(k)  |
\leq \| (b_+ x - k)^{1 \slash 2} \psi_j(k) \|_{{\rm L}^2(\re_+)} \| (b_+ x - k)^{3 \slash 2} \psi_l^-(k) \|_{{\rm L}^2(\re_+)}.
\ee
Now, taking into account that
\bel{a19}
\psi_l^{\pm}(x,k)=\frac{1}{(2^l l!)^{1 \slash 2}}\left( \frac{b_{\pm}}{\pi} \right)^{1 \slash 4} {\rm e}^{-\frac{b_{\pm}}{2} \left( x-\frac{k}{b_{\pm}} \right)^2} H_l \left(b_{\pm}^{1 \slash 2} \left( x - \frac{k}{b_{\pm}} \right) \right),
\ee
where $H_l$ denotes the $l^{\rm th}$ Hermite polynomial, and that $b_+ x-k \leq r (b_- x - k)$ for all $(x,k) \in \re_+ \times \re_-$, we obtain through basic computations that
\bel{a14d}
\| (b_+ x - k)^{3 \slash 2} \psi_l^-(k) \|_{{\rm L}^2(\re_+)} \leq r^{3 \slash 2} \| (b_- x - k)^{3 \slash 2} \psi_l^-(k) \|_{{\rm L}^2(\re_+)}  \leq  \tilde{c}_l  r^{3 \slash 2}
b_-^{3 \slash 4},
\ee
where $\tilde{c}_l:= \frac{1}{(2^l l!)^{1 \slash 2}} \frac{1}{\pi^{1 \slash 4}} \left( \int_0^{+\infty} u^3 {\rm e}^{- u^2} H_l(u)^2 {\rm d} u \right)^{1 \slash 2}>0$ is a constant independent of $b_{\pm}$ and $k$.
Further, as $\| (b_+ x - k)^{1 \slash 2} \psi_j(k) \|_{{\rm L}^2(\re_+)} = \left( \frac{\omega_j'(k)}{2 ( r-1 )} \right)^{1 \slash 2}$ by \eqref{a12c}, we deduce from \eqref{a14c} and
\eqref{a14d} that
$$
\omega_j'(k) \geq 2 \tilde{c}_l^{-2} |\alpha_{j,l}^-(k) |^2 \left( \frac{r-1}{r^3} \right)  (\omega_j(k) - (2l-1) b_-)^2  b_-^{-3 \slash 2} ,\ l=1,2,\ldots,j.
$$
Now, by summing up the above estimate over $l=1,2,\ldots,j$, minorizing $\omega_j(k) -(2l-1)b_-$ by $\delta_j b_-$ for every $l$, and recalling \eqref{a14b}, we end up getting that
\bel{a14e}
\omega_j'(k) \geq c_j \delta_j^2 \left( \frac{r-1}{r^3} \right)  b_-^{1 \slash 2} ,\ k \in \omega_j^{-1}(\Delta_j) \cap \re_-,
\ee
with $c_j:= (\max_{1 \leq l \leq j} \tilde{c}_l)^{-2} \slash j>0$.

\noindent
{\it Case $k \geq 0$.}
Notice that $r^2 V(x,k) = \left( r k - b_+ x \right)^2 \geq V_+(x,k)$
for all $(x,k) \in \re_- \times \re_+$, so we have $r^2 h(k) \geq  h_+(k)$ in the operatorial sense, and consequently
$$ \sum_{l \geq 1} ( r^2 \omega_j(k) - (2l-1) b_+ ) | \alpha_{j,l}^+(k) |^2  \\
= \langle ( r^2 h(k) - h_+(k) ) \psi_j(k) , \psi_j(k) \rangle \geq 0, $$
where $\alpha_{j,l}^+(k):= \langle \psi_j(k), \psi_l^+(k) \rangle$. This yields
\bel{a15}
\sum_{l \geq j+1} \left( (2l-1) b_+ - r^2 \omega_j(k) \right) | \alpha_{j,l}^+(k) |^2
\leq \sum_{l=1}^j ( r^2 \omega_j(k) - (2l-1) b_+ ) | \alpha_{j,l}^+(k) |^2.
\ee
Further, as
$r^2 \omega_j(k) < \frac{2n+1}{2n-1} (2j-1-\delta_j) b_+ \leq \frac{2j+1}{2j-1} (2j-1-\delta_j) b_+$ for every $k \in \omega_j^{-1}(\re)$, we obtain simultaneously that
$r^2 \omega_j(k) - (2l-1) b_+ \leq 2j b_+$ for every $l \in {\mathbb N}_j^*$, and that
$(2l-1) b_+ - r^2 \omega_j(k) \geq \frac{2j+1}{2j-1} \delta_j b_+$ for all $l \geq j+1$.
From this and \eqref{a15} then follows that
$$
\frac{2j+1}{2j-1} \delta_j \sum_{l= j+1}^{\infty} | \alpha_{j,l}^+(k) |^2 \leq 2 j \sum_{l=1}^j | \alpha_{j,l}^+(k) |^2,
$$
which, together with the normalization condition $\sum_{l= j+1}^{\infty} | \alpha_{j,l}^+(k) |^2=1- \sum_{l=1}^j | \alpha_{j,l}^+(k) |^2$, and the inequality $0<\delta_j<2j-1$, arising from \eqref{a13}, entails
\bel{a16}
\sum_{l=1}^j | \alpha_{j,l}^+(k) |^2 \geq \frac{\delta_j}{2(2j-1)},\ k \in \omega_j^{-1}(\Delta_j) \cap \re_+.
\ee
Now, using that
\beas
((2l-1) b_+ - \omega_j(k)) \alpha_{j,l}^+(k)
& = & \langle (h_+(k) - h(k)) \psi_j(k), \psi_l^+(k) \rangle  \\
& = &  \int_{-\infty}^0 \left( V_+(x,k)- V(x,k) \right) \psi_j(x,k) \psi_l^+(x,k) {\rm d} x,
\eeas
for all $l \in {\mathbb N}^*$, and bearing in mind that $V_+(x,k) \geq V_-(x,k) \geq 0$ for every $(x,k) \in \re_- \times \re_+$,
we get that
$$|(2l-1) b_+ - \omega_j(k) | | \alpha_{j,l}^+(k) |
\leq  \int_{-\infty}^0 V_+(x,k) | \psi_j(x,k) | | \psi_l^+(x,k) | {\rm d} x.
$$
This, together with the elementary estimate $0 \leq k-b_+ x \leq  r(k-b_-x)$, which holds true for every $(x,k) \in \re_- \times \re_+$, shows that each $| (2l-1) b_+ - \omega_j(k) | | \alpha_{j,l}^+(k) |$, $l \in {\mathbb N}_j^*$, is majorized by the scalar product
$r^{1 \slash 2} \int_{-\infty}^0 (k-b_-x)^{1 \slash 2} | \psi_j(x,k) | (k-b_+ x)^{3 \slash 2} | \psi_l(x,k) | {\rm d} x$. Therefore, we have
\bea
& & | (2l-1) b_+ - \omega_j(k) | | \alpha_{j,l}^+(k) | \nonumber \\
& \leq & r^{1 \slash 2} \| (k-b_-x)^{1 \slash 2} \psi_j(k) \|_{{\rm L}^2(\re_-)}
\| (k-b_+ x)^{3 \slash 2} \psi_l^+(k) \|_{{\rm L}^2(\re_-)},\ l \in {\mathbb N}_j^*, \label{a18}
\eea
by applying the Cauchy-Schwarz inequality.
Next, using standard computations, we derive from the explicit expression \eqref{a19} of $\psi_l^+(k)$, that
\bel{a20}
\| (k-b_+ x)^{3 \slash 2} \psi_l^+(k) \|_{{\rm L}^2(\re_-)} = \tilde{c}_l b_+^{3 \slash 4},\ l \in {\mathbb N}_j^*,
\ee
where $\tilde{c}_l := \frac{1}{(2^l l!)^{1 \slash 2}} \frac{1}{\pi^{1 \slash 4}} \left( \int_0^{+\infty} u^3 {\rm e}^{-u^2} H_l^2(-u) {\rm d}u \right)^{1 \slash 2}>0$ is a constant depending only on $l$. Furthermore, we have
$\omega_j'(k) =2  \left( \frac{r-1}{r} \right) \| (k-b_-x)^{1 \slash 2} \psi_j(k) \|_{{\rm L}^2(\re_-)}^2$ by \eqref{a12b}, so we obtain
\bel{a21}
\omega_j'(k) \geq  2 \tilde{c}_l ^{-2}  | \alpha_{j,l}^+(k)|^2 \left( \frac{r-1}{r^2} \right) ( (2l-1) b_+ - \omega_j(k)  )^2  b_+^{-3\slash 2},\ l \in {\mathbb N}_j^*,
\ee
directly from \eqref{a18}-\eqref{a20}.
Actually, it holds true that
\bel{a21b}
| (2l-1) b_+ - \omega_j(k) | > \delta_j b_+,\ l \in {\mathbb N}_j^*,\ k \in \omega_j^{-1}(\Delta_j) \cap \re_+.
\ee
Indeed, \eqref{a21b} follows immediately from the inequality $\omega_j(k) < (2j-1-\delta_j) b_+$ for $l=j$, and from the two estimates $\omega_j(k) > (2j-1+\delta_j) b_-$ and
$$(2l-1)b_+ \leq (2j-3) b_+ \leq (2j-3) \left( \frac{2n+1}{2n-1} \right)^{1 \slash 2} b_- \leq (2j-3) \left( \frac{2j-1}{2j-3} \right)^{1 \slash 2} b_- < (2j-1) b_-,$$
when $l \in {\mathbb N}_{j-1}^*$ and $n \geq j \geq 2$.
From \eqref{a21}-\eqref{a21b} then follows that
\bel{a22}
\omega_j'(k) \geq 2 \tilde{c}_l^{-2} | \alpha_{j,l}^+(k) |^2  \delta_j^2 \left( \frac{r-1}{r^{3 \slash 2}} \right)  b_-^{1 \slash 2} \ l \in {\mathbb N}_j^*,\ k \in \omega_j^{-1}(\Delta_j) \cap \re_+.
\ee
Finally, by summing up \eqref{a22} over $l=1,2,\ldots,j$, and putting the result together with \eqref{a16}, we end up getting that
$$
\omega_j'(k) \geq c_j \delta_j^3 \left( \frac{r-1}{r^{3 \slash 2}} \right) b_-^{1 \slash 2},\ k \in \omega_j^{-1}(\Delta_j) \cap \re_+,
$$
where $c_j:=(\max_{1 \leq l \leq j} \tilde{c}_l)^{-2} \slash (2j-1)>0$.
Now \eqref{a14} follows immediately from this and from \eqref{a14e}.
\end{proof}


\section{Existence and localization of edge currents}\label{sec:edge-currents1}
\setcounter{equation}{0}

In light of \cite{CHS,HS1,HS2}, we define the current carried by a state $\varphi$ as the expectation of the $y$-component of the velocity operator $v_y:= (i/2)[H,y] = p_y-\beta(x)$ in the state $\varphi$, i.e. $J_y(\varphi):= \langle v_y \varphi , \varphi \rangle $.
In this section, we prove the existence of states carrying an edge current and its localization near $x=0$.

\subsection{Edge states carrying a current}

We expect that a state with energy localized in intervals away from the Landau levels for $b_\pm$ will carry a current. We prove this by establishing a lower bound on the matrix element $\langle v_y \varphi , \varphi \rangle$.

\begin{theorem} 
\label{lm-a3}
Let $b_-$, $r$, $n$, $j$, $\delta_j$ and $\Delta_j$ be as in Proposition \ref{prop-a2}, and let $\varphi \in {\rm L}^2(\re^2)$ satisfy $\varphi=\mathbb{P}(\Delta_j) \varphi$, where $\mathbb{P}(I)$ denotes the spectral projection of $H$ for the Borel set $I \subset \re$. We have the following estimate for $j = 1 , \ldots, n$,
\beq\label{eq:lower-bound1}
J_y(\varphi)  \geq  c_j \delta_j^3 \left( \frac{r-1}{r^3} \right)  b_-^{1 \slash 2} \| \varphi \|^2
 = c_j \delta_j^3 \left( \frac{b_+ - b_-}{b_-} \right) \left( \frac{b_-}{b_+} \right)^3 b_-^{1 \slash 2} \| \varphi \|^2 ,
\eeq
where $c_j$ is the constant introduced in \eqref{a14}.
\end{theorem}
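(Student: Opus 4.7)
The plan is to pass to the fiber decomposition \eqref{a4}, where the velocity operator $v_y = p_y - \beta(x)$ becomes multiplication by $k$ plus multiplication by $-\beta(x)$ on each fiber $L^2(\R_x)$, since $\mathcal{F} p_y \mathcal{F}^* = k$. Concretely, since $\mathcal{F}$ is unitary and $v_y$ commutes with translations in $y$, we have
\[
J_y(\varphi) = \int_\R \langle (k-\beta(x))\, \hat\varphi(\cdot,k), \hat\varphi(\cdot,k) \rangle \, dk .
\]
First I would expand $\hat\varphi(\cdot,k)$ in the complete orthonormal basis $\{\psi_l(\cdot,k)\}_{l\ge 1}$ of eigenfunctions of $h(k)$, writing $\hat\varphi(x,k) = \sum_l c_l(k) \psi_l(x,k)$ with $c_l(k) = \langle \hat\varphi(\cdot,k), \psi_l(\cdot,k)\rangle$.

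Next, I would use the hypothesis $\varphi = \mathbb{P}(\Delta_j)\varphi$ combined with the disjointness statement (part 1) of Proposition \ref{prop-a2}. Under the fiber decomposition the spectral projector $\mathbb{P}(\Delta_j)$ acts on each fiber as the projection onto those $\psi_l(\cdot,k)$ with $\omega_l(k) \in \Delta_j$. Since $\omega_l^{-1}(\Delta_j) = \emptyset$ for $l \neq j$, only the $l=j$ term survives, so $\hat\varphi(x,k) = c_j(k)\,\psi_j(x,k) \, \mathbf{1}_{\omega_j^{-1}(\Delta_j)}(k)$. Moreover, by Plancherel and the normalization of $\psi_j(\cdot,k)$,
\[
\|\varphi\|^2 = \int_{\omega_j^{-1}(\Delta_j)} |c_j(k)|^2 \, dk .
\]

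Then I would invoke the Feynman–Hellmann identity \eqref{eq:f-h1}, which gives precisely
\[
\langle (k-\beta(x))\,\psi_j(\cdot,k),\psi_j(\cdot,k)\rangle = \tfrac{1}{2}\,\omega_j'(k).
\]
Substituting back into the fiber expression for $J_y(\varphi)$, the cross terms vanish by orthogonality of $\{\psi_l(\cdot,k)\}_l$ (the multiplication operator $k-\beta(x)$ leaves the $j$th eigenspace invariant only in expectation, not pointwise, but since $\hat\varphi$ is supported only on the $j$th band the cross terms $\langle (k-\beta)\psi_l,\psi_{l'}\rangle$ with $l \neq l'$ simply do not appear in the sum). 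This yields
\[
J_y(\varphi) = \tfrac{1}{2} \int_{\omega_j^{-1}(\Delta_j)} |c_j(k)|^2 \, \omega_j'(k) \, dk .
\]

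Finally, I would apply the quantitative lower bound \eqref{a14} from part 2 of Proposition \ref{prop-a2}, which gives $\omega_j'(k) \ge c_j \delta_j^3 \frac{r-1}{r^3} b_-^{1/2}$ uniformly for $k \in \omega_j^{-1}(\Delta_j)$, and pull this constant out of the integral. Combined with the Plancherel identity above, this produces the claimed inequality (with $c_j$ absorbing the factor $1/2$). Since all heavy lifting is done by Proposition \ref{prop-a2}, no step is really a serious obstacle; the only thing to be slightly careful about is justifying that $\mathbb{P}(\Delta_j)$ acts fiberwise as the spectral projection of $h(k)$ onto $\omega_l(k)\in\Delta_j$, which is standard for direct integral decompositions since each $\omega_j$ is continuous.
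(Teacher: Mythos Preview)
Your proposal is correct and follows essentially the same route as the paper's own proof: pass to the fiber decomposition, use the disjointness statement of Proposition \ref{prop-a2} to reduce $\hat\varphi(\cdot,k)$ to a single term $c_j(k)\psi_j(\cdot,k)$ supported on $\omega_j^{-1}(\Delta_j)$, apply the Feynman--Hellmann identity \eqref{eq:f-h1} to obtain $J_y(\varphi)=\tfrac{1}{2}\int_{\omega_j^{-1}(\Delta_j)}|c_j(k)|^2\omega_j'(k)\,dk$, and then invoke the lower bound \eqref{a14}. Your remark about the factor $1/2$ being absorbed into the generic constant $c_j$ is accurate and matches what the paper tacitly does.
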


\begin{proof}
The proof depends on the identity
$$
J_y(\varphi)=\int_{\re^2} \hat{v}_y(k) | \hat{\varphi}(x,k) |^2 {\rm d} x\ {\rm d} k  ,
$$
with $\hat{v}_y(k):=k- \beta(x)$. Since the state $\varphi$ satisfies $\varphi = \mathbb{P}(\Delta_j) \varphi$,
its partial Fourier transform may be written as
\bel{a22b}
\hat{\varphi}(x,k):=(\mathcal{F} \varphi)(x,k) = \chi_{\omega_j^{-1}(\Delta_j)}(k) \beta_j(k) \psi_j(x,k),
\ee
where $\chi_I$ denotes the characteristic function of $I \subset \re$ and $\beta_j(k):=\langle \hat{\varphi}(k) , \psi_j(k) \rangle_{{\rm L}^2(\re)}$. This yields that
\beq\label{eq:current99}
J_y(\varphi) = (1 \slash 2) \int_{\omega_j^{-1}(\Delta_j)} \omega_j'(k) | \beta_j(k) |^2 {\rm d} k ,
\eeq
so the result follows immediately from Proposition \ref{prop-a2}.
\end{proof}

\begin{remark}
In the case $b_- = b_+$, we have $\Delta_j = \emptyset$. By \eqref{eq:current99} this implies that $J_y(\varphi) = 0$ so there is no edge current. This is consistent with the fact that
the Landau Hamiltonian has only pure point spectrum with localized eigenfunction.
\end{remark}

\begin{remark}
The dependance of the lower bound on $b_-$ is optimal in the following sense. Recall from Proposition \ref{prop-a2} that $\Delta_j
= ((2j-1+\delta_j)b_-, (2j-1- \delta_j)b_+)$. For $\varphi=\mathbb{P}(\Delta_j) \varphi$,
we have $\langle H \varphi, \varphi \rangle \in \Delta_j$ implying that $\| (p_y - \beta (x)) \varphi \| \leq C_1 b_+^{1/2}$,
for a constant $C_1 > 0$ independent of $b_\pm$. Hence, there exists a constant $C_2 > 0$, independent of $b_\pm$,
so that we have the upper bound on the current
\beq\label{eq:upper-bound1}
| \langle \varphi, v_y \varphi \rangle | \leq C_2 b_+^{1/2}.
\eeq
\end{remark}

\begin{remark}
The parameters $b_\pm$, $n$ and $j$ being the same as in Theorem \ref{lm-a3}, an estimate of the kind of \eqref{eq:lower-bound1}, i.e.
$$ \exists C_j>0,\ J(\varphi) \geq C_j b_-^{1 \slash 2} \| \varphi \|^2,\ \varphi ={\mathbb P}(\Delta_j) \varphi, $$
is no longer valid when the open interval $\Delta_j$ contains either $(2j-1)b_-$ or $(2j-1)b_+$. This can be seen from Lemma \ref{lm-a1}, \eqref{eq:bflimit1} and the analyticity of
$k \mapsto \omega_j(k)$, entailing the existence of $k_j \in \R$ and $\kappa>0$ such that
$$\omega_j(k) \in \Delta_j\ \mbox{and}\ \omega_j'(k) \in (0,C_j b_-^{1 \slash 2}),\ k \in I_j:=(k_j-\kappa,k_j+\kappa), $$
so the state $\varphi_0(x,y)=(2 \kappa)^{-1 \slash 2} \int_{\R \times I_j} e^{iky} \psi_j(x,k) dx dk \in \mathbb{P}(\Delta_j) L^2(\R^2)$ verifies
$$ J(\varphi_0) = \frac{1}{2 \kappa}\int_{I_j} \omega_j'(k) dk < C_j b_-^{1 \slash 2} \| \varphi_0 \|^2. $$
\end{remark}

\subsection{Localization of the edge currents}

Edge currents correspond to the trajectories of a classical charged particle moving under the influence of the magnetic field $b(x)$. The classical cyclotron radius is $b^{-1/2}$ so a particle starting at $x=0$ and with a positive velocity will move in the $x>0$ half plane in a circular orbit with radius $b_+^{-1/2}$. When it reaches $x=0$, the radius of the orbit changes to $b_-^{-1/2}$. Since $b_+ > b_-$, there is a net flow in the negative $y$ direction in a spiral orbit. The classical particle is constrained to a strip of width $\sim 2 b_-^{-1/2}$ about $x=0$. We prove that the quantum edge currents described in Theorem \ref{lm-a3} are likewise constrained to a small strip about $x=0$.

\begin{theorem} 
\label{lm-a4}
Let $b_-$, $b_+$, $n$, $j$ and $\Delta_j$ be as in Proposition \ref{prop-a2}. Then for all $\varepsilon_1>0$ and $\varepsilon_2>0$ there
exists $b_j(\varepsilon_1,\varepsilon_2)>0$ such that any ${\rm L}^2(\re^2)$-normalized function $\varphi = {\mathbb P}(\Delta_j) \varphi$ satisfies
$$ \int_{\re^2} \chi_{I_{\varepsilon_1,\varepsilon_2}}(x) | \varphi(x,y) |^2 {\rm d} x {\rm d} y
\geq  1 - \eta_j {\rm e}^{-\varepsilon_1^2 b_-^{2 \varepsilon_2} \slash 8}, $$
provided $b_- \geq b_j(\varepsilon_1,\varepsilon_2)$. Here $\chi_{I_{\varepsilon_1,\varepsilon_2}}$ denotes the characteristic function of the interval
$I_{\varepsilon_1,\varepsilon_2}:= [-(1+\varepsilon_1) b_-^{-1 \slash 2+\varepsilon_2},(1+\varepsilon_1) b_+^{-1 \slash 2 + \varepsilon_2}]$ and $\eta_j:=2 (\pi (2j-1))^{1 \slash 2}$.
\end{theorem}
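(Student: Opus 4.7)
The plan is to first perform the standard partial Fourier reduction, turning the question into a pointwise-in-$k$ localization estimate on the band eigenfunctions $\psi_j(\cdot,k)$, and then deduce that estimate from an Agmon-type exponential decay argument on each of the half-lines $\re_\pm$ separately.

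For the reduction, exactly as in the proof of Theorem~\ref{lm-a3}, the identity $\hat\varphi(x,k) = \chi_{\omega_j^{-1}(\Delta_j)}(k)\, \beta_j(k)\, \psi_j(x,k)$ holds with $\int_{\omega_j^{-1}(\Delta_j)} |\beta_j(k)|^2\, dk = \|\varphi\|^2 = 1$. By Plancherel,
\begin{equation*}
\int_{\re^2} (1 - \chi_{I_{\varepsilon_1,\varepsilon_2}}(x)) |\varphi(x,y)|^2\, dx\, dy = \int_{\omega_j^{-1}(\Delta_j)} |\beta_j(k)|^2 \left(\int_{\re \setminus I_{\varepsilon_1,\varepsilon_2}} |\psi_j(x,k)|^2\, dx\right) dk,
\end{equation*}
so it suffices to establish $\int_{\re \setminus I_{\varepsilon_1,\varepsilon_2}} |\psi_j(x,k)|^2\, dx \leq \eta_j \exp(-\varepsilon_1^2 b_-^{2\varepsilon_2}/8)$ uniformly for $k \in \omega_j^{-1}(\Delta_j)$.

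For the pointwise bound I would work separately on $\re_+$ and $\re_-$, using that on each half-line $h(k)$ literally coincides with the Landau Hamiltonian $h_\pm(k)= p_x^2 + (k - b_\pm x)^2$, whose classical turning points are $x_\pm^{\mathrm{tp}}(k) := (k \pm \sqrt{\omega_j(k)})/b_\pm$. Introduce the Agmon weight $\phi(x) = \tfrac{\alpha_\pm}{2}\bigl[(|x|-R_\pm^0)_+\bigr]^2$, with $R_\pm^0$ chosen to lie between $|x_\pm^{\mathrm{tp}}(k)|$ and $(1+\varepsilon_1)b_\pm^{-1/2+\varepsilon_2}$ and $\alpha_\pm$ equal to a small fixed fraction of $b_\pm$ (e.g.\ $b_\pm/4$), set $u = e^\phi \psi_j(\cdot,k)$, and use $(h(k)-\omega_j(k))\psi_j = 0$ together with two integrations by parts to obtain the identity
\begin{equation*}
\int u'(x)^2\, dx + \int\bigl[V(x,k) - \omega_j(k) - (\phi'(x))^2\bigr] u(x)^2\, dx = 0.
\end{equation*}
With the parameters above, the bracket is nonnegative on $\{|x|>R_\pm^0\}$ and bounded below by $-\omega_j(k)\geq -(2j-1)b_+$ on the classical region (where $u=\psi_j$), yielding a weighted $L^2$ bound $\int e^{2\phi(x)}|\psi_j(x,k)|^2\, dx \leq C_j$. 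Extracting at $|x|\geq(1+\varepsilon_1)b_\pm^{-1/2+\varepsilon_2}$ via the Chebyshev trick together with an explicit Gaussian-tail integration $\int_{u_0}^\infty e^{-t^2} dt \leq (\sqrt\pi/2)\, e^{-u_0^2}$ then produces the claimed decay and, through the eigenvalue upper bound, the prefactor $\eta_j = 2\sqrt{\pi(2j-1)}$.

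The principal obstacle is showing, for $b_-\geq b_j(\varepsilon_1,\varepsilon_2)$, that the classical turning points $x_\pm^{\mathrm{tp}}(k)$ actually lie well inside $I_{\varepsilon_1,\varepsilon_2}$ for \emph{every} $k\in \omega_j^{-1}(\Delta_j)$. The $\sqrt{\omega_j(k)}/b_\pm$ contribution is easy, since $\omega_j(k)\leq (2j-1)b_+$ yields $O(b_\pm^{-1/2})$, much smaller than $b_\pm^{-1/2+\varepsilon_2}$ for large $b_-$; however, the $|k|/b_\pm$ contribution requires a quantitative a priori bound on $|k|$ for $k\in \omega_j^{-1}(\Delta_j)$. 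I would derive this by quantifying the limits \eqref{eq:bflimit1} through the Gaussian tails of the Landau eigenfunctions $\psi_j^\pm(\cdot,k)$ in \eqref{a19} penetrating into the opposite half-line, which yields an estimate of the form $(2j-1)b_\pm - \omega_j(k) \leq C_j\, b_+\, e^{-c_j k^2/b_+}$ as $\pm k \to \infty$; hence $k\in\omega_j^{-1}(\Delta_j)$ forces $|k| = O(\sqrt{b_+ \log(1/\delta_j)})$, which is dominated by $b_\pm^{1/2+\varepsilon_2}$ once $b_-$ is large enough in terms of $\varepsilon_1,\varepsilon_2,j,\delta_j$, and this determines the threshold $b_j(\varepsilon_1,\varepsilon_2)$.
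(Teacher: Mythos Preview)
Your strategy is sound and parallels the paper's proof in its three-part structure: (i) reduce via Fourier to a uniform tail bound on $\psi_j(\cdot,k)$, (ii) show $\omega_j^{-1}(\Delta_j)$ is contained in a window of order $b_\pm^{1/2+\varepsilon_2}$, (iii) exploit exponential decay of $\psi_j$ past the turning points. The paper executes exactly this outline, but with different tools in steps (ii) and (iii).

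For (iii), the paper does not use an Agmon weighted-$L^2$ estimate. Instead it runs a one-dimensional Riccati/WKB argument \`a la Iwatsuka: on $\pm x > \pm x_j^\pm(\varepsilon_2)$ one has $Q_j := V-\omega_j > 0$ and $Q_j' > 0$, so differentiating $(\psi_j')^2 - Q_j\psi_j^2$ shows it is monotone and vanishes at infinity, whence $|\psi_j'/\psi_j| \geq Q_j^{1/2}$ pointwise. Integrating this Riccati inequality gives the \emph{pointwise} Gaussian bound $|\psi_j(x,k)| \leq \|\psi_j\|_\infty\, e^{-b_\pm(x-x_j^\pm)^2/2}$, and the sup norm is controlled by $\psi_j(x)^2 \leq 2\|\psi_j'\| \leq 2\omega_j(k)^{1/2} \leq 2((2j-1)b_+)^{1/2}$. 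This is how the precise constant $\eta_j = 2\sqrt{\pi(2j-1)}$ emerges. Your Agmon identity with $\alpha_\pm = b_\pm/4$ gives only a weighted $L^2$ bound with exponent $e^{(b_\pm/4)(x-R^0)^2}$ rather than $e^{b_\pm(x-R^0)^2}$; after Chebyshev extraction the decay rate in the exponent is weaker by a factor of $4$ and the prefactor is not $2\sqrt{\pi(2j-1)}$ but something involving $1/(b_\pm\gamma^2)$. So your route proves the theorem with \emph{some} $\eta_j$ and a smaller constant in the exponent, but not the stated ones; the claim that ``the prefactor $\eta_j$'' drops out is too optimistic.

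For (ii), the paper localizes $k$ by a quasi-mode argument: at $k_{\varepsilon_2} = -b_-^{1/2+\varepsilon_2}$ one builds the trial state $\chi\psi_j^-(k_{\varepsilon_2})$ and shows directly that $\mathrm{dist}(\sigma(h(k_{\varepsilon_2})),(2j-1)b_-) < \delta_j b_-$, hence $\omega_j(k_{\varepsilon_2}) < \inf\Delta_j$, and then monotonicity (Lemma~\ref{lm-a1}) finishes. Your proposal to quantify $|\omega_j(k) - (2j-1)b_\pm|$ through the Gaussian penetration of $\psi_j^\pm$ into the opposite half-line is essentially the same computation in disguise and works equally well; it actually yields the slightly sharper conclusion $|k| = O(b_\pm^{1/2})$ rather than $b_\pm^{1/2+\varepsilon_2}$, though only the latter is needed.
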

\begin{proof}
The proof consists of three steps.\\
{\it First step.} We first show that
\bel{a25}
\forall \varepsilon_2 >0, \exists b(\varepsilon_2)>0,\ b_- \geq b(\varepsilon_2) \Longrightarrow  \omega_j^{-1}(\Delta_j) \subset (-b_-^{1 \slash 2 + \varepsilon_2}, b_+^{1 \slash 2 + \varepsilon_2}).
\ee
We shall actually only prove that $\omega_j^{-1}(\Delta_j) \subset (-b_-^{1 \slash 2 + \varepsilon_2},+\infty)$, the remaining part of the proof being obtained in a similar way.
To do that we introduce $\chi \in {\rm C}^2(\re;[0,1])$ satisfying
$$ \chi(x) = \left\{ \begin{array}{cl} 0 & {\rm for}\ x \leq -b_-^{-1 \slash 2 + 2 \varepsilon_2} \\ 1 & {\rm for}\ x \geq -b_-^{-1 \slash 2 + 2 \varepsilon_2} \slash 2, \end{array} \right. $$
and $\| \chi' \| \leq c b_-^{1 \slash 2}$, $\| \chi'' \| \leq c b_-$, for some constant $c>0$ independent of $b_{\pm}$. Then we set $k=k_{\varepsilon_2}:=-b_-^{1 \slash 2 + \varepsilon_2}$ and deduce from \eqref{a19} that
\beas
\| \chi \psi_j^-(k_{\varepsilon_2}) \|^2 & \geq & C_j b_-^{1 \slash 2} \int_{-b_-^{-1 \slash 2 + 2 \varepsilon_2} \slash 2}^{+\infty} {\rm e}^{-b_-(x-k_{\varepsilon_2} \slash b_-)^2} H_j(b_-^{1 \slash 2}(x- k_{\varepsilon_2} \slash b_-))^2 {\rm d} x  \\
& \geq & C_j \int_{-b_-^{2 \varepsilon_2} \slash 2 + b_-^{\varepsilon_2}}^{+\infty} {\rm e}^{-u^2} H_j(u)^2 {\rm d} u,
\eeas
for some constant $C_j>0$ depending only on $j$. Taking $b_-$ so large that  $-b_-^{2 \varepsilon_2} \slash 2 + b_-^{\varepsilon_2} \leq 0$, we thus find that
\bel{a26}
\| \chi \psi_j(k_{\varepsilon_2}) \|^2  \geq C_j \int_{0}^{+\infty} {\rm e}^{-u^2} H_j(u)^2 {\rm d} u >0.
\ee
Further we notice that
\beas
(h(k_{\varepsilon_2})-(2j-1) b_-) \chi \psi_j^-(k_{\varepsilon_2}) & = & [ h(k_{\varepsilon_2}), \chi ] \psi_j^-(k_{\varepsilon_2}) - \chi (h(k_{\varepsilon_2})-h_-(k_{\varepsilon_2}))\psi_j^-(k_{\varepsilon_2}) \\
& = & -2 {\rm i} \chi' (\psi_j^-(k_{\varepsilon_2}))' - \chi'' \psi_j^-(k_{\varepsilon_2}) - \chi (V(k_{\varepsilon_2})-V_-(k_{\varepsilon_2})) \psi_j^-(k_{\varepsilon_2}).
\eeas
As $\chi(x) (V(x,k_{\varepsilon_2})-V_-(x,k_{\varepsilon_2})) = (V_+(x,k_{\varepsilon_2})-V_-(x,k_{\varepsilon_2})) \chi_{\re_+}(x)$, this implies that
\bel{a27}
(h(k_{\varepsilon_2})-(2j-1) b_-) \chi \psi_j^-(k_{\varepsilon_2}) = -2 {\rm i} \chi' (\psi_j^-(k_{\varepsilon_2}))' - \chi'' \psi_j^-(k_{\varepsilon_2}) - (V_+(k_{\varepsilon_2})-V_-(k_{\varepsilon_2})) \chi_{\re_+} \psi_j^-(k_{\varepsilon_2}).
\ee
Actually, due to \eqref{a19} we find that
\beas
\| \chi'' \psi_j^-(k_{\varepsilon_2}) \|^2 & \leq & C_j'' b_-^{5 \slash 2}  \int_{-b_-^{-1 \slash 2 + 2 \varepsilon_2}}^{-b_-^{-1 \slash 2 + 2 \varepsilon_2} \slash 2} {\rm e}^{-b_-(x-k_{\varepsilon_2} \slash b_-)^2} H_j(b_-^{1 \slash 2}(x- k_{\varepsilon_2} \slash b_-))^2 {\rm d} x \\
& \leq & C_j'' b_-^2 \int_{-b_-^{2 \varepsilon_2} + b_-^{\varepsilon_2}}^{-b_-^{2 \varepsilon_2} \slash 2 + b_-^{\varepsilon_2}} {\rm e}^{-u^2} H_j(u)^2 {\rm d} u \\
& \leq & C_j'' b_-^2 \int_{-\infty}^{-b_-^{\varepsilon_2}} {\rm e}^{-u^2} H_j(u)^2 {\rm d} u,
\eeas
provided $b_-$ is taken so large that $b_-^{\varepsilon_2} \geq 4$, the constant $C_j''>0$ depending only on $j$.
Hence
\bel{a28}
\| \chi'' \psi_j^-(k_{\varepsilon_2}) \|^2 \leq C_j'' b_-^2 {\rm e}^{-b_-^{2 \varepsilon_2} \slash 2} \int_{-\infty}^{-b_-^{ \varepsilon_2}} {\rm e}^{-u^2 \slash 2} H_j(u)^2 {\rm d} u.
\ee
By reasoning in the same way with $\chi' (\psi_j^-(k_{\varepsilon_2}))'$, we obtain that
\bel{a29}
\| \chi' (\psi_j^-(k_{\varepsilon_2}))' \|^2 \leq C_j' b_- {\rm e}^{-b_-^{2 \varepsilon_2} \slash 2} \int_{-\infty}^{-b_-^{ \varepsilon_2}} {\rm e}^{-u^2 \slash 2} (H_j'(u)-uH_j(u))^2 {\rm d} u,
\ee
where the constant $C_j'>0$ depends only on $j$. Finally, as
$$ \| (V_+(k_{\varepsilon_2})-V_-(k_{\varepsilon_2})) \psi_j^-(k_{\varepsilon_2}) \|_{{\rm L}^2 (\re_+)}^2 \leq  \int_0^{+\infty}
V_+(x,k_{\varepsilon_2})^2 \psi_j^-(x,k_{\varepsilon_2})^2 {\rm d} x, $$
and $V_+(x,k_{\varepsilon_2}) \leq b_+^2 (x-k_{\varepsilon_2} \slash b_-)^2$ for all $x \geq 0$, we deduce from \eqref{a19}
that
\beas
& & \| (V_+(k_{\varepsilon_2})-V_-(k_{\varepsilon_2})) \psi_j^-(k_{\varepsilon_2}) \|_{{\rm L}^2 (\re_+)}^2 \\
& \leq & c_j b_+^4  \int_0^{+\infty} b_-^{1 \slash 2} (x - k_{\varepsilon_2} \slash b_-)^4 {\rm e}^{-b_-(x-k_{\varepsilon_2} \slash b_-)^2} H_j(b_-^{1 \slash 2} (x-k_{\varepsilon_2} \slash b_-))^2 {\rm d} x \\
& \leq & c_j (b_+^2 \slash b_-)^2 \int_{b_-^{\varepsilon_2}}^{+\infty}  u^4 {\rm e}^{-u^2} H_j(u)^2 {\rm d} u \\
& \leq & c_j (b_+^2 \slash b_-)^2 {\rm e}^{-b_-^{2 \varepsilon_2} \slash 2} \int_{0}^{+\infty}  u^4 {\rm e}^{-u^2 \slash 2} H_j(u)^2 {\rm d} u,
\eeas
the constant $c_j$ depending only on $j$.
From this and \eqref{a27}-\eqref{a29} then follows that
$$ \| h(k_{\varepsilon_2})-(2j-1) b_- \| \leq C_{n,j} b_- {\rm e}^{-b_-^{2 \varepsilon_2} \slash 4}, $$
provided $b_-$ is large enough, where $C_{n,j}>0$ depends only on $n$ and $j$. In light of \eqref{a26} this entails that
${\rm dist}(\sigma(h(k_{\varepsilon_2})),(2j-1) b_-)$ can be made smaller than $\delta_j b_-$ upon choosing $b_-$ sufficiently large. Bearing in mind that
$$
\omega_l(k_{\varepsilon_2}) \leq \omega_{j-1}(k_{\varepsilon_2}) < (2j-3)b_+  \leq  (2j-3) \left( \frac{2n+1}{2n-1} \right)^{1 \slash 2} b_-,$$
for all $l \leq j-1$ (when $n \geq j \geq 2$) so that
$$ (2j-1) b_- - \omega_l(k_{\varepsilon_2}) > (2j-1) b_- - (2j-3) \left( \frac{2j-1}{2j-3} \right)^{1 \slash 2} b_- > b_- > 2 \delta_j b_-,\ l \leq j-1, $$
and that
$$ \omega_l(k_{\varepsilon_2}) - (2j-1) b_- \geq \omega_{j+1}(k_{\varepsilon_2}) - (2j-1) b_- > 2 b_- > 4 \delta_j b_-,\ l \geq j+1, $$
we necessarily have $0 < \omega_j(k_{\varepsilon_2}) -(2j-1)b_- < \delta_j b_-$. This yields $\omega_j(k_{\varepsilon_2}) < \inf \Delta_j$, and hence $\omega_j(k) < \inf \Delta_j$ for all $k \leq k_{\varepsilon_2}$ according to Lemma \ref{lm-a1}, so that
$\omega_j^{-1}(\Delta_j) \subset (k_{\varepsilon_2},+\infty)$. \\

\noindent
{\it Second step.} Choose $b_- \geq b(\varepsilon_2)$ so that \eqref{a25} holds true. We will prove that an eigenfunction $\psi_j$
decays in the regions $\pm x \geq  \pm x_j^\pm (\epsilon_2)$. In particular, we will prove
\bel{a30}
|\psi_j(x,k)| \leq 2^{1 \slash 2} (2j-1)^{1 \slash 4} b_+^{1 \slash 4} {\rm e}^{-b_{\pm}(x-x_j^{\pm}(\varepsilon_2))^2 \slash 2},\ \pm x \geq \pm x_j^{\pm}(\varepsilon_2),\ k \in \omega_j^{-1}(\Delta_j),
\ee
where
\bel{a31}
x_j^{\pm}(\varepsilon_2):= \pm \left( b_{\pm}^{-1 \slash 2 + \varepsilon_2}+ (2j-1)^{1 \slash 2} \frac{b_+^{1 \slash 2}}{b_{\pm}} \right).
\ee
To prove \eqref{a30}-\eqref{a31}, we use \eqref{a25} and check that for all $k \in \omega_j^{-1}(\Delta_j)$
and that for every $\pm x > \pm x_j^{\pm}(\varepsilon_2)$,
$$
Q_j(x,k):=V(x,k)-\omega_j(k) \geq b_{\pm}^2(x-x_j^{\pm}(\varepsilon_2))^2>0 .
$$
Using this positivity and integrating the differential equation $\psi_j'' = Q_j \psi_j$ over the regions $\pm x \geq \pm x_j^\pm (\epsilon_2)$,
we establish that $\psi_j' \psi_j$ has a fixed sign in each region. This implies that $\psi_j' / \psi_j = ( \psi_j' \psi_j) / \psi_j^2$ has the same sign in the same regions. Following Iwatsuka \cite[Lemma 3.5]{iwatsuka1}, since $\psi_j'' = Q_j \psi_j$, differentiating $(\psi_j')^2 - Q_j \psi_j^2$, one finds that it is negative since $Q_j' > 0$ in the regions. Since $(\psi_j')^2 - Q_j \psi_j^2$ vanishes at infinity, this means that it is positive from which we conclude that $(\psi_j')^2 \geq Q_j \psi_j^2$. Summarizing these arguments, we obtain
$$\psi_j'(x,k) \psi_j(x,k) < 0\ {\rm and}\ \frac{\psi_j'(x,k)}{\psi_j(x,k)} < - Q_j(x,k)^{1 \slash 2},\ x \geq x_j^+(\varepsilon_2), $$
and
$$\psi_j'(x,k) \psi_j(x,k) > 0\ {\rm and}\ \frac{\psi_j'(x,k)}{\psi_j(x,k)} > Q_j(x,k)^{1 \slash 2},\ x \leq x_j^-(\varepsilon_2). $$
Integrating the inequalities involving $Q_j$ over each region we obtain
$$| \psi_j(x,k) | \leq | \psi_j(x_j^+(\varepsilon_2),k) | {\rm e}^{-\int_{x_j^+(\varepsilon_2)}^x b_+ (t - x_j^{+}(\varepsilon_2)) {\rm d} t},\ x \geq x_j^+(\varepsilon_2),$$
and
$$| \psi_j(x,k) | \leq | \psi_j(x_j^-(\varepsilon_2),k) | {\rm e}^{-\int_x^{x_j^{-}(\varepsilon_2)} b_- (x_j^{-}(\varepsilon_2)-t) {\rm d} t},\ x \leq x_j^-(\varepsilon_2).$$
The result \eqref{a30} follows from this and the following estimate
$$ \psi_j(x,k)^2 \leq 2 \left( \int_{-\infty}^x \psi_j'(t,k)^2 {\rm d} t \right)^{1 \slash 2} \leq 2 \omega_j(k) < 2(2j-1) b_+,\ x \in \re.$$
\\

\noindent
{\it Third step.}
Choose $b_-$ so large that \eqref{a25} and
$\varepsilon_1 b_+^{\varepsilon_2} \geq 2(2j-1)^{1 \slash 2}$ hold simultaneously true. Notice that this last condition actually guarantees that we have
$(1+\varepsilon_1 \slash 2) b_+^{-1 \slash 2 +\varepsilon_2} \geq  x_j^+(\varepsilon_2) + (\varepsilon_1 \slash 2)) b_+^{-1 \slash 2 +\varepsilon_2}$.
This and \eqref{a30} yields for every $k \in \omega_j^{-1}(\Delta_j)$ that
\beas
\int_{(1+\varepsilon_1) b_+^{-1 \slash 2 +\varepsilon_2}}^{+\infty} \psi_j(x,k)^2 {\rm d} x & \leq &
2(2j-1)^{1 \slash 2} b_+^{1 \slash 2} \int_{x_j^+(\varepsilon_2) + (\varepsilon_1 \slash 2) b_+^{-1 \slash 2 + \varepsilon_2}}^{+\infty}
{\rm e}^{-b_+(x-x_j^+(\varepsilon_2))^2} {\rm d} x \\
& \leq & 2(2j-1)^{1 \slash 2} \int_{(\varepsilon_1 \slash 2) b_+^{\varepsilon_2}}^{+\infty} {\rm e}^{-u^2} {\rm d} u,
\eeas
hence
$$
\int_{(1+\varepsilon_1) b_+^{-1 \slash 2 +\varepsilon_2}}^{+\infty} \psi_j(x,k)^2 {\rm d} x
\leq (\pi(2j-1))^{1 \slash 2} {\rm e}^{-\varepsilon_1^2 b_+^{2 \varepsilon_2} \slash 8},\ k \in \omega_j^{-1}(\Delta_j).
$$
By reasoning in the same way we find out for every $k \in \omega_j^{-1}(\Delta_j)$ (upon choosing $b_-$ so large that $\varepsilon_1 b_-^{\varepsilon_2} \geq 2 (2j-1)^{1 \slash 2} (b_+ \slash b_-)^{1 \slash 2}$, that is $(1+\varepsilon_1) b_-^{-1 \slash 2 +\varepsilon_2} \geq  -x_j^-(\varepsilon_2) + (\varepsilon_1 \slash 2) b_-^{-1 \slash 2 + \varepsilon_2}$) that the integral
$\int_{-\infty}^{-(1+\varepsilon_1) b_-^{-1 \slash 2 +\varepsilon_2}} \psi_j(x,k)^2 {\rm d} x$ is majorized by
$(\pi(2j-1))^{1 \slash 2} {\rm e}^{-\varepsilon_1^2 b_-^{2 \varepsilon_2} \slash 8}$ , so we get
\bel{a32}
\int_{\re \backslash I_{\varepsilon_1,\varepsilon_2}} \psi_j(x,k)^2 {\rm d} x \leq 2 (\pi(2j-1))^{1 \slash 2} {\rm e}^{-\varepsilon_1^2 b_-^{2 \varepsilon_2} \slash 8},\ k \in \omega_j^{-1}(\Delta_j).
\ee
Finally, by recalling \eqref{a22b}, we have
\beas
\int_{\re^2} \chi_{I_{\varepsilon_1,\varepsilon_2}}(x) | \varphi(x,y) |^2 {\rm d} x {\rm d} y &
= & \int_{\re^2}  \chi_{I_{\varepsilon_1,\varepsilon_2}}(x) | \hat{\varphi}(x,k) |^2 {\rm d} x {\rm d} k \\
& = & \int_{\omega_j^{-1}(\Delta_j)} | \beta_j(k) |^2 \left( \int_{I_{\varepsilon_1,\varepsilon_2}} \psi_j(x,k)^2 {\rm d} x \right) {\rm d} k,
\eeas
which, combined with \eqref{a32} and the identity $\int_{\omega_j^{-1}(\Delta_j)} | \beta_j(k)|^2 {\rm d} k =1$,
yields the desired result.
\end{proof}


\section{Smooth Iwatsuka Hamiltonians with positive magnetic fields}\label{sec:smooth1}
\setcounter{equation}{0}

Having completed the analysis of Iwatsuka Hamiltonians with discontinuous magnetic fields, we turn to the case when the magnetic field
is everywhere bounded $0 < b_- \leq b(x) \leq b_+ < \infty$ and assumes constant values outside of an interval $[- \eps, \eps]$.
We have $b(x) = b_-$ for $x < - \epsilon$ and $b(x) = b_+$ for $x > \epsilon$.
We will rely on the results obtained in the previous sections for $\eps = 0$, and show how they imply analogous results in this case. We will take $\epsilon > 0$ small, on the order of $b_-^{-1/2}$. This will insure that the edge currents remain well-localized in a strip around $x=0$.

The magnetic field is defined as follows. Given $\eps \geq 0$ and $0<b_-<b_+<\infty$, we consider $b_{\eps} \in {\rm L}^1_{\rm loc}(\re)$
satisfying
$$
\left\{ \begin{array}{ll} b_{\eps}(x):=b_{\pm}, & {\rm if}\ \pm x > \eps, \\ b_- \leq b(x) \leq b_+, & {\rm when}\ |x| \leq \eps. \end{array} \right. $$
As in \eqref{a1}, we set
\bel{b1}
\beta_{\eps}(x):=\int_0^x b_{\eps}(s) ds,\ x \in \re
\ee
and consider the 2D vector potential $A_{\eps}:=(A_{\eps,1},A_{\eps,2})$ defined by $A_{\eps,1}:=0$ and $A_{\eps,2}:=\beta_{\eps}(x)$.

The 2D magnetic Schr\"odinger operator $H(A_{\eps})$
is defined on the dense domain ${\rm C}_0^\infty (\re^2)$ by
\bel{b2}
H_{\eps}=H(A_{\eps}) := (-{\rm i} \nabla - A_{\eps})^2 = p_x^2 + (p_y - \beta_{\eps}(x))^2.
\ee
As in section \ref{sec:sharp1}, the partial Fourier transform leads to a direct integral composition
\bel{b4}
{\mathcal F} H_{\eps} {\mathcal F}^*= \int^{\oplus}_{\re} h_{\eps}(k) dk
\ee
with
\bel{b5}
h_{\eps}(k) := p_x^2 + V_{\eps}(x,k)\ {\rm on}\ {\rm L}^2 (\re),\
\ {\rm and} \ V_{\eps}(x,k) := (k-\beta_{\eps}(x))^2.
\ee

In light of \eqref{b1}, the potential $V_{\eps}(x,k)$, $k \in \re$, is unbounded as $|x|$ goes to infinity, hence $h_{\eps}(k)$ has a compact resolvent.
Let $\left\{\omega_{\eps,j}(k)\right\}_{j=1}^{\infty}$ be the increasing
sequence of the eigenvalues of the operator $h_{\eps}(k)$, $k \in \re$.
Since all the eigenvalues $\omega_{\eps,j}(k)$ are simple they depend
analytically on $k \in \re$. Moreover, for all $k \in \re$ there is a unique $x_{\eps,k} \in \re$ such that $\beta_{\eps}(x_{\eps,k})=k$ since $\beta_{\eps}'(x)=b_{\eps}(x) \geq b_-$ for every $x \in \re$. As a consequence we have
$$ b_-^2(x-x_{\eps,k})^2 \leq V_{\eps}(x,k) \leq b_+^2 (x-x_{\eps,k})^2,\ x \in \re, $$
whence
\bel{b9}
(2j-1) b_- \leq \omega_{\eps,j}(k) \leq (2j-1)b_+,\ j \in {\mathbb N}^*,
\ee
from the minimax principle.

Further, let $\left\{\psi_{\eps,j}(k)\right\}_{j=1}^{\infty}$ be the ${\rm L}^2(\re)$-normalized eigenfunctions of
$h_\eps (k)$ satisfying
\bel{b10b}
(h_{\eps}(k) \psi_{\eps,j})(x,k) = \omega_{\eps,j}(k) \psi_{\eps,j}(x,k),\ x \in \re.
\ee
We choose all $\psi_{\eps,j}(k)$ to be real, and $\psi_{\eps,1}(x,k) > 0$, for $x \in
\re$ and $k \in \re$. Since $V_{\eps}(.,k) \in {\rm C}^0(\re) \cap {\rm C}^\infty (\R^*)$,
the functions $\psi_{\eps,j}(.,k) \in {\rm C}^2(\re) \cap {\rm C}^\infty (\R^*)$, $j \in {\mathbb N}^*$, from \cite{HS1}[Proposition A1].
Moreover, the orthogonal projections $|\psi_{\eps,j}(k)\rangle \langle
\psi_{\eps,j}(k)|$ , $j \in {\mathbb N}^*$, depend analytically on $k$.

\subsection{Analysis of the band functions}\label{subsec:eps-band-fns1}

We treat the $\eps > 0$ problem as a perturbation of the $\eps = 0$ case. We first prove that the analysis of the band functions
$\omega_{\eps,j}(k)$ follows from that done for $\omega_j (k)$ in section \ref{sec:band-deriv1}.

\subsubsection{A comparison result for the band functions}

\begin{lemma}
\label{lm-b1}
Let $\eps>0$ and $j \in \mathbb{N}^*$. Then we have
\bel{b10}
| \omega_{\eps,j}(k) - \omega_j(k) | \leq  (r-1) (b_-^{1 \slash 2} \eps) \left(   (r-1) (b_-^{1 \slash 2} \eps)  + 2(2j-1)^{1 \slash 2} r^{1 \slash 2} \right) b_-,\ k \in \re.
\ee
\end{lemma}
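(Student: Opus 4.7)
The natural approach is to view $h_{\epsilon}(k)$ as a bounded perturbation (in quadratic-form sense, on states of bounded energy) of $h(k)$, and apply the min-max principle on both sides.

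\textbf{Step 1: Uniform $L^\infty$-bound on $\beta_\epsilon - \beta$.} First I would establish that
\[
\| \beta_\epsilon - \beta \|_{\infty} \le (r-1)\, b_-\, \epsilon.
\]
This is a direct computation on each of the four regions $x > \epsilon$, $0 \le x \le \epsilon$, $-\epsilon \le x \le 0$, $x < -\epsilon$: on each one $\beta_\epsilon - \beta$ equals an integral of the form $\int (b_\epsilon - b_\pm)\,ds$ on a set of length at most $\epsilon$, with integrand bounded in absolute value by $b_+ - b_- = (r-1) b_-$.

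\textbf{Step 2: Pointwise factorization of the potential difference.} I would write
\[
V_\epsilon(x,k) - V(x,k) = (\beta(x) - \beta_\epsilon(x))\,\bigl[\,2(k-\beta(x)) + (\beta(x) - \beta_\epsilon(x))\,\bigr],
\]
so that for any $\psi \in L^2(\R)$ with $\|\psi\|=1$,
\[
\bigl|\langle \psi, (V_\epsilon - V)\psi\rangle\bigr|
\;\le\; \|\beta - \beta_\epsilon\|_\infty^{\,2}
\;+\; 2\,\|\beta - \beta_\epsilon\|_\infty\,\langle \psi, V(\cdot,k)\psi\rangle^{1/2},
\]
by Cauchy--Schwarz. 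The analogous estimate holds with $V$ and $V_\epsilon$ interchanged.

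\textbf{Step 3: Min-max in one direction.} Using the trial space $S_j := \mathrm{span}\{\psi_1(k),\ldots,\psi_j(k)\}$ in the min-max formula for $\omega_{\epsilon,j}(k)$, every unit $\psi \in S_j$ satisfies $\langle \psi, h(k)\psi\rangle \le \omega_j(k) \le (2j-1)b_+$, hence $\langle \psi, V(\cdot,k)\psi\rangle \le (2j-1)b_+$ as well. Plugging this into the bound of Step 2 and using Step 1 with $b_+ = r b_-$ yields
\[
\omega_{\epsilon,j}(k) - \omega_j(k) \;\le\; (r-1)^2 b_-^{\,2}\epsilon^{\,2}
+ 2(r-1)(2j-1)^{1/2} r^{1/2} b_-^{3/2}\epsilon,
\]
which factors exactly as the right-hand side of \eqref{b10}.

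\textbf{Step 4: The reverse inequality.} The argument is symmetric: use $\mathrm{span}\{\psi_{\epsilon,1}(k),\ldots,\psi_{\epsilon,j}(k)\}$ as a trial space in the min-max formula for $\omega_j(k)$, bound $\langle \psi, V_\epsilon(\cdot,k)\psi\rangle \le \omega_{\epsilon,j}(k) \le (2j-1)b_+$ via \eqref{b9}, and conclude with the same estimate. Combining both directions gives \eqref{b10}.

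The only mildly delicate point is Step 1 (keeping track of signs in the four regions so that no constant worse than $(r-1)b_-\epsilon$ appears); Steps 2--4 are routine variational perturbation theory once the key identity $V_\epsilon - V = (\beta - \beta_\epsilon)(2(k-\beta) + (\beta - \beta_\epsilon))$ is isolated, since it produces both a quadratic and a linear term in $\|\beta-\beta_\epsilon\|_\infty$, matching the two summands inside the parenthesis in \eqref{b10}.
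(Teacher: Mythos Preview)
Your proof is correct and follows essentially the same route as the paper's: bound $\|\beta_\eps-\beta\|_\infty$ by $(r-1)b_-\eps$, write $V_\eps-V$ as a quadratic-plus-linear expression in $(\beta-\beta_\eps)$ with the linear term carrying the factor $(k-\beta(x))=\hat v_y(k)$, control that linear term by Cauchy--Schwarz in terms of the square root of the quadratic form, and conclude by min--max together with the a priori bound $\omega_j(k)\le(2j-1)b_+$. The only cosmetic differences are that the paper bounds $\langle(\beta-\beta_\eps)\hat v_y(k)u,u\rangle$ using $\langle h(k)u,u\rangle^{1/2}$ rather than $\langle V(\cdot,k)u,u\rangle^{1/2}$ (yours is marginally sharper), and the paper packages both inequalities into a single form-bound \eqref{b11} applied through min--max, whereas you spell out the two trial-space arguments separately.
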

\begin{proof}
Put $\mathfrak{a}:=\| \beta_\eps - \beta \|_{\infty}$. Since $h_{\eps}(k) - h(k)$ equals the difference of the potentials
$$ V_{\eps}(x,k)-V(x,k)=(\beta_{\eps}-\beta)(x)^2-2 (\beta_{\eps}-\beta)(x) \hat{v}_y(k), $$
and $\| \hat{v}_y(k) u \|^2 = \langle \hat{v}_y(k)^2 u , u \rangle \leq \langle h(k) u , u \rangle$, we have
\bel{b11}
\frac{|\langle ( h_{\eps}(k)-h(k)) u , u \rangle |}{\| u \|^2} \leq \mathfrak{a} \left( \mathfrak{a} +   2 \left(  \frac{\langle h(k) u , u \rangle}{\|u \|^2} \right)^{1 \slash 2} \right),
\ee
for all $u \in D(h(0)) \setminus \{ 0 \}$. Bearing in mind that $\mathfrak{a} \leq  (r-1)b_- \eps$ and
$\omega_j(k) \leq (2j-1) r b_-$, the result follows from \eqref{b11} and the minimax principle.
\end{proof}

\subsubsection{Positivity of the derivative of the band functions}

If we assume more regularity on $b(x)$ for $x \in [-\epsilon, \epsilon]$, we can prove a partial analog of Lemma \ref{lm-a1}.
However, this additional regularity is not needed for the main result, Theorem \ref{pr-b1}.

\begin{lemma}
\label{lm-bb1}
Fix $\eps>0$ and assume that $b_{\eps} \in {\rm C}^1(\re)$. Then we have
$$ \omega_{\eps,j}'(k) >0,\ j \in \N^*,\ |k| > b_+ \eps. $$
\end{lemma}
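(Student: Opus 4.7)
The plan is to combine the Feynman--Hellmann formula with an integration by parts on the two half-lines where $b_\eps$ is constant, so as to express $\omega'_{\eps,j}(k)$ as a sum of manifestly non-negative terms. Starting from
\[ \omega'_{\eps,j}(k) = 2\int_\re (k-\beta_\eps(x))\psi_{\eps,j}(x,k)^2\,dx, \]
I introduce the auxiliary function
\[ f(x) := \psi'_{\eps,j}(x,k)^2 + (\omega_{\eps,j}(k)-V_\eps(x,k))\psi_{\eps,j}(x,k)^2, \]
which equals $(\psi'_{\eps,j})^2 - \psi_{\eps,j}\psi''_{\eps,j}$ by \eqref{b10b}. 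A direct computation using the eigenvalue equation gives the pleasant formula $f'(x) = 2b_\eps(x)(k-\beta_\eps(x))\psi_{\eps,j}(x,k)^2$, while the Gaussian-type decay of $\psi_{\eps,j}$ and $\psi'_{\eps,j}$ at infinity, established along the lines of Theorem~\ref{lm-a4}, yields $f(\pm\infty)=0$.

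Because $b_\eps$ is constant on each of the half-lines $(-\infty,-\eps)$ and $(\eps,+\infty)$, one has $(k-\beta_\eps)=-V_\eps'/(2b_\pm)$ there. Integration by parts, exactly as in the derivation of \eqref{a11}--\eqref{a12}, combined with the eigenvalue equation, then produces the boundary-term identities
\[ 2\int_\eps^{+\infty}(k-\beta_\eps)\psi_{\eps,j}^2\,dx = -\frac{f(\eps)}{b_+}, \qquad 2\int_{-\infty}^{-\eps}(k-\beta_\eps)\psi_{\eps,j}^2\,dx = \frac{f(-\eps)}{b_-}. \]
Adding the contribution from $(-\eps,\eps)$ and using $f(\eps)-f(-\eps)=2\int_{-\eps}^\eps b_\eps(k-\beta_\eps)\psi_{\eps,j}^2\,dx$ to eliminate $f(\eps)/b_+$, one arrives at the key identity
\[ \omega'_{\eps,j}(k) = f(-\eps)\Bigl(\tfrac{1}{b_-}-\tfrac{1}{b_+}\Bigr) + \frac{2}{b_+}\int_{-\eps}^\eps (b_+-b_\eps(x))(k-\beta_\eps(x))\psi_{\eps,j}(x,k)^2\,dx. \]

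For $k>b_+\eps$ every factor in this identity is non-negative. Indeed $|\beta_\eps(x)|\leq b_+\eps<k$ on $[-\eps,\eps]$ yields $k-\beta_\eps>0$ there; $b_+-b_\eps\geq 0$ by assumption; and $1/b_--1/b_+>0$. The strict positivity $f(-\eps)>0$ follows from a monotonicity argument: since $\beta_\eps(\eps)\leq b_+\eps<k$, the unique zero $x_{\eps,k}$ of $k-\beta_\eps$ satisfies $x_{\eps,k}>\eps$, hence $f'\geq 0$ on $(-\infty,-\eps]$ (strict away from the isolated zeros of $\psi_{\eps,j}$), and combined with $f(-\infty)=0$ this forces $f(-\eps)>0$. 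The case $k<-b_+\eps$ is handled by the symmetric substitution $f(-\eps)=f(\eps)-\int_{-\eps}^\eps f'\,dx$, which leads to
\[ \omega'_{\eps,j}(k) = f(\eps)\Bigl(\tfrac{1}{b_-}-\tfrac{1}{b_+}\Bigr) + \frac{2}{b_-}\int_{-\eps}^\eps (b_--b_\eps(x))(k-\beta_\eps(x))\psi_{\eps,j}(x,k)^2\,dx, \]
where $(b_--b_\eps)(k-\beta_\eps)\geq 0$ on $[-\eps,\eps]$ (both factors are non-positive) and $f(\eps)>0$ by the symmetric monotonicity argument on $(\eps,+\infty)$.

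The $C^1$ regularity of $b_\eps$ enters solely to give $\psi''_{\eps,j}$ in the classical sense and to justify the pointwise computation of $f'$. The main conceptual step is recognizing that the naturally monotone quantity controlled by the eigenvalue equation is $f$, not $\omega'_{\eps,j}(k)$ itself, and that splitting at $\pm\eps$ then makes $f(\mp\eps)$ appear with the favorable coefficient $1/b_--1/b_+>0$ after the substitution step.
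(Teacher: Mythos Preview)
Your proof is correct, and your key identity is in fact equivalent to the paper's \eqref{bb12b}: re-absorbing the boundary term via $f(-\eps)=2b_-\int_{-\infty}^{-\eps}(k-\beta_\eps)\psi_{\eps,j}^2\,dx$ turns your formula into $2\int_{-\infty}^{\eps}(1-b_\eps/b_+)(k-\beta_\eps)\psi_{\eps,j}^2\,dx$, which is exactly \eqref{bb12b}; the symmetric version likewise matches \eqref{bb12c}.

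The derivations differ, though. The paper integrates by parts against the variable weight $1/b_\eps(x)$, which brings in the factor $b_\eps'/b_\eps^2$ in the intermediate formulas \eqref{bb12}--\eqref{bb12a}, and then carries out a further step relating boundary data at $\pm\eps$ to boundary data at $0$ before reaching \eqref{bb12b}--\eqref{bb12c}. Your route is more direct: the auxiliary quantity $f=(\psi')^2-\psi\psi''$ has $f'=-V_\eps'\psi^2$ straight from the eigenvalue equation, so splitting at $\pm\eps$ and using that $b_\eps$ is constant on each outer half-line yields the boundary terms without ever differentiating $b_\eps$. This is the natural extension to $\eps>0$ of the paper's own $\eps=0$ argument leading to \eqref{a12}, and in fact only requires $b_\eps\in C^0(\re)$ (so that $V_\eps\in C^1$ and the pointwise formula for $f'$ holds), not $C^1$. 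The paper's longer detour does buy something, however: the intermediate identity \eqref{bb12} is used independently in the Remark following the lemma to handle $|k|\leq b_+\eps$ under the additional hypothesis $b_\eps'\geq 0$, which your streamlined argument does not directly provide.
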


\begin{proof}
1.) By the Feynman-Hellmann theorem, we have
\beq\label{eq:band-der-epsilon1}
\omega_{\eps,j}'(k) = \left\langle \frac{d h_{\eps}}{d k}(k) \psi_{\eps,j}(x,k) , \psi_{\eps,j}(x,k) \right\rangle,
\eeq
where
\beq\label{eq:band-deriv-epsilon2}
\frac{dh_{\eps}}{d k} (k)=\frac{\partial  V_{\eps}}{\partial k}(x,k)= 2 (k-\beta_{\eps}(x)) = - \frac{1}{b_\eps (x)} \frac{\partial  V_{\eps}}{\partial x}(x,k).
\eeq
Hence, using the last formula on the right in \eqref{eq:band-deriv-epsilon2}, and integrating by parts, we obtain
\beas
\omega_{\eps,j}'(k) & = & -\int_{\re} \frac{\partial V_{\eps}}{\partial x} (x,k) \psi_{\eps,j}(x,k)^2 \frac{dx}{b_{\eps}(x)} \nonumber \\
& = & 2 \int_{\re} V_{\eps}(x,k)  \psi_j(x,k) \psi_j'(x,k) \frac{dx}{b_{\eps}(x)} - \int_{-\eps}^{\eps} V_{\eps}(x,k) \psi_j(x,k)^2 \frac{b_{\eps}'(x)}{b_{\eps}(x)^2}  dx.
\eeas
Putting this  and \eqref{b10} together we get that
$$
\omega_{\eps,j}'(k) = 2 \int_{\re} (\omega_{\eps,j}(k) \psi_{\eps,j}(x,k) + \psi_j''(x,k)) \psi_{\eps,j}'(x,k) \frac{dx}{b_{\eps}(x)}
- \int_{-\eps}^{\eps} V_{\eps}(x,k) \psi_{\eps,j}(x,k)^2 \frac{b_{\eps}'(x)}{b_{\eps}(x)^2} dx.
$$
The first term in the right hand side of the above identity reads
$$ \int_{\re} \frac{\partial}{\partial x} (\omega_{\eps,j}(k) \psi_{\eps,j}(x,k)^2 + \psi_{\eps,j}'(x,k)^2 ) \frac{dx}{b_{\eps}(x)} = \int_{-\eps}^{\eps} ( \omega_{\eps,j}(k) \psi_{\eps,j}(x,k)^2 + \psi_{\eps,j}'(x,k)^2 ) \frac{b_{\eps}'(x)}{b_{\eps}^2(x)} dx,
$$
hence
\bel{bb12}
\omega_{\eps,j}'(k) = \int_{-\eps}^{\eps} \left(  ( \omega_{\eps,j}(k) - V_{\eps}(x,k) ) \psi_{\eps,j}(x,k)^2 + \psi_{\eps,j}'(x,k)^2 \right) \frac{b_{\eps}'(x)}{b_{\eps}^2(x)}  dx.
\ee
Taking into account that $\frac{b_{\eps}'(x)}{b_{\eps}^2(x)}=-\frac{d}{dx} \left( \frac{1}{b_{\eps}(x)} \right)$ and integrating by parts in \eqref{bb12}, we get that
\bea
\omega_{\eps,j}'(k) 
& = & -\sum_{\zeta=+,-} \frac{\zeta}{b_{\zeta}} \left( ( \omega_{\eps,j}(k) - V_{\eps}(\zeta \eps,k) ) \psi_{\eps,j}(\zeta \eps,k)^2 + \psi_{\eps,j}'(\zeta \eps,k)^2 \right) \nonumber \\
& & + \int_{-\eps}^{\eps} \frac{\partial}{\partial x} \left( ( \omega_{\eps,j}(k) - V_{\eps}(x,k) ) \psi_{\eps,j}(x,k)^2 + \psi_{\eps,j}'(x,k)^2 \right) \frac{dx}{b_{\eps}(x)}. \label{bb12a}
\eea
In light of \eqref{b5} and \eqref{b10} we have
\beas
& & \frac{\partial}{\partial x} \left( ( \omega_{\eps,j}(k) - V_{\eps}(x,k) ) \psi_{\eps,j}(x,k)^2 + \psi_{\eps,j}'(x,k)^2 \right) \\
& = & -\frac{\partial V_{\eps}}{\partial x}(x,k) \psi_{\eps,j}(x,k)^2 = 2 ( k - \beta_{\eps}(x) ) \psi_{\eps,j}(x,k)^2 b_{\eps}(x),
\eeas
so \eqref{bb12a} entails
\bea
\omega_{\eps,j}'(k) & = & -\sum_{\zeta=+,-} \frac{\zeta}{b_{\zeta}} \left( ( \omega_{\eps,j}(k) - V_{\zeta}(\zeta \eps,k)^2 ) \psi_{\eps,j}(\zeta \eps,k)^2 + \psi_{\eps,j}'(\zeta \eps,k)^2 \right) \nonumber \\
& & + 2 \int_{-\eps}^{\eps} (k-\beta_{\eps}(x))  \psi_{\eps,j}(x,k)^2 dx. \label{bb12aa}
\eea

\noindent
2.) The next step involves relating $( \omega_{\eps,j}(k) - V_{\zeta}(\pm \eps,k)^2 ) \psi_{\eps,j}(\pm\eps,k)^2 + \psi_{\eps,j}'(\pm \eps,k)^2$ to $( \omega_{\eps,j}(k) - k)^2  \psi_{\eps,j}(0,k)^2 + \psi_{\eps,j}'(0,k)^2$. To this purpose we multiply the both sides of the following obvious identity
$$ \psi_{\eps,j}(\pm \eps,k)^2 = \psi_{\eps,j}(0,k)^2 + 2 \int_{0 \leq \pm x \leq \eps} \psi_{\eps,j}(x,k) \psi_{\eps,j}'(x,k) dx, $$
by $\omega_{\eps,j}(k)$, getting
\beas
\omega_{\eps,j}(k) \psi_{\eps,j}(\pm \eps,k)^2 & = & \omega_{\eps,j}(k) \psi_{\eps,j}(0,k)^2 \pm 2 \int_{0 \leq \pm x \leq \eps} ( h_{\eps} \psi_{\eps,j})(x,k) \psi_{\eps,j}'(x,k) dx \nonumber \\
& = & \omega_{\eps,j}(k) \psi_{\eps,j}(0,k)^2 \mp 2 \int_{0 \leq \pm x \leq \eps} ( \psi_{\eps,j}''(x,k) - V_{\eps}(x,k) \psi_{\eps,j}(x,k) ) \psi_{\eps,j}'(x,k) dx \nonumber \\
& = & \omega_{\eps,j}(k) \psi_{\eps,j}(0,k)^2 \mp \int_{0 \leq \pm x \leq \eps} \left( \frac{\partial \psi_{\eps,j}'}{\partial x}(x,k)^2 - V_{\eps}(x,k) \frac{\partial \psi_{\eps,j}}{\partial x}(x,k)^2 \right) dx .
\eeas
This yields
\beas
& & \psi_{\eps,j}'(\pm \eps,k)^2 + (\omega_{\eps,j}(k) - V_{\eps}(\pm \eps,k)) \psi_{\pm \eps,j}(\eps,k)^2 \nonumber \\
& =& \psi_{\eps,j}'(0,k)^2 + (\omega_{\eps,j}(k) - k^2) \psi_{\eps,j}(0,k)^2 \pm 2 \int_{0 \leq \pm x \leq \eps} (k - \beta_{\eps}(x)) \psi_{\eps,j}(x,k)^2 b_{\eps}(x) dx,
\eeas
by integrating by parts. From this and \eqref{bb12aa} it then follows that
\bea
\omega_{\eps,j}'(k) & = & \left( \frac{1}{b_-}-\frac{1}{b_+} \right) ( \psi_{\eps,j}'(0,k)^2 + (\omega_{\eps,j}(k) - k^2) \psi_{\eps,j}(0,k)^2 ) \\
& & + 2 \sum_{\zeta=+,-} \int_{0 \leq \zeta x \leq \eps} \left( 1 - \frac{b_{\eps}(x)}{b_{\zeta}} \right) (k-\beta_{\eps}(x))  \psi_{\eps,j}(x,k)^2 dx. \label{bb12aaa}
\eea
Further, by noticing that
\beas
\omega_{\eps,j}(k) \psi_{\eps,j}(0,k)^2  & = & \mp 2 \int_{0 < \pm x < \infty} h_{\eps}(k) \psi_{\eps,j}(x,k) \psi_{\eps,j}'(x,k) dx \\
& = & \pm \int_{0 < \pm x < \infty} \left( \frac{\partial \psi_{\eps,j}'}{\partial x}(x,k)^2 - V_{\eps}(x,k) \frac{\partial \psi_{\eps,j}}{\partial x}(x,k)^2 \right) dx \\
& = &  -  \psi_{\eps,j}'(0,k)^2 + k^2 \psi_{\eps,j}(0,k)^2 \mp \int_{0 < \pm x < \infty} (k - \beta_{\eps}(x)) \psi_{\eps,j}(x,k)^2  b_{\eps}(x) dx,
\eeas
we see that
$$ \psi_{\eps,j}'(0,k)^2+(\omega_{\eps,j}(k)-k^2) \psi_{\eps,j}(0,k)^2 = \pm 2 \int_{0 < \pm x < \infty} (k - \beta_{\eps}(x)) \psi_{\eps,j}(x,k)^2  b_{\eps}(x) dx. $$
This entails simultaneously
\bel{bb12b}
\omega_{\eps,j}'(k) = 2 \int_{-\infty}^{\eps} \left( 1 - \frac{b_{\eps}(x)}{b_+} \right)  (k-\beta_{\eps}(x)) \psi_{\eps,j}(x,k)^2  dx,
\ee
and
\bel{bb12c}
\omega_{\eps,j}'(k) =  -2 \int_{-\eps}^{+\infty} \left( \frac{b_{\eps}(x)}{b_-} - 1 \right)  (k-\beta_{\eps}(x)) \psi_{\eps,j}(x,k)^2  dx,
\ee
with the aid of \eqref{bb12aaa}.
The result now follows immediately from \eqref{bb12b} for $k > b_+ \eps$ and from \eqref{bb12c} for $k < -b_+ \eps$.
\end{proof}

\begin{remark}
Under the assumptions of Lemma \ref{lm-bb1} we deduce from \eqref{bb12} that
$$ \omega_{\eps,j}'(k)>0,\ | k | \leq b_+\eps,\ \eps \in \left( 0, (2j-1)^{1 \slash 2}  b_-^{-1 \slash 2} \slash (2 r) \right),\ j \in \N^*,$$
provided $b_\eps'(x) \geq 0$ for all $x \in (-\eps,\eps)$. Thus for every $\eps \in (0 ,b_-^{-1 \slash 2} \slash (2 r))$ we have
$$\omega_{\eps,j}'(k)>0,\ k \in \R,\ j \in \N^*, $$
under the above prescribed conditions on $b_\eps$. This result is similar to the one established in \cite{MP}[Remark 3.3] under slightly different hypothesis on the magnetic field.
\end{remark}

In light of Lemma \ref{lm-bb1} the band functions $k \mapsto \omega_{\eps,j}(k)$, $j \in \N^*$, are non constant for all $\eps>0$, thus the spectrum of $H_\eps$ is purely absolutely continuous. Moreover, we see from \eqref{a19} that
$\lim_{k \rightarrow \pm \infty} \| (h_\eps(k) - (2j-1) b_{\pm}) \psi_j^{\pm}(.,k) \| = 0$, hence
$$\lim_{k \rightarrow \pm \infty} \omega_{\eps,j}(k) = (2j-1) b_\pm,\ \eps >0, j \in \N^*,$$
by \eqref{b9}. As a consequence  we have
$$ \sigma(H_\eps)= \sigma_{ac}(H_\eps) = \bigcup_{j \in \N^*} \overline{\omega_{\eps,j}(\re)} =  \bigcup_{j \in \N^*} [ (2j-1) b_-,(2j-1) b_+],\ \eps>0. $$

\subsection{Existence of edge currents}

As in section \ref{sec:edge-currents1}, we define the current carried by a state $\varphi$ as the expectation of the $y$-component of the velocity operator \bel{bb13a}
v_{\eps,y}:=p_y-\beta_{\eps}(x)=v_y + (\beta-\beta_\eps)(x),
\ee
in the state $\varphi$, that is
\bel{bb13b}
J_{\eps,y}(\varphi):= \langle v_{\eps,y} \varphi , \varphi \rangle= J_y(\varphi) + \langle  (\beta-\beta_\eps) \varphi , \varphi \rangle.
\ee
The edge current does depend upon $\epsilon$ through $\beta_\epsilon$. However, formulae \eqref{bb13a}--\eqref{bb13b} show that the smooth Iwatsuka model may be treated as a perturbation of the sharp Iwatsuka model for small $\epsilon > 0$.

\subsubsection{Edge states carrying a current}

For all $j \geq 1$, we prove that edge currents exist for any $0 < \epsilon < \epsilon_j$ for energies in intervals $\Delta_j$. The existence of edge currents is related to the existence of absolutely continuous spectrum. As mentioned after Lemma \ref{lm-a1}, as long as the band functions are non constant, the spectrum is absolutely continuous. We established this for the smooth Iwatsuka model $H_\epsilon$ in two cases. First, it follows from Lemma \ref{lm-b1} and \eqref{b10} that if $b_-^{1/2} \epsilon << 1$, then the two band functions $\omega_{\epsilon,j}(k)$ and $\omega_j(k)$ are uniformly close. Since $\omega_j(k)$ is monotone increasing by Lemma \ref{a1}, the band function $\omega_{\epsilon,j}(K)$ cannot be constant. Second, if we suppose that $b_\epsilon \in C^1 (\R)$, it follows from the above remark and Lemma \ref{lm-bb1} that the band functions are non constant with no constraint on $\epsilon$. We mention that Iwatsuka \cite{iwatsuka1} proves absolutely continuity of the spectrum provided the magnetic field $b(x)$ is smooth $b(x) \in C^\infty (\R)$, it is bounded $0 < M_- \leq b(x) \leq M_+ < \infty$, and $\limsup_{x \rightarrow - \infty} b(x)< \liminf_{x \rightarrow \infty} b(x)$ or the reverse inequality. Furthermore, under the additional condition that $b(x)$ is monotone (without any regularity assumption), Dombrowski, Germinet, and Raikov \cite[Corollary 2.3]{DGR} proved the quantization of the edge current (see section \ref{subsec:relation1}).

\begin{theorem}
\label{pr-b1}
Let $b_-,r, n, j,\delta_j$ and $\Delta_j$ be as in Proposition \ref{prop-a2}. Then there exists $\eps_j>0$, depending on $b_-$, such that for each $\eps \in (0, \eps_j)$, we may find a subinterval $\Delta_{\eps,j}$ of $\Delta_j$, with same midpoint $E_j$, satisfying
\beq\label{eq:current-pert1}
 J_{\eps,y}(\psi) \geq \frac{c_j}{2} \delta_j^3 \left( \frac{r-1}{r^3} \right) b_-^{1 \slash 2} \| \psi \|^2,\ \psi=\mathbb{P}_{\eps}(\Delta) \psi,
 \eeq
for any subinterval $\Delta \subset \Delta_{\eps,j}$ centered at $E_j$.
Here $\mathbb{P}_{\eps}(I)$ denotes the spectral projection of $H_{\eps}$ for the Borel set $I \subset \re$ and the constant $c_j>0$ is defined by \eqref{a14}.
\end{theorem}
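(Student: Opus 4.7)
The plan is to treat $H_\eps$ as a perturbation of the sharp Iwatsuka Hamiltonian $H$ studied in sections \ref{sec:sharp1}--\ref{sec:edge-currents1} and to mirror the proof of Theorem \ref{lm-a3} in three steps. First, I shrink $\Delta_j$ to a subinterval $\Delta_{\eps,j}$ centered at the midpoint $E_j$ of $\Delta_j$ so that the disjointness of preimages $\omega_{\eps,l}^{-1}(\Delta_{\eps,j}) = \emptyset$ persists for every $l \in \mathbb{N}^* \setminus \{j\}$. Second, I combine the fiber decomposition \eqref{b4} and the Feynman--Hellmann identity for $h_\eps(k)$ to express $J_{\eps,y}(\psi)$ as an integral of $\omega_{\eps,j}'(k)$. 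Third, I adapt Proposition \ref{prop-a2} to obtain a pointwise lower bound on $\omega_{\eps,j}'(k)$ for $k \in \omega_{\eps,j}^{-1}(\Delta_{\eps,j})$ which, combined with the previous step, yields \eqref{eq:current-pert1}.

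For the first step, let $\eta_j(\eps)$ denote the right-hand side of \eqref{b10}; it tends to zero as $\eps \to 0$ for fixed $b_\pm$ and $j$. Since $|\omega_{\eps,l}(k) - \omega_l(k)| \leq \eta_l(\eps)$ uniformly in $k$ for each fixed $l$ (with a constant depending on $l$), Proposition \ref{prop-a2}(i) produces $\eps_j^{(1)} > 0$ such that for every $\eps \in (0,\eps_j^{(1)})$ the open subinterval $\Delta_{\eps,j} \subset \Delta_j$ centered at $E_j$ and of length $|\Delta_j| - 4\eta_j(\eps)$ is nondegenerate, satisfies $\omega_{\eps,l}^{-1}(\Delta_{\eps,j}) = \emptyset$ for every $l \neq j$, and obeys $\omega_{\eps,j}^{-1}(\Delta_{\eps,j}) \subset \omega_j^{-1}(\Delta_j)$. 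For the second step, this disjointness forces
\[
\hat\psi(x,k) = \chi_{\omega_{\eps,j}^{-1}(\Delta)}(k)\,\beta_j(k)\,\psi_{\eps,j}(x,k)
\]
for any $\psi = \mathbb{P}_\eps(\Delta)\psi$ with $\Delta \subset \Delta_{\eps,j}$, and the Feynman--Hellmann identity applied to $h_\eps(k)$ gives $\omega_{\eps,j}'(k) = 2\int_\R (k-\beta_\eps(x))\psi_{\eps,j}(x,k)^2\,dx$. Plancherel then yields
\[
J_{\eps,y}(\psi) = \tfrac{1}{2}\int_{\omega_{\eps,j}^{-1}(\Delta)}\omega_{\eps,j}'(k)\,|\beta_j(k)|^2\,dk,
\]
mirroring \eqref{eq:current99}.

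The main obstacle is the third step. My approach is to rerun the proof of Proposition \ref{prop-a2} with $h$ replaced by $h_\eps$ and $\psi_j$ by $\psi_{\eps,j}$. The operator inequalities $h_-(k) \leq h_\eps(k)$ (for $k \leq 0$) and $r^2 h_\eps(k) \geq h_+(k)$ (for $k \geq 0$) that underpin the proof of Proposition \ref{prop-a2} hold exactly outside the transition interval $[-\eps,\eps]$ and are corrupted only by a potential error supported there whose expectation in $\psi_{\eps,j}(\cdot,k)$ is of order $\eps$ times a polynomial in $b_+$. Running the Cauchy--Schwarz argument on the overlaps $\alpha_{j,l}^\pm(k) := \langle \psi_{\eps,j}(k),\psi_l^\pm(k)\rangle_{L^2(\R)}$ then produces the analogue of \eqref{a14} for $\omega_{\eps,j}'(k)$ up to an additive error of the same order, negligible with respect to $c_j\delta_j^3\,(r-1)/r^3\,b_-^{1/2}$ once $\eps < \eps_j^{(2)}$ with $\eps_j^{(2)}$ depending on $b_-$, $r$, $\delta_j$ and $j$. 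The delicate point is the uniformity of these error estimates for $k$ varying in $\omega_{\eps,j}^{-1}(\Delta_{\eps,j})$; this is handled because step one forces this preimage to lie in $\omega_j^{-1}(\Delta_j)$, which is bounded by \eqref{a25}, so the comparison Landau eigenfunctions $\psi_l^\pm(\cdot,k)$ and the true eigenfunctions $\psi_{\eps,j}(\cdot,k)$ are concentrated in a bounded $x$-interval uniformly in such $k$. Setting $\eps_j := \min(\eps_j^{(1)},\eps_j^{(2)})$ and absorbing the residual perturbative loss into the factor $1/2$ appearing in \eqref{eq:current-pert1} then closes the argument.
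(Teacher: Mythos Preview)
Your route is genuinely different from the paper's. The paper never analyzes $\omega_{\eps,j}'$; it works at the level of the full Hamiltonians. Given $\psi = \mathbb{P}_\eps(\Delta_{j,N})\psi$ (with $\Delta_{j,N}$ a small subinterval of $\Delta_j$ centered at $E_j$), it decomposes $\psi = \phi + \xi$ with $\phi := \mathbb{P}(\Delta_j)\psi$, $\xi := \mathbb{P}(\Delta_j^c)\psi$ using the spectral projections of the \emph{sharp} operator $H$. It then controls $\|\xi\|$ via $\|(H-E_j)\psi\| \lesssim (r-1)(b_-^{1/2}\eps)\,b_-\|\psi\|$, applies Theorem~\ref{lm-a3} to $J_y(\phi)$ as a black box, bounds the cross terms $\langle v_y\xi,\phi\rangle$ and $\langle v_y\xi,\xi\rangle$, and passes from $J_y$ to $J_{\eps,y}$ using $\|\beta_\eps-\beta\|_\infty$. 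This requires no regularity on $b_\eps$ beyond $L^1_{\rm loc}$ and is recycled verbatim for the electromagnetic perturbations of Theorem~\ref{thm-emp1}. Your approach, were it to go through, would yield the stronger fiberwise statement $\omega_{\eps,j}'(k) \gtrsim b_-^{1/2}$ on $\omega_{\eps,j}^{-1}(\Delta_{\eps,j})$, which the paper does not assert.

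Your Steps~1 and~2 are sound (though the boundedness of $\omega_j^{-1}(\Delta_j)$ follows directly from Lemma~\ref{lm-a1} and \eqref{eq:bflimit1}, not from \eqref{a25}, which carries a large-$b_-$ hypothesis absent from Theorem~\ref{pr-b1}). The real gap is Step~3. The proof of Proposition~\ref{prop-a2} does not rest only on the operator inequalities $h_-(k) \leq h(k)$ and $h_+(k) \leq r^2 h(k)$; it crucially uses the exact identities \eqref{a12b}--\eqref{a12c}, e.g.\ $\omega_j'(k) = 2(r-1)\int_0^\infty (b_+x-k)\psi_j(x,k)^2\,dx$, to convert the Cauchy--Schwarz upper bound on $|\alpha_{j,l}^-|$ into a \emph{lower} bound on $\omega_j'(k)$ (see the line after \eqref{a14d}). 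Those identities come from integration by parts exploiting the piecewise-constant structure of $b$; their $\eps$-analogues \eqref{bb12b}--\eqref{bb12c} are derived in the paper only under the extra hypothesis $b_\eps \in C^1(\R)$, which Theorem~\ref{pr-b1} explicitly does not assume. Without such an identity the Cauchy--Schwarz loop does not close: Feynman--Hellmann alone does not express $\omega_{\eps,j}'(k)$ as the one-sided integral appearing in the CS bound. Incidentally, your claim that the operator inequalities hold ``exactly outside $[-\eps,\eps]$'' is also incorrect: $\beta_\eps(x)-\beta(x)$ is a nonzero constant for $x < -\eps$, so $V_\eps-V_-$ does not vanish there. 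A clean repair within your framework is to skip the rerun of Proposition~\ref{prop-a2} and instead bound $|\omega_{\eps,j}'(k) - \omega_j'(k)|$ directly: write both via Feynman--Hellmann, use $\|\beta_\eps-\beta\|_\infty \leq (r-1)b_-\eps$, and control $\|\psi_{\eps,j}(\cdot,k)-\psi_j(\cdot,k)\|$ by resolvent perturbation (the gap separating $\omega_j(k)$ from the rest of $\sigma(h(k))$ is of order $b_-$ uniformly for $k \in \omega_j^{-1}(\Delta_j)$, under \eqref{a12d}). Then invoke Proposition~\ref{prop-a2} for $\omega_j'(k)$ and absorb the $O(b_-\eps)$ error into the factor $1/2$.
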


\begin{proof}
1. We perform a decomposition of $\psi$ in order to calculate the current.
We set $d_j=|\Delta_j | \slash (2b_-)=(r+1)((2j-1) (r-1) \slash (r+1) - \delta_j)  \slash 2$ and, for $N \geq 1$, consider the subinterval $\Delta_{j,N} = (E_j- d_{j,N} b_-, E_j+ d_{j,N}b_-)$ of $\Delta_j$, with $d_{j,N}:=d_j \slash N$. Then we
decompose $\psi=\mathbb{P}_{\eps}(\Delta_{j,N}) \psi$, into the sum
\bel{bb20}
\psi=\phi+\xi,\ \phi:=\mathbb{P}(\Delta_j) \psi,\ \xi:=\mathbb{P}(\Delta_j^c) \psi,
\ee
where $\Delta_j^c:= \re \setminus \Delta_j$.

\noindent
2. We next estimate the perturbation. Since $W_\eps:=H_\eps - H = - 2 (\beta_\eps - \beta) v_y + (\beta_\eps-\beta)^2$ and $\| v_y \psi \|  = \langle v_y^2 \psi , \psi \rangle^{1 \slash 2} \leq  \langle H \psi , \psi \rangle^{1 \slash 2} \leq  \| H \psi \|^{1 \slash 2} \| \psi \|^{1 \slash 2}
\leq ( \| H_\eps \psi \|^{1 \slash 2} + \| W_\eps \psi \|^{1 \slash 2} ) \| \psi \|^{1 \slash 2}$, we have
\bel{bb20a}
\frac{\| W_\eps \psi \|}{\| \psi \|} \leq \|\beta_\eps - \beta \|_{\infty} \left(  \|\beta_\eps - \beta \|_{\infty}  + 2  \left( \frac{\| H_\eps \psi \|}{\| \psi \|} \right)^{1 \slash 2}+ 2 \left(  \frac{\| W_\eps \psi \|}{\|\psi \|} \right)^{1 \slash 2} \right).
\ee
Bearing in mind that $\| \beta_\eps - \beta \|_{\infty} \leq  \mathfrak{a}  b_-^{1 \slash 2}$, with $\mathfrak{a}:=(r-1) (b_-^{1 \slash 2} \eps)$, and that $\| H_\eps \psi \| \leq (e_j+ d_{j,N}) b_- \| \psi \|$,
where we have set $e_j:=E_j \slash b_-$, it follows from \eqref{bb20a} that $t= \| W_\eps \psi \| \slash ( b_- \| \psi \|)$ is
a solution to the inequality
$$
t \leq \mathfrak{a}  \left( \mathfrak{a}  + 2 ( e_j + d_{j,N})^{1 \slash 2}  +  2 t^{1 \slash 2} \right).
$$
As a consequence, we have $\| W_{\eps} \psi \| \leq 2 \mathfrak{a} ( 2 \mathfrak{a}^{1 \slash 2} +( e_j + d_{j,N})^{1 \slash 4})^2 b_- \| \psi \|$,
which implies that
\bel{b24}
\| (H - E_j) \psi \| \leq \| W_\eps \psi \|+ \| (H_\eps -E_j) \psi \| \leq c_{j,N}(\mathfrak{a}) b_-\| \psi \|,
\ee
where
\bel{b24b}
c_{j,N}(\mathfrak{a}):=2 \mathfrak{a} ( 2 \mathfrak{a}^{1 \slash 2} +( e_j + d_{j,N})^{1 \slash 4})^2 +d_{j,N}.
\ee

\noindent
3. We next estimate $\| \xi\|$ and $\| \phi \|$. From \eqref{b24} and the definition of $\xi$, we have
\bel{b22}
\| \xi \| \leq \tilde{c}_{j,N}(\mathfrak{a}) \| \psi \| ,\ \tilde{c}_{j,N}:=c_{j,N}(\mathfrak{a}) \slash d_j,
\ee
since $\xi = \mathbb{P}(\Delta_j^c) (H - E_j)^{-1} (H - E_j) \psi$ and  $\| \mathbb{P}(\Delta_j^c) (H - E_j)^{-1} \| \leq 1 \slash (d_j b_-)$ as a bounded operator in $L^2(\re^2)$.
Further, $\phi$ and $\xi$ being orthogonal in $L^2(\re^2)$, we deduce from \eqref{b22} that
\bel{b22b}
\| \phi \|^2 = \| \psi \|^2 - \| \xi \|^2 \geq  ( 1 -  \tilde{c}_{j,N}(\mathfrak{a})^2 )  \| \psi \|^2.
\ee

\noindent
4. Applying these estimates to the current, we recall from \eqref{bb13a}-\eqref{bb13b} that
\bel{b23a}
J_{y,\eps}(\psi)  \geq J_{y}(\psi) - \mathfrak{a} b_-^{1 \slash 2} \| \psi \|^2,
\ee
and from \eqref{bb20} that
\bel{b24a}
J_{y}(\psi)  =  J_{y}(\phi)+ 2 \Pre{
\langle v_{y} \xi , \phi \rangle} + \langle v_{y} \xi, \xi \rangle \geq J_{y}(\phi) - \rho(\phi,\xi),
\ee
where
\bel{b25a}
\rho(\phi,\xi):=2 | \langle v_{y} \xi , \phi \rangle | + | \langle v_{y} \xi, \xi \rangle | \leq 3 \| v_{y} \xi  \| \| \psi \|.
\ee
Here we used once more the orthogonality of $\phi$ and $\xi$ in $L^2(\re^2)$.
Next, by applying Theorem \ref{lm-a3} and using \eqref{b22b}, we bound from below the first term in the right hand side of \eqref{b24a} as
\bel{b26a}
J_{y}(\phi) \geq c_j \delta_j^3  \left( \frac{r-1}{r^3} \right)  ( 1 -   \tilde{c}_{j,N}(\mathfrak{a})^2 )  b_-^{1 \slash 2}  \| \psi \|^2.
\ee

\noindent
5. The next step of the proof is to improve the upper bound  \eqref{b25a} on the remaining term $\rho(\phi,\xi)$. This can be achieved by noticing that
$$ \| v_{y} \xi \|^2 = \langle v_{y}^2 \xi , \xi \rangle \leq \langle H \xi , \xi \rangle \leq \langle H \xi , \psi \rangle \leq \langle  \xi , H \psi \rangle \leq \| \xi \| \|H \psi \|, $$
since
$\langle H \xi , \phi \rangle = \langle H \mathbb{P} (\Delta_j^c) \psi ,  \mathbb{P}(\Delta_j) \psi \rangle = \langle  \mathbb{P}(\Delta_j^c) H \psi ,  \mathbb{P}(\Delta_j) \psi \rangle = 0$,
and combining \eqref{b22} with the estimate $\|H \psi \| \leq  E_j \| \psi \| + \| (H-E_j)\psi \| \leq (e_j+ c_{j,N}(\mathfrak{a})) b_- \| \psi \|$
arising from \eqref{b24}-\eqref{b24b}. We get that
\bel{b27a}
\rho(\phi,\xi) \leq 3  \tilde{c}_{j,N}(\mathfrak{a})^{1 \slash 2} ( e_j+ c_{j,N}(\mathfrak{a}))^{1 \slash 2} b_-^{1 \slash 2} \| \psi \|^2.
\ee
Finally, putting \eqref{b23a}-\eqref{b24a} and \eqref{b26a}-\eqref{b27a} together, we end up getting
$$ J_{y,\eps}(\psi)  \geq F_{j,N}(\mathfrak{a}) b_-^{1 \slash 2} \|\psi \|^2,$$
where
$$ F_{j,N}(\mathfrak{a}) := c_j  \delta_j^3 \left( \frac{r-1}{r^3} \right)  (1- \tilde{c}_{j,N}(\mathfrak{a})^2)  - \left( 3  \tilde{c}_{j,N}(\mathfrak{a})^{1 \slash 2} (e_j+ c_{j,N} (\mathfrak{a}))^{1 \slash 2}+  \mathfrak{a} \right).$$
Finally, we take $N$ sufficiently large and $\mathfrak{a}$ sufficiently small so that
\beq\label{eq:condition-local1}
\tilde{c}_{j,N}(\mathfrak{a})^2+c_j^{-1} \delta_j^{-3} \left( \frac{r^3}{r-1} \right) \left( 3 \tilde{c}_{j,N}(\mathfrak{a})^{1 \slash 2}(e_j+ c_{j,N} (\mathfrak{a}))^{1 \slash 2}+  \mathfrak{a} \right) \leq \frac{1}{2}.
\eeq
This gives the lower bound \eqref{eq:current-pert1}. Note that $a = (r-1)(b_-^{-1/2} \epsilon)$ so condition
\eqref{eq:condition-local1} requires that $\epsilon < b_-^{-1/2}$.
\end{proof}

\subsubsection{Localization of the edge currents}

Continuing to consider $\beta_\epsilon$ as a perturbation of $\beta$, we are able to prove that the perturbed edge currents remain localized in a small neighborhood of $x=0$ under the hypothesis of Theorem \ref{pr-b1} that $\epsilon$ is small relative to $b_-^{-1/2}$.

\begin{theorem}
\label{pr-b2}
Let $b_-$, $r$, $n$, $j$, $\eps_j>0$ and $\Delta_{\eps,j}$, for some $\eps \in (0,\eps_j)$, be the same as in Theorem \ref{pr-b1}. Then for all $\varepsilon_1>0$ and $\varepsilon_2>0$ there
exists $b_j(\varepsilon_1,\varepsilon_2)>0$ such that any ${\rm L}^2(\re^2)$-normalized function $\psi = {\mathbb P}_{\eps}(\Delta) \psi$, where $\Delta$ is any subinterval of $\Delta_{\eps,j}$, satisfies
$$ \int_{\re^2} \chi_{I_{\varepsilon_1,\varepsilon_2}}(x) | \psi(x,y) |^2 {\rm d} x {\rm d} y
\geq  1 - \eta_j {\rm e}^{-\varepsilon_1^2 b_-^{2 \varepsilon_2} \slash 8}, $$
provided $b_- \geq b_j(\varepsilon_1,\varepsilon_2)$.
Here $\chi_{I_{\varepsilon_1,\varepsilon_2}}$ denotes the characteristic function of the interval $I_{\varepsilon_1,\varepsilon_2}:=
[-(1+\varepsilon_1) b_-^{-1 \slash 2+\varepsilon_2},(1+\varepsilon_1) b_+^{-1 \slash 2 + \varepsilon_2}]$, and
$\eta_j = 2 ( \pi  (2j-1))^{1/2}$, both as in Theorem \ref{lm-a4}.
\end{theorem}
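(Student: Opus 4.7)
The plan is to mimic the three-step proof of Theorem \ref{lm-a4}, now applied to the eigenfunctions $\psi_{\epsilon,j}(x,k)$ and band functions $\omega_{\epsilon,j}(k)$ of $h_\epsilon(k)$. The key structural observation is that outside the small interval $[-\epsilon,\epsilon]$ the magnetic field agrees with $b_\pm$, so $\beta_\epsilon$ is affine with slope $b_\pm$ there and $V_\epsilon(x,k)$ is again a shifted harmonic well. Moreover, since $\epsilon<\epsilon_j$ is forced by condition \eqref{eq:condition-local1} to scale like $b_-^{-1/2}$ while $(1+\varepsilon_1)b_\pm^{-1/2+\varepsilon_2}$ is of order $b_-^{-1/2+\varepsilon_2}$, we automatically have $[-\epsilon,\epsilon]\subset I_{\varepsilon_1,\varepsilon_2}$ for $b_-$ large, so the small window on which the perturbation lives is entirely absorbed into the target strip.

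For the first step I would prove $\omega_{\epsilon,j}^{-1}(\Delta)\subset(-b_-^{1/2+\varepsilon_2},b_+^{1/2+\varepsilon_2})$ once $b_-$ is large. Rather than invoking the coarse comparison of Lemma \ref{lm-b1} (whose bound is only of order $b_-$ and therefore insufficient), I would repeat the quasimode argument from Step~1 of the proof of Theorem \ref{lm-a4}, using the same trial state $\chi\psi_j^\pm(k_{\varepsilon_2})$ but testing it against $h_\epsilon(k_{\varepsilon_2})$. The additional error term $\|(V_\epsilon-V)(\cdot,k_{\varepsilon_2})\,\chi\,\psi_j^\pm(k_{\varepsilon_2})\|$ is controlled by $\|V_\epsilon-V\|_\infty\cdot\|\chi\psi_j^\pm(k_{\varepsilon_2})\|_{{\rm L}^2([-\epsilon,\epsilon])}$; since $\psi_j^\pm(\cdot,k_{\varepsilon_2})$ is a Gaussian centred at $k_{\varepsilon_2}/b_\pm=\mp b_\pm^{-1/2+\varepsilon_2}$, which lies outside $[-\epsilon,\epsilon]$ for $b_-$ large, this contribution is itself exponentially small in $b_-^{2\varepsilon_2}$ and the conclusion $\omega_{\epsilon,j}(k_{\varepsilon_2})<\inf\Delta_{\epsilon,j}$ follows as in the sharp case.

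For the second step I would establish pointwise Gaussian decay of $\psi_{\epsilon,j}(x,k)$ in the regions $\pm x\geq\pm x_{\epsilon,j}^\pm(\varepsilon_2)$, where $x_{\epsilon,j}^\pm(\varepsilon_2)$ is a slight perturbation of \eqref{a31} that accounts for the shift of the harmonic well centre $x_{\epsilon,k}^\pm:=(k-\beta_\epsilon(\pm\epsilon))/b_\pm\pm\epsilon$. Since $b_\epsilon(x)=b_\pm$ for $\pm x>\epsilon$, the potential in the decay region is exactly $V_\epsilon(x,k)=b_\pm^2(x-x_{\epsilon,k}^\pm)^2$ and $Q_{\epsilon,j}(x,k):=V_\epsilon(x,k)-\omega_{\epsilon,j}(k)$ is positive and monotone there. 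The Iwatsuka argument from Step~2 of the proof of Theorem \ref{lm-a4} (differentiating $(\psi_{\epsilon,j}'(x,k))^2-Q_{\epsilon,j}(x,k)\psi_{\epsilon,j}(x,k)^2$) then applies verbatim on the two half-lines $\pm x>\epsilon$ and yields the analogue of \eqref{a30}.

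The third step reproduces Step~3 of the proof of Theorem \ref{lm-a4}: the expansion $\hat\psi(x,k)=\chi_{\omega_{\epsilon,j}^{-1}(\Delta)}(k)\beta_{\epsilon,j}(k)\psi_{\epsilon,j}(x,k)$ is legitimate because the disjointness part of Proposition \ref{prop-a2}, combined with Lemma \ref{lm-b1} applied to $\Delta_{\epsilon,j}\subsetneq\Delta_j$, ensures $\omega_{\epsilon,l}^{-1}(\Delta)=\emptyset$ for $l\neq j$ once $b_-$ is large enough; integrating the pointwise Gaussian bound on $\psi_{\epsilon,j}(\cdot,k)^2$ over $\re\setminus I_{\varepsilon_1,\varepsilon_2}$ and invoking Parseval together with $\int_{\omega_{\epsilon,j}^{-1}(\Delta)}|\beta_{\epsilon,j}(k)|^2 dk=1$ then produces the claim. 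The main obstacle will be the bookkeeping in Step~2: one must check that the shift of the well centre $x_{\epsilon,k}^\pm$ compared with its $\epsilon=0$ counterpart $k/b_\pm$, which is of order $\epsilon=O(b_-^{-1/2})$, is dominated by the safety margin $\varepsilon_1 b_\pm^{-1/2+\varepsilon_2}$ so that the inequality $x_{\epsilon,j}^\pm(\varepsilon_2)\leq(1+\varepsilon_1)b_\pm^{-1/2+\varepsilon_2}$ still holds. This is a matter of taking $b_j(\varepsilon_1,\varepsilon_2)$ large enough that $b_-^{-1/2}\ll \varepsilon_1 b_-^{-1/2+\varepsilon_2}$, and the rest of the exponential-tail estimate then carries over unchanged with the same constant $\eta_j$.
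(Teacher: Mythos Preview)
Your three-step outline matches the paper's proof, and Steps~2 and~3 are essentially correct (your Step~2 is in fact more careful than the paper's, which glosses over the shift $\beta_\eps(\pm\eps)-b_\pm(\pm\eps)$ of the well centre). The issue is Step~1.

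You dismiss Lemma~\ref{lm-b1} as ``only of order $b_-$ and therefore insufficient'' and propose instead to run the quasimode directly against $h_\eps(k_{\varepsilon_2})$, bounding the extra term $\|(V_\eps-V)\chi\psi_j^-(k_{\varepsilon_2})\|$ by $\|V_\eps-V\|_\infty\cdot\|\chi\psi_j^-(k_{\varepsilon_2})\|_{{\rm L}^2([-\eps,\eps])}$. This step fails because $V_\eps-V$ is \emph{not} supported in $[-\eps,\eps]$: for $\pm x>\eps$ one has $\beta_\eps(x)-\beta(x)=\beta_\eps(\pm\eps)\mp b_\pm\eps$, a nonzero constant of size up to $(b_+-b_-)\eps$. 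Hence $(V_\eps-V)\chi\psi_j^-$ lives on the whole support of $\chi$, where $\psi_j^-(\cdot,k_{\varepsilon_2})$ has full mass, and the resulting error is of order $(r-1)(b_-^{1/2}\eps)\cdot b_-^{1+\varepsilon_2}$. To make this $<(\delta_j/2)b_-$ one would need $\eps=o(b_-^{-1/2-\varepsilon_2})$, strictly smaller than the $\eps_j\sim b_-^{-1/2}$ prescribed by Theorem~\ref{pr-b1}.

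The paper avoids this by doing exactly what you reject: it reuses the sharp-case quasimode to get $\omega_j(k_{\varepsilon_2})<(2j-1+\delta_j/2)b_-$ and then applies Lemma~\ref{lm-b1}. The point you miss is that the right side of \eqref{b10} is $(r-1)(b_-^{1/2}\eps)\bigl((r-1)(b_-^{1/2}\eps)+2((2j-1)r)^{1/2}\bigr)b_-$, so its coefficient is proportional to $\mathfrak a=(r-1)b_-^{1/2}\eps$ uniformly in $k$; taking $\mathfrak a$ small (possibly shrinking $\eps_j$ by a fixed factor) makes it $<(\delta_j/2)b_-$, whence $\omega_{\eps,j}(k_{\varepsilon_2})<(2j-1+\delta_j)b_-=\inf\Delta_j\le\inf\Delta$. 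Replace your Step~1 by this transfer and the rest of your argument goes through.
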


\begin{proof}
The proof is similar to the one of Theorem \ref{lm-a4}.
Setting $k_{\varepsilon_2}=-b_-^{1\slash 2 + \varepsilon_2}$ and arguing in the exact same way as
Step 1 in the proof of Theorem \ref{lm-a4} we find some $b_{j}(\varepsilon_2)>0$ such that ${\rm dist}(\sigma(h(k_{\varepsilon_2})),(2j-1)b_-) < (\delta_j \slash 2) b_-$, and thus
$(2j-1) b_- < \omega_j(k_{\varepsilon_2}) < (2j-1)b_- + (\delta_j / 2) b_-$,
for every $b \geq b_{j}(\varepsilon_2)$. This yields
\bel{b30}
b_- \geq b_{j}(\varepsilon_2) \Longrightarrow (2j-1) b_- < \omega_j(k) < (2j-1)b_- + \frac{\delta_j}{2} b_-,\ k \leq k_{\varepsilon_2},
\ee
by Lemma \ref{lm-a1}. Further, we choose $\tilde{\eps}_{j}=\tilde{\eps}_{j}(b_-)>0$ so small that the right hand side of
\eqref{b10b}, where $\left( \frac{2n+1}{2n-1} \right)^{1 \slash 2}$ and $\tilde{\eps}_{j}$ are respectively substituted for $r$ and $\eps$, is smaller than $(\delta_j \slash 2) b_-$. Then, due to \eqref{a12d} and Lemma \ref{lm-b1}, we deduce from
\eqref{b30} that
$$ b_- \geq b_{j}(\varepsilon_2) \Longrightarrow \omega_{\eps,j}(k) < (2j-1)b_- + \delta_j b_-,\ k \leq k_{\varepsilon_2},\ \eps \in (0,\tilde{\eps}_{j}),$$
hence
$$ b_- \geq b_{j}(\varepsilon_2) \Longrightarrow \omega_{\eps,j}^{-1}(\Delta_j) \subset (k_{\varepsilon_2},+\infty),\
\eps \in (0,\tilde{\eps}_{j}). $$
Now, doing the same with $k_{\varepsilon_2}=b_+^{1 \slash 2 + \varepsilon_2}$ we end up getting some constant $b_{j}(\varepsilon_2)>0$ such that
$$
b_- \geq b_{j}(\varepsilon_2) \Rightarrow ( \exists \tilde{\eps}_j=\tilde{\eps}_j(b_-)>0,\
\omega_{\eps,j}^{-1}(\Delta_j) \subset (-b_-^{1 \slash 2 + \varepsilon_2}, b_+^{1 \slash 2 + \varepsilon_2}),\ \eps \in (0,\tilde{\eps}_{j}) ).
$$
For the sake of simplicity, let us denote $\min( \eps_j ,\tilde{\eps}_j)$, where $\eps_j$ is the same as in Theorem \ref{pr-b1}, by $\eps_j$. Then, $\Delta$ being a subset of $\Delta_j$ for each $\eps \in (0,\eps_j)$, it follows readily from the above implication that
\bel{b32}
b_- \geq b_{j}(\varepsilon_2) \Longrightarrow
\omega_{\eps,j}^{-1}(\Delta_{\eps,j} \subset (-b_-^{1 \slash 2 + \varepsilon_2}, b_+^{1 \slash 2 + \varepsilon_2}),\ \eps \in (0,\eps_{j}).
\ee
Let us now fix $b_- \geq b_{j}(\varepsilon_2)$ and $\eps \in (0,\eps_{j} \slash b_-^{1 \slash 2})$. We notice that
$\pm x_j^{\pm}(\varepsilon_2) > \eps$, where $x_j^{\pm}$ is defined by \eqref{a31}, so we have
$V_{\eps}(x,k)=V(x,k)=(k-b_{\pm} x)^2$ for every $\pm x \geq \pm x_j^{\pm}$. From this and \eqref{b32} then follows for each $\eps \in (0,\eps_j)$ that
$$
|\psi_{\eps,j}(x,k)| \leq 2^{1 \slash 2} (2j-1)^{1 \slash 4} b_+^{1 \slash 4} {\rm e}^{-b_{\pm}(x-x_j^{\pm}(\varepsilon_2))^2 \slash 2},\ \pm x \geq \pm x_j^{\pm}(\varepsilon_2),\ k \in \omega_{\eps,j}^{-1}(\Delta_{\eps,j}),
$$
by just mimicking Step 2 in the proof of Theorem \ref{lm-a4}.

Having said that, the end of the proof now applies without change upon substituting $\psi$ (resp. $\omega_{\eps,j}$, $\psi_{\eps,j}$) for $\varphi$ (resp. $\omega_j$, $\psi_j$).
\end{proof}


\section{Perturbations of Iwatsuka Hamiltonians: Stability of edge currents}\label{sec:pert1}
\setcounter{equation}{0}

We now consider the perturbation of $H_\eps=H(A_\eps)$ defined in \eqref{b1}-\eqref{b2} by a magnetic potential
$a (x,y)=(a_1(x,y),a_2(x,y)) \in {\rm W}^{1,\infty}(\re^2)$ and a bounded scalar potential $q(x,y) \in {\rm L}^{\infty}(\re^2)$.
We prove that the lower bound on the edge current in Theorem \ref{pr-b1} is stable with respect to
these perturbations provided $\| a \|_{\infty}$ and $\|q \|_\infty$ are not too large
compared with $b_\pm$ in a sense to be made precise.

Prior to establishing this result we introduce some useful notation and
rigorously define the perturbed Hamiltonian under study. To this end we introduce
\bel{emp1}
W_\eps(a):=H(A_\eps+a)-H_\eps= - 2 a \cdot (i \nabla + A_\eps) - i (\nabla \cdot a) +  | a |^2,
\ee
where $ | a |^2$ is a shorthand for $a \cdot a$.
Since
$\|  (-i \nabla - A_\eps) \varphi \| = \langle H_\eps \varphi , \varphi \rangle^{1 \slash 2} \leq \lambda \| H_\eps \varphi  \| + \lambda^{-1} \| \varphi  \|$ for all $\varphi \in C_0^{\infty}(\re^2)$  and $ \lambda >0$, we have that
$$ \| W_\eps(a) \varphi \| \leq 2 \lambda \| a \|_{\infty} \| H_\eps \varphi \| + (\lambda^{-1} +  \|\nabla a \|_\infty + \| a \|_{\infty}^2) \| \varphi \|,\ \lambda >0,$$
by \eqref{emp1}. This guarantees that $W_\eps(a)$ is $H_\eps$-bounded with relative bound smaller than one provided $\lambda \in (0 , 1 \slash (2 \| a \|_{\infty}))$, so the operator $H(A_\eps+a)$ is selfadjoint in ${\rm L}^2(\re^2)$, with same domain as $H_\eps$ from \cite{RS}[Theorem X.12].  Moreover the same is true for $H(A_\eps+a,q):=H(A_\eps+a)+q$
since $q \in {\rm L}^{\infty}(\re^2)$.

Following the ideas of \S \ref{sec:edge-currents1} and \S \ref{sec:smooth1} we may now define the second component of the velocity operator associated to $H(A_\eps+a,q)$ as
\bel{emp1b}
v_{y,A_\eps +a,q}=v_{y,A_\eps+a} := \frac{i}{2} [ H(A_\eps+a,q),y]=v_{y,\eps} - a_2 ,
\ee
and the current carried by a quantum state $\psi$ as
\bel{emp1c}
J_{y,A_\eps+a,q}(\psi) := \langle v_{y, A_\eps+a} \psi, \psi \rangle.
\ee
Notice that we keep the subscript $q$ in the left hand side
of the identity \eqref{emp1c} although the $y$-component of the velocity is independent of $q$ according to \eqref{emp1b}, and $q$ is nowhere to be seen in the right hand side of \eqref{emp1c}. This is actually justified by the fact that the state $\psi$ we shall consider in the sequel is determined from $H(A_\eps+a,q)$ and thus depends on $q$, along with the current it carries.

\begin{theorem}
\label{thm-emp1}
Let $b_-$, $r$, $n$, $j$, $\delta_j$, $\eps_j$ and $\Delta_{\eps,j}$, for some fixed $\eps \in (0,\eps_j)$,  be the same as
in Theorem \ref{pr-b1}. Then there are three constants $a_*>0$, $q_*>0$ and
$d_*>0$, all of them being independent of $b_-$, such that for all $a \in {\rm W}^{1,\infty}(\re^2)$ obeying $(\| a \|_{\infty}^2 + \| \nabla a \|_{\infty})^{1 \slash 2} \leq a_* b_-^{1/2}$ and all $q \in {\rm L}^{\infty}(\re^2)$ with $\| q \|_{\infty} \leq q_* b_-$, the following estimate
\bel{emp1d}
J_{y,A_\eps+a,q}(\psi) \geq \frac{c_j}{4} \delta_j^3 \left( \frac{r-1}{r^3} \right) b_-^{1 \slash 2} \| \psi \|^2,\ \psi \in  \mathbb{P}_{A_\eps+a,q}(\Delta) {\rm L}^2(\re^2),
\ee
holds true for any subinterval $\Delta$ of $\Delta_{\eps,j}$ with same midpoint, satisfying $| \Delta| \leq d_* b_-$.
Here $\mathbb{P}_{A_\eps+a,q}(I)$ is the spectral projection of $H(A_\eps+a,q)$ for the Borel set $I \subset \re$, and $c_j$ is the constant introduced in Theorem
\ref{lm-a3}.
\end{theorem}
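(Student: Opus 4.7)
The plan is to reproduce the strategy used in the proof of Theorem \ref{pr-b1}, now treating $H(A_\eps+a,q)$ as a perturbation of $H_\eps$ rather than $H_\eps$ as a perturbation of $H$. Setting $\widetilde W := H(A_\eps+a,q) - H_\eps = W_\eps(a) + q$, the first task would be to establish a relative bound
\begin{equation*}
\|\widetilde W\psi\| \leq C_1(a_*,q_*,d_*)\, b_-\,\|\psi\|,\quad \psi = \mathbb{P}_{A_\eps+a,q}(\Delta)\psi,
\end{equation*}
with $C_1 \to 0$ as $a_*, q_*, d_* \to 0$. This would follow from \eqref{emp1}, the identity $\|(-{\rm i}\nabla - A_\eps)\varphi\|^2 = \langle H_\eps\varphi,\varphi\rangle$, and the uniform energy bound $\|H(A_\eps+a,q)\psi\| \leq (E_j+|\Delta|/2)\|\psi\|$, by solving a quadratic inequality in $\|\widetilde W\psi\|/\|\psi\|$ exactly analogous to the one in Step 2 of the proof of Theorem \ref{pr-b1}.

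Next I would split $\psi = \phi + \xi$, with $\phi := \mathbb{P}_\eps(\Delta_{\eps,j})\psi$ and $\xi := \mathbb{P}_\eps(\re\setminus\Delta_{\eps,j})\psi$. Using that $\Delta_{\eps,j}$ is centered at $E_j$ with length proportional to $b_-$, the bound on $\widetilde W$ combined with the functional-calculus estimate $\|\mathbb{P}_\eps(\re\setminus\Delta_{\eps,j})(H_\eps-E_j)^{-1}\| \leq 2/|\Delta_{\eps,j}|$ would give $\|\xi\| \leq \kappa\|\psi\|$ for some $\kappa$ depending on $a_*, q_*, d_*$ but not on $b_-$, with $\kappa \to 0$ as these parameters shrink. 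In particular $\|\phi\|^2 \geq (1-\kappa^2)\|\psi\|^2$, so Theorem \ref{pr-b1} applied to $\phi \in \mathbb{P}_\eps(\Delta_{\eps,j}){\rm L}^2(\re^2)$ would yield the leading term
\begin{equation*}
J_{y,\eps}(\phi) \geq \frac{c_j}{2}\delta_j^3 \frac{r-1}{r^3}\, b_-^{1/2}(1-\kappa^2)\|\psi\|^2.
\end{equation*}

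I would then use \eqref{emp1b} to decompose
\begin{equation*}
J_{y,A_\eps+a,q}(\psi) = J_{y,\eps}(\phi) + 2\,\mathrm{Re}\,\langle v_{y,\eps}\xi,\phi\rangle + \langle v_{y,\eps}\xi,\xi\rangle - \langle a_2\psi,\psi\rangle.
\end{equation*}
The orthogonality $\langle H_\eps\xi,\phi\rangle = 0$ together with the estimate $\|v_{y,\eps}\xi\|^2 \leq \langle H_\eps\xi,\xi\rangle \leq \|\xi\|\|H_\eps\psi\|$, used just as in Theorem \ref{pr-b1}, would control the two middle terms by $O(\kappa^{1/2}b_-^{1/2}\|\psi\|^2)$, while the last term is bounded by $a_* b_-^{1/2}\|\psi\|^2$. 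Choosing $a_*, q_*, d_*$ small enough to absorb all error terms into half of the main contribution would yield \eqref{emp1d}.

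The main obstacle will be keeping $a_*, q_*, d_*$ independent of $b_-$. This requires tracking carefully that every error term scales at worst as $b_-^{1/2}\|\psi\|^2$ with a coefficient controlled by $a_*, q_*, d_*$ alone, which in turn hinges on (i) the scaling hypotheses $(\|a\|_\infty^2 + \|\nabla a\|_\infty)^{1/2} \leq a_* b_-^{1/2}$ and $\|q\|_\infty \leq q_* b_-$ being precisely calibrated against the $b_-$-scaling of $\|H_\eps\psi\|$ on spectral subspaces, and (ii) $|\Delta_{\eps,j}|$ being comparable to $b_-$ so that the ratio appearing in the bound on $\|\xi\|$ remains bounded uniformly in $b_-$. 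Both points follow directly from \eqref{b9} and the definition of $\Delta_j$ in Proposition \ref{prop-a2}.
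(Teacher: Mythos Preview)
Your proposal is correct and follows essentially the same route as the paper: the same decomposition $\psi=\phi+\xi$ with $\phi=\mathbb{P}_\eps(\Delta_{\eps,j})\psi$, the same relative bound on $\widetilde W=W_\eps(a)+q$ via a quadratic inequality, the same functional-calculus estimate on $\|\xi\|$, the same application of Theorem~\ref{pr-b1} to $\phi$, and the same control of the cross terms through $\langle H_\eps\xi,\phi\rangle=0$ and $\|v_{y,\eps}\xi\|^2\leq\|\xi\|\|H_\eps\psi\|$. The only cosmetic difference is that the paper parameterizes the size of $\Delta$ by $d_j/N$ with $N$ large rather than by $d_*$ directly, but the arguments are otherwise identical, including your closing remarks on the $b_-$-independence of the constants.
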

\noindent
{\bf Proof.}
The proof is similar to the one of Theorem \ref{pr-b1}.
Put $d_{j}:=| \Delta_{\eps,j} |\slash (2 b_-)$ and $d_{j,N}:=d_{j} \slash N$, for some $N \geq 1$. Further, introduce the set $\Delta_{j,N}:=(E_j - d_{j,N} b_-,E_j + d_{j,N} b_-)$, where $E_j$ is the center of $\Delta_{\eps,j}$, and decompose $\psi = \mathbb{P}_{A_\eps+a,q}(\Delta_{j,N}) \psi$ into the sum
\bel{emp2}
\psi = \phi + \xi,\ \phi:=\mathbb{P}_\eps(\Delta_{\eps,j}) \psi,\ \xi:=  \mathbb{P}_\eps(\Delta_{\eps,j}^c) \psi,
\ee
where $\Delta_{\eps,j}^c=\re \setminus \Delta_{\eps,j}$.
Since $ \| (i \nabla +A_\eps) \psi \|  = \langle H_\eps \psi , \psi \rangle^{1 \slash 2} \leq \| H_\eps \psi \|^{1 \slash 2} \|\psi \|^{1 \slash 2}$ and
$$\| H_\eps \psi \| \leq \| H(A_\eps+a,q) \psi \| + \|q \|_{\infty} \| \psi \| + \|W_\eps(a) \psi \|
\leq  (e_j+d_{j,N}+\mathfrak{q}) b_- \| \psi \| + \|W_\eps(a) \psi \|, $$
with $e_j:=E_j \slash b_-$ and $\mathfrak{q}:= \| q \|_\infty \slash b_-$,
we deduce from \eqref{emp1} that
$$ \frac{\| W_\eps(a) \psi \|}{b_- \| \psi \|} \leq \mathfrak{a} \left(  \mathfrak{a} + 2 (e_j+d_{j,N}+\mathfrak{q})^{1 \slash 2} + 2 \left( \frac{\| W_\eps(a) \psi \|}{b_- \| \psi \|} \right)^{1 \slash 2} \right), $$
where $\mathfrak{a}:= (\| a \|_\infty^2 + \| \nabla a \|_{\infty} )^{1 \slash 2} \slash b_-^{1 \slash 2}$.
This entails
$\| W_\eps(a) \psi \|  \leq  2 \mathfrak{a} ( 2 \mathfrak{a}^{1 \slash 2} +  (e_j + d_{j,N} + \mathfrak{q} )^{1 \slash 4} )^2 b_- \| \psi \|$
and thus
\bel{emp3a}
\| (H_\eps - E_j) \psi \| \leq \| (W_\eps(a)+q) \psi \|+ \| (H(A_\eps + a,q) -E_j) \psi \| \leq  c_{j,N}(\mathfrak{a},\mathfrak{q}) b_- \| \psi \|,
\ee
where
\bel{emp3b}
c_{j,N}(\mathfrak{a},\mathfrak{q}):=2 \mathfrak{a} ( 2 \mathfrak{a}^{1 \slash 2} + (e_j + d_{j,N} + \mathfrak{q} )^{1 \slash 4} )^2+ d_{j,N} + \mathfrak{q}.
\ee
As a consequence  we have
\bel{emp5}
\| \xi \| \leq \tilde{c}_{j,N}(\mathfrak{a},\mathfrak{q}) \| \psi \|,\ \tilde{c}_{j,N}(\mathfrak{a},\mathfrak{q}):= c_{j,N}(\mathfrak{a},\mathfrak{q}) \slash d_j,
\ee
since $\xi = \mathbb{P}_\eps(\Delta_{\eps,j}^c) (H_\eps - E_j)^{-1} (H_\eps - E_j) \psi$ and  $\| \mathbb{P}_\eps(\Delta_{\eps,j}^c) (H_\eps - E_j)^{-1} \| \leq 1 \slash (d_j b_-)$ as a bounded operator in $L^2(\re^2)$.
Moreover, $\phi$ and $\xi$ being orthogonal in $L^2(\re^2)$, it follows from \eqref{emp5} that
\bel{emp5a}
\| \phi \|^2 = \| \psi \|^2 - \| \xi \|^2 \geq ( 1 -  \tilde{c}_{j,N}(\mathfrak{a},\mathfrak{q})^2)  \| \psi \|^2.
\ee
Now recall from \eqref{emp1b}-\eqref{emp1c} that
\bel{emp3}
J_{y,A_\eps+a,q}(\psi) = J_{y,\eps}(\psi)- \langle a_2 \psi , \psi \rangle \geq J_{y,\eps}(\psi) - \mathfrak{a} b_-^{1 \slash 2} \| \psi \|^2,
\ee
and from \eqref{emp2} that
\bel{emp4}
J_{y,\eps}(\psi)  =  J_{y,\eps}(\phi)+ 2 \Pre{
\langle v_{y,\eps} \xi , \phi \rangle} + \langle v_{y,\eps} \xi, \xi \rangle \geq J_{y,\eps}(\phi) - \rho(\phi,\xi),
\ee
where
\bel{emp4d}
\rho(\phi,\xi):=2 | \langle v_{y,\eps} \xi , \phi \rangle | + | \langle v_{y,\eps} \xi, \xi \rangle | \leq 3 \| v_{y,\eps} \xi  \| \| \psi \|.
\ee
Here we used once more the orthogonality of $\phi$ and $\xi$ in $L^2(\re^2)$.

Further, applying Theorem \ref{pr-b1} and using \eqref{emp5a}, the first term in the right hand side of \eqref{emp4} is bounded from below as
\bel{emp4c}
J_{y,\eps}(\phi) \geq \frac{c_j}{2} \delta_j^3 \left( \frac{r-1}{r^3} \right)  ( 1 -  \tilde{c}_{j,N}(\mathfrak{a},\mathfrak{q})^2 ) b_-^{1 \slash 2} \| \psi \|^2.
\ee
The next step of the proof is to improve the upper bound  \eqref{emp4d} on the remaining term $\rho(\phi,\xi)$. This can be achieved by noticing that
$$ \| v_{y,\eps} \xi \|^2 = \langle v_{y,\eps}^2 \xi , \xi \rangle \leq \langle H_\eps \xi , \xi \rangle \leq \langle H_\eps \xi , \psi \rangle \leq \langle  \xi , H_{\eps} \psi \rangle \leq \| \xi \| \|H_\eps \psi \|, $$
since
$\langle H_\eps \xi , \phi \rangle = \langle H_\eps \mathbb{P}_\eps (\Delta_{\eps,j}^c) \psi ,  \mathbb{P}_\eps(\Delta_{\eps,j}) \psi \rangle = \langle  \mathbb{P}_\eps(\Delta_{\eps,j}^c) H_\eps \psi ,  \mathbb{P}_\eps(\Delta_{\eps,j}) \psi \rangle = 0$,
and combining \eqref{emp5} with the estimate $\|H_\eps \psi \| \leq E_j \|\psi \| + \| (H_\eps-E_j) \psi \| \leq (e_j+c_{j,N}(\mathfrak{a},\mathfrak{q}) )  b_-\| \psi \|$
arising from \eqref{emp3a}-\eqref{emp3b}. We get
$\| v_{y,\eps} \xi \| \leq \| H_\eps \psi \|^{1 \slash 2} \| \xi \|^{1 \slash 2} \leq \tilde{c}_{j,N}(\mathfrak{a},\mathfrak{q})^{1 \slash 2} (e_j+c_{j,N}(\mathfrak{a},\mathfrak{q}))^{1 \slash 2} b_-^{1 \slash 2}\| \psi \|$, hence
\bel{emp8}
\rho(\phi,\xi) \leq 3 \tilde{c}_{j,N}(\mathfrak{a},\mathfrak{q})^{1 \slash 2} (e_j+c_{j,N}(\mathfrak{a},\mathfrak{q}))^{1 \slash 2} b_-^{1 \slash 2} \| \psi \|^2,
\ee
by \eqref{emp4d}.
Finally, putting \eqref{emp3}-\eqref{emp4} and \eqref{emp4c}-\eqref{emp8} together, we end up getting that
$$ J_{y,A_\eps+a,q} \geq F_{j,N}(\mathfrak{a},\mathfrak{q})b_-^{1 \slash 2} \|\psi \|^2,$$
where
$$ F_{j,N}(\mathfrak{a},\mathfrak{q}):= \frac{c_j}{2}\delta_j^3 \left( \frac{r-1}{r^3} \right) ( 1 -  \tilde{c}_{j,N}(\mathfrak{a},\mathfrak{q})^2 )   - 3 \tilde{c}_{j,N}(\mathfrak{a},\mathfrak{q})^{1 \slash 2} (e_j+c_{j,N}(\mathfrak{a},\mathfrak{q}))^{1 \slash 2} - \mathfrak{a}. $$
Last, taking $1 \slash N$, $\mathfrak{a}$ and $\mathfrak{q}$ so small that
$$ \tilde{c}_{j,N}(\mathfrak{a},\mathfrak{q})^2   + 2 c_j^{-1} \delta_j^{-3} \left( \frac{r^3}{r-1}  \right) (3 \tilde{c}_{j,N}(\mathfrak{a},\mathfrak{q})^{1 \slash 2} (e_j+c_{j,N}(\mathfrak{a},\mathfrak{q}))^{1 \slash 2} +  \mathfrak{a} )\leq \frac{1}{2},
$$
we obtain the desired result.
$\Box$

In light of \eqref{emp1b}-\eqref{emp1c}, the inequality  \eqref{emp1d}, may be equivalently rephrased as the following Mourre estimate
\bel{emp7}
 \mathbb{P}_{A_\eps+a,q}(\Delta) i [H(A_\eps+a,q), y ] \mathbb{P}_{A_\eps+a,q}(\Delta) \geq \frac{c_j}{4} \delta_j^3 \left( \frac{r-1}{r^3} \right) b_-^{1 \slash 2} \mathbb{P}_{A_\eps+a,q}(\Delta).
\ee
Moreover $y$ is a bona-fide conjugate operator for the magnetic operator $H(A_\eps+a,q)$ in the sense of Mourre since $(i/2)[H(A_\eps+a,q),y]=(v_{\eps,y} - a_2)$ is bounded from the domain of $H$ to $L^2(\re^2) $, and  the double commutator $i[i[H(A_\eps+a,q), y ],y]=2$. Therefore, in view of \eqref{emp7} and \cite{CFKS}[Corollary 4.10] we have obtained the:

\begin{follow}
\label{cor-acs}
Under the conditions, and with notations, of Theorem \ref{thm-emp1}, the spectrum of the operator $H(A_\eps+a,q)$, $\eps \in (0,\eps_j)$,  is purely absolutely continuous in the interval $\Delta$:
$$\sigma(H(A_\eps+a,q)) \cap \Delta = \sigma_{ac}(H(A_\eps+a,q)) \cap \Delta.$$
\end{follow}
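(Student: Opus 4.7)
The plan is to deduce the corollary from the bound \eqref{emp1d} of Theorem \ref{thm-emp1} by recasting it as a Mourre estimate with conjugate operator $A = y$, and then invoking the standard Mourre theory result \cite{CFKS}[Corollary 4.10].

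First, I would restate \eqref{emp1d} in operator form. By the definitions \eqref{emp1b}--\eqref{emp1c} of the velocity operator and of the current, the inequality \eqref{emp1d} is equivalent, after testing against $\psi = \mathbb{P}_{A_\eps+a,q}(\Delta) \varphi$ for arbitrary $\varphi \in L^2(\re^2)$, to the operator inequality
\bel{eq:mourre-plan}
\mathbb{P}_{A_\eps+a,q}(\Delta)\,i[H(A_\eps+a,q),y]\,\mathbb{P}_{A_\eps+a,q}(\Delta) \geq \frac{c_j}{2}\delta_j^3\left(\frac{r-1}{r^3}\right) b_-^{1/2}\,\mathbb{P}_{A_\eps+a,q}(\Delta),
\ee
which is precisely \eqref{emp7}. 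This is a strict Mourre inequality on $\Delta$ with strictly positive constant and no compact error term.

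Second, I would check the regularity hypotheses needed to apply Mourre theory with $A := y$. Since $(i/2)[H(A_\eps+a,q),y] = v_{y,\eps} - a_2 = p_y - \beta_\eps(x) - a_2(x,y)$, and since $\| (p_y - \beta_\eps(x))\varphi\| \le \langle H_\eps \varphi,\varphi\rangle^{1/2} \le \lambda \|H_\eps \varphi\| + \lambda^{-1}\|\varphi\|$, one sees that the first commutator extends from a form on $C_0^\infty(\re^2)$ to a bounded operator from $D(H(A_\eps+a,q)) = D(H_\eps)$ to $L^2(\re^2)$; here I use that $a_2 \in L^\infty$. The double commutator is trivial: $i[i[H(A_\eps+a,q),y],y] = i[-2(p_y-\beta_\eps -a_2),y] = 2$, a bounded operator on $L^2(\re^2)$. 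Hence $H(A_\eps+a,q)$ is of class $C^2(y)$ in the Mourre sense on the appropriate core, and both the commutators $[H,y]$ and $[[H,y],y]$ satisfy the standard extension/boundedness conditions with respect to the powers of $H$.

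Third, I would apply \cite{CFKS}[Corollary 4.10] (the Mourre virial theorem and its consequence on absolute continuity of the spectrum): the existence of a conjugate operator $A=y$ satisfying the regularity above, together with the strict Mourre estimate \eqref{eq:mourre-plan} on the open interval $\Delta$, implies that $H(A_\eps+a,q)$ has no singular continuous spectrum on $\Delta$ and no eigenvalues on $\Delta$; equivalently,
\[
\sigma(H(A_\eps+a,q)) \cap \Delta = \sigma_{ac}(H(A_\eps+a,q)) \cap \Delta,
\]
which is the claim. The only potential obstacle is checking the $C^2(y)$-regularity hypothesis in the precise form required by the cited corollary, but because the second commutator reduces to the scalar $2$ and the first commutator is $H$-bounded uniformly in the perturbations (using $\|a\|_\infty,\|\nabla a\|_\infty,\|q\|_\infty$ controlled by $b_-$ as in Theorem \ref{thm-emp1}), this verification is routine and does not require further structural assumptions on $a$ or $q$.
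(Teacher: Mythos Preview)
Your proposal is correct and follows essentially the same approach as the paper: recast \eqref{emp1d} as the Mourre estimate \eqref{emp7} with conjugate operator $y$, verify that the first commutator $(i/2)[H(A_\eps+a,q),y]=v_{\eps,y}-a_2$ is $H$-bounded and that the double commutator equals $2$, and then invoke \cite{CFKS}[Corollary 4.10]. The only cosmetic difference is the constant in your \eqref{eq:mourre-plan} ($c_j/2$ versus the paper's $c_j/4$ in \eqref{emp7}); your factor is actually the one consistent with $i[H,y]=2v_{y,A_\eps+a}$, but this has no bearing on the argument.
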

The existence of edge currents for energies in a suitable subinterval of $\R_+$ is thus equivalent to the existence of purely absolutely continuous spectrum
for $H(A_\eps+a,q) $ in the corresponding interval.


\section{Persistence of edge currents in time: Asymptotic velocity}\label{sec:time-behavior1}
\setcounter{equation}{0}

We investigate the time evolution of the current under the unitary evolution groups generated by the Iwatsuka and perturbed Iwatsuka Hamiltonians. The general situation we address is the following. Let $H$ be a self-adjoint Schr\"odinger operator on $L^2 (\R^2)$.
          This operator generates the unitary time evolution group $U(t) = e^{-itH}$. Let $v_y := (i/2) [ H, y]$ be the $y$-component of the velocity operator. We are interested in evaluating the asymptotic time behavior of $\langle U(t) \varphi, v_y U(t) \varphi \rangle$ as
$t \rightarrow \pm \infty$.



The lower bounds on the edge currents for the two unperturbed models, the sharp and smooth Iwatsuka models, are valid for all times.
It we replace $v_y$ in \eqref{eq:lower-bound1} by $v_y (t) := e^{itH}v_ye^{-itH}$, then the lower bound remains valid since the state $\varphi (t) := U(t) \varphi$ satisfies $\mathbb{P} (\Delta_j) \varphi(t) = \varphi (t)$ for all time.
Similarly, if we replace $v_{\epsilon,y}$ in \eqref{eq:current-pert1} by its time evolved current $v_{\epsilon,y}(t)$
using the operator $U_\epsilon (t) = e^{-itH_\epsilon}$, then the lower bound in \eqref{eq:current-pert1} remains valid for all time.

Perturbed Hamiltonians $H(A_\epsilon+a,q)$ were treated in section \ref{sec:pert1}. Theorem \ref{thm-emp1} states that if the ${\rm L}^\infty$-norms of $a_j^2$, $\nabla a_j$, for $j=1,2$, and of $q$ are small relative to $b_-$, then the edge current $J_{y, A_\epsilon +a,q}(\psi)$ is bounded from below for all $\psi \in \mathbb{P}_{A_\epsilon+a,q} (\Delta) {\rm L}^2(\R^2)$, where $\Delta \subset \Delta_{\epsilon,j}$, with $\Delta_{\epsilon,j}$ as defined in Theorem \ref{pr-b1}. By the same reasoning as above, the same lower bound holds for the time-evolved edge current $\langle \psi, v_{\epsilon, A_\epsilon+a,q} (t) \rangle$ for all time.
The boundedness of $a_j$, $\nabla a_j$, and of $q$ is rather restrictive. From the form of the current operator in \eqref{emp1b}, it would appear that only $\|a_2\|_\infty$ needs to be controlled. We prove here that if we limit the support of the perturbation $(a_1, a_2, q)$ to a strip of arbitrary width $R$ in the $y$-direction, and require only that $\| a_2 \|_\infty$ be small relative to $b_-^{1/2}$,
then the {\it asymptotic velocity} associated with energy intervals $\Delta \subset \Delta_{\epsilon,j}$ and the perturbed Hamiltonian $H(A_\epsilon +a,q)$ exists and satisfies the same lower bound as in \eqref{emp1d}. Furthermore, the spectrum in $\Delta$ is absolutely continuous. This means that the edge current is stable with respect to a different class of perturbations than in Theorem \ref{thm-emp1}.

The \emph{asymptotic velocity} associated with a pair of self-adjoint operators $(H_0, H_1)$ is defined in terms of the local wave operators for the pair, see, for example \cite[section 4.5--4.6]{DG}. The local wave operators $\Omega_\pm (\Delta)$ for an energy interval $\Delta \subset \R$ are defined as the strong limits:
\beq\label{eq:local-wo1}
\Omega_\pm (\Delta ) = s-\lim_{t \rightarrow \pm \infty} e^{i t H_1} e^{-i t H_0} \mathbb{P}_{0,ac} (\Delta),
\eeq
where $\mathbb{P}_{0,ac} (\Delta)$ is the spectral projector for the absolutely continuous subspace of $H_0$ associated with the interval $\Delta$.
For any $\varphi$, we define the \emph{asymptotic velocity} $V_y^\pm (\Delta)$ of the state $\varphi$ by
$$
\langle \varphi,V_y^\pm (\Delta) \varphi \rangle  := \langle \varphi, \Omega_\pm (\Delta) v_y \Omega_\pm (\Delta)^* \varphi \rangle .
$$
In the case that $H_0$ commutes with $v_y$, it is easily seen from the definition \eqref{eq:local-wo1} that
$$
\langle \varphi,V_y^\pm (\Delta) \varphi \rangle  := \lim_{t \rightarrow \pm \infty} \langle \varphi, e^{itH_1} \mathbb{P}_{0,ac} (\Delta) v_y
\mathbb{P}_{0,ac} (\Delta) e^{-itH_1}  \varphi \rangle.
$$


Our main result is the existence of the asymptotic velocity in the $y$-direction for the perturbed operators $H(A_\epsilon+a,q)$
of section \ref{sec:pert1} under less stringent conditions on the norm of the perturbation.
We prove that the asymptotic velocity satisfies the lower bound given in \eqref{emp7} provided the perturbations have compact support in the $y$-direction and that $\| a_2 \|_\infty \leq a_* b_-^{1/2}$. For notational simplicity, we write $H_0 = H_\epsilon= H(A_\epsilon)$, for $\epsilon \geq 0$ and $H_1 = H(A_\eps+a,q)$. It follows from Lemma \ref{lm-b1} that the spectrum of $H_0 = H_\epsilon$ is purely absolutely continuous if $\epsilon \geq 0$ is sufficiently small.
We will write $\mathbb{P}_0 (\Delta)$ for $P_{0, ac}(\Delta)$ because of this.

\begin{theorem}\label{th:asymptotic1}
Let $b_-$, $r$, $n$, $j$, $\delta_j$, $c_j$, $\eps_j$ and $\Delta_{\eps,j}$, for some fixed $\eps \in (0,\eps_j)$,  be the same as
in Theorem \ref{pr-b1}. Suppose that the perturbation $a \in {\rm W}^{1,\infty}(\re^2)$ and $q \in {\rm L}^{\infty}(\re^2)$ have their support in the set $\{ (x,y) ~|~
 |y| < R \}$, for some $0 < R < \infty$.
In addition, suppose that
the perturbation $a_2$ satisfies $\| a_2 \|_\infty \leq a_* b_-^{1/2}$, where $a_* < \frac{c_j}{4} \delta_j^3 \left( \frac{r-1}{r^3} \right)$.
Consider any subinterval $\Delta \subset \Delta_{\eps,j}$ with same midpoint. 
Then for any $\varphi \in {\rm Ran} ~ \mathbb{P}_{A_\eps+a,q}(\Delta)$, we have
$$
 \langle \varphi, V_y^\pm (\Delta) \varphi \rangle \geq \frac{c_j}{4} \delta_j^3 \left( \frac{r-1}{r^3} \right) b_-^{1 \slash 2} \| \varphi \|^2 .
$$
 \end{theorem}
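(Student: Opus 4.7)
The strategy is local scattering theory: show that the local wave operators $\Omega_\pm(\Delta) = s-\lim_{t\to\pm\infty} e^{itH_1}e^{-itH_0}\mathbb{P}_0(\Delta)$, with $H_0 = H_\epsilon$ and $H_1 = H(A_\epsilon+a,q)$, exist and are complete, so that $\Omega_\pm(\Delta)^*$ is an isometry from $\mathrm{Ran}\,\mathbb{P}_{A_\epsilon+a,q}(\Delta)$ onto $\mathrm{Ran}\,\mathbb{P}_0(\Delta)$, and then apply Theorem \ref{pr-b1} to $\tilde\varphi := \Omega_\pm(\Delta)^*\varphi$. Existence follows from Cook's method: the perturbation $W := H_1 - H_0 = W_\epsilon(a) + q$ (cf.\ \eqref{emp1}) is a first-order differential operator whose coefficients are supported in the strip $S_R := \{|y|<R\}$, so it suffices to check
\[
\int_0^\infty \|W e^{-itH_\epsilon}\varphi\|\,dt<\infty
\]
for $\varphi$ in a dense subset of $\mathrm{Ran}\,\mathbb{P}_0(\Delta)$. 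Since $y$ is a conjugate operator for $H_\epsilon$ with $(i/2)[H_\epsilon,y] = v_{\epsilon,y}$, and Theorem \ref{pr-b1} provides the Mourre estimate $\mathbb{P}_0(\Delta) v_{\epsilon,y} \mathbb{P}_0(\Delta) \geq C b_-^{1/2}\mathbb{P}_0(\Delta)$, standard Mourre propagation estimates yield $\|\chi_{S_R}(y)e^{-itH_\epsilon}\varphi\| = O(\langle t\rangle^{-N})$ on suitable weighted subspaces, delivering the required integrability once the first-order part of $W$ is accounted for.

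Completeness follows by the symmetric Cook argument with $H_0, H_1$ swapped. Here Theorem \ref{thm-emp1} does not apply directly, since our hypothesis only controls $\|a_2\|_\infty$ rather than the full $\mathrm{W}^{1,\infty}$ norm of $a$ together with $\|q\|_\infty$; instead, one exploits the compact support of the perturbation ($H_1 = H_0$ outside $S_R$) and uses geometric resolvent identities or a spatial partition of unity to transfer the Mourre estimate from $H_0$ to $H_1$ on $\mathrm{Ran}\,\mathbb{P}_1(\Delta)$. Once existence and completeness are in hand, $\Omega_\pm(\Delta)$ intertwines the spectral projectors and $\Omega_\pm(\Delta)^*$ is isometric from $\mathrm{Ran}\,\mathbb{P}_{A_\epsilon+a,q}(\Delta)$ into $\mathrm{Ran}\,\mathbb{P}_0(\Delta)$.

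Taking $v_y = v_{y,A_\epsilon+a} = v_{\epsilon,y} - a_2$ in the definition $V_y^\pm(\Delta) = \Omega_\pm(\Delta) v_y \Omega_\pm(\Delta)^*$, we decompose
\[
\langle\varphi, V_y^\pm(\Delta)\varphi\rangle = \langle\tilde\varphi, v_{\epsilon,y}\tilde\varphi\rangle - \langle\tilde\varphi, a_2\tilde\varphi\rangle \geq \frac{c_j}{2}\delta_j^3\left(\frac{r-1}{r^3}\right)b_-^{1/2}\|\varphi\|^2 - \|a_2\|_\infty\|\varphi\|^2,
\]
where the first inequality applies Theorem \ref{pr-b1} to $\tilde\varphi \in \mathrm{Ran}\,\mathbb{P}_0(\Delta)$ together with the isometry $\|\tilde\varphi\| = \|\varphi\|$. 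The assumption $\|a_2\|_\infty \leq a_* b_-^{1/2}$ with $a_* < \frac{c_j}{4}\delta_j^3 (r-1)/r^3$ then yields the claimed lower bound. The hard part is the transfer of the Mourre estimate and wave-operator completeness to $H_1$ under the weaker perturbative hypothesis: this must exploit the compact support of $(a,q)$ in $y$ rather than the smallness of the full perturbation norms as in Theorem \ref{thm-emp1}, for example by localizing the failure of those hypotheses to $S_R$ and using the free dynamics outside.
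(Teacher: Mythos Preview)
Your overall architecture---local wave operators, intertwining, reduction to $\tilde\varphi=\Omega_\pm(\Delta)^*\varphi\in\mathrm{Ran}\,\mathbb{P}_0(\Delta)$, then Theorem~\ref{pr-b1} plus the bound on $\|a_2\|_\infty$---is exactly the paper's route, and your final chain of inequalities matches the paper's computation. The substantive difference is in how the Cook integrability $\int_0^\infty\|W e^{-itH_\epsilon}\mathbb{P}_0(\Delta)\varphi\|\,dt<\infty$ is obtained: the paper does not invoke abstract Mourre propagation but instead exploits the explicit fiber decomposition of $H_\epsilon$, writing $e^{-itH_\epsilon}\mathbb{P}_0(\Delta)\varphi$ as an oscillatory integral with phase $ky-t\omega_{\epsilon,j}(k)$ and using the strict lower bound $\omega_{\epsilon,j}'(k)\geq Cb_-^{1/2}$ on $\omega_{\epsilon,j}^{-1}(\Delta)$ (from Theorem~\ref{pr-b1}) to run stationary phase and get $\langle t\rangle^{-N}$ decay on the strip $|y|<2R$. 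This is more concrete and avoids the machinery you cite; your Mourre-propagation argument would also work, but is heavier than necessary given the translation invariance of $H_\epsilon$.

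On completeness you are more careful than the paper. The paper's lemma only proves existence of $\Omega_\pm(\Delta)$ via the one-sided Cook estimate and then, in the proof of the theorem, simply asserts $\|\Omega_\pm(\Delta)^*\mathbb{P}_1(\Delta)\varphi\|=\|\varphi\|$ ``since the local wave operators are local partial isometries,'' which tacitly uses $\mathrm{Ran}\,\mathbb{P}_1(\Delta)\subset\mathrm{Ran}\,\Omega_\pm(\Delta)$. You are right that this step requires an argument (reverse Cook estimate, or a Mourre estimate for $H_1$ on $\Delta$ obtained from the compact-$y$-support of the perturbation), and your sketch of how to get it is reasonable; the paper does not spell this out.
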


Following section 4 of \cite{HS1}, we prove Theorem \ref{th:asymptotic1} by first proving the existence of local wave operators for the pair
$H_0 := H_\epsilon$ and  $H_1 := H (A_\eps+a,q)$, and any interval $\Delta \subset \Delta_{\epsilon,j}$, as in the theorem.

\begin{lemma}\label{lemma:wave-op1}
Let $(H_0,H_1)$ and $\Delta$ be as defined in Theorem \ref{th:asymptotic1}.
The local wave operators for $(H_0, H_1)$ and interval $\Delta$ exist. That is, the strong limits
$$
{\lim}_{t \rightarrow \pm \infty} e^{it H_1} e^{ -itH_0} \mathbb{P}_{0} (\Delta),
$$
exist and define bounded operators $\Omega_\pm (\Delta)$ on ${\rm L}^2 (\R^2)$.
Consequently, the spectrum of $H_1$ is absolutely continuous in $\Delta$.
\end{lemma}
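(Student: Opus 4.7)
The plan is to apply Cook's method, following closely the scheme in \cite{HS1}, Section 4, and exploiting the fact that the perturbation is localized in $y$. Write $W := H_1 - H_0 = W_\eps(a) + q$ using \eqref{emp1}; this is a first-order differential operator whose coefficients $a$, $\nabla \cdot a$, $|a|^2$ and $q$ all vanish outside the strip $\Sigma_R := \{(x,y) \in \R^2 : |y| < R\}$. By Cook's criterion, the strong limits
\begin{equation*}
\Omega_\pm(\Delta) \varphi := \lim_{t \to \pm \infty} e^{itH_1} e^{-itH_0} \varphi
\end{equation*}
exist for every $\varphi \in \mathbb{P}_0(\Delta) L^2(\R^2)$ provided $\int_0^{\pm \infty} \|W e^{-itH_0}\varphi\|\, dt < \infty$ for $\varphi$ in a dense subspace $\mathcal{D}$. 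Proposition \ref{prop-a2} combined with the comparison Lemma \ref{lm-b1} shows that $\omega_{\eps,l}^{-1}(\Delta) = \emptyset$ for $l \neq j$ and $\eps \in (0,\eps_j)$; hence the fiber decomposition \eqref{b4}--\eqref{b5} provides a natural dense subspace
\begin{equation*}
\mathcal{D} := \bigl\{ \varphi \in \mathbb{P}_0(\Delta) L^2(\R^2) : \hat\varphi(x,k) = g(k) \psi_{\eps,j}(x,k),\ g \in C_0^\infty(\mathrm{int}(\omega_{\eps,j}^{-1}(\Delta))) \bigr\}.
\end{equation*}

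For $\varphi \in \mathcal{D}$ the time-evolved state has the explicit oscillatory integral representation
\begin{equation*}
(e^{-itH_0}\varphi)(x,y) = \frac{1}{\sqrt{2\pi}} \int_{\R} e^{i(ky - t \omega_{\eps,j}(k))}\, g(k)\, \psi_{\eps,j}(x,k)\, dk.
\end{equation*}
By Proposition \ref{prop-a2} and Lemma \ref{lm-b1}, for $\eps$ sufficiently small there exists $c > 0$ such that $\omega_{\eps,j}'(k) \geq c$ on $\mathrm{supp}\, g$. Consequently, for $(x,y) \in \R \times (-R,R)$ and $|t| \geq 4R/c$, the phase derivative $\partial_k[ky - t\omega_{\eps,j}(k)] = y - t \omega_{\eps,j}'(k)$ satisfies $|y - t\omega_{\eps,j}'(k)| \geq c|t|/2$. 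Repeated integration by parts in $k$, controlling the $k$-derivatives of $\psi_{\eps,j}(\cdot,k)$ in $L^2(\R_x)$ via the real-analytic dependence of the spectral projector of $h_\eps(k)$ on $k$ (Kato perturbation theory applied on the compact set $\mathrm{supp}\, g$), yields
\begin{equation*}
\bigl\| \chi_{\Sigma_R}\, e^{-itH_0}\varphi \bigr\| + \bigl\| \chi_{\Sigma_R}\, \nabla e^{-itH_0}\varphi \bigr\| \leq C_N (1+|t|)^{-N},\qquad N \in \N^*.
\end{equation*}

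Because $W$ is a first-order operator whose coefficients are supported in $\Sigma_R$, the above estimate implies $\|W e^{-itH_0}\varphi\| \leq C'_N (1+|t|)^{-N}$, which is integrable for $N \geq 2$. Cook's lemma then produces the wave operators $\Omega_\pm(\Delta)$ on $\mathcal{D}$; they extend to bounded operators on $\mathbb{P}_0(\Delta) L^2(\R^2)$ by unitarity of the time evolutions and density of $\mathcal{D}$. The intertwining identity $H_1 \Omega_\pm(\Delta) = \Omega_\pm(\Delta) H_0 \mathbb{P}_0(\Delta)$, combined with the pure absolute continuity of $H_0$ on $\mathbb{P}_0(\Delta) L^2(\R^2)$ (cf.\ Lemma \ref{lm-b1}), then forces $\sigma(H_1) \cap \Delta$ to be purely absolutely continuous, as claimed. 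The main technical obstacle is the uniform $L^2_x$-control of the $k$-derivatives $\partial_k^N \psi_{\eps,j}(\cdot,k)$ needed to justify the integration by parts; this rests on the analyticity of the fiber family $h_\eps(k)$ together with the exponential $x$-decay of Theorem \ref{pr-b2}, which ensures that the resulting $x$-integrals converge uniformly for $k$ in the compact support of $g$.
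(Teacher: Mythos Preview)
Your proof is correct and follows essentially the same route as the paper's: Cook's criterion, the oscillatory integral representation of $e^{-itH_0}\varphi$ via the fiber decomposition, and non-stationary phase exploiting the lower bound $\omega_{\eps,j}'(k)\geq c>0$ on $\omega_{\eps,j}^{-1}(\Delta)$ together with the $y$-localization of the perturbation. The only technical difference is that the paper absorbs the first-order nature of $W$ via the factorization $W_\eps(H_0+1)^{-1}(H_0+1)\chi_R$ and a commutator with a cutoff, whereas you work on a dense set $\mathcal D$ with $g\in C_0^\infty$ and estimate $\chi_{\Sigma_R}\nabla e^{-itH_0}\varphi$ directly; both devices serve the same purpose, and your acknowledgement that the $L^2_x$-bounds on $\partial_k^N\psi_{\eps,j}$ are the only real technical input matches the paper's (somewhat implicit) treatment.
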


\begin{proof}
1. We use Cook's method and study the local operators defined by
\bea\label{eq:wo2}
\Omega (t, \Delta) - \mathbb{P}_0 (\Delta)  & = & e^{it H_1} e^{ -itH_0} \mathbb{P}_0 (\Delta) - \mathbb{P}_0 (\Delta)\nonumber \\
 &=& \int_0^t ~\frac{d}{ds} e^{is H_1} e^{ -isH_0} \mathbb{P}_0 (\Delta) ~ds \nonumber \\
 &=& i \int_0^t ~ e^{is H_1} W_\epsilon e^{ -isH_0} \mathbb{P}_0 (\Delta) ~ds,
 \eea
 where $W_\epsilon := H_1 - H_0$.
It suffices to prove that for any smooth vector $\varphi$
\beq\label{eq:cook1}
\lim_{t_1, t_2 \rightarrow \infty} \int_{t_1}^{t_2} W_\epsilon e^{ -isH_0} \mathbb{P}_0 (\Delta) \varphi ~ds = 0 ,
\eeq
and similarly for $t_\ell \rightarrow - \infty$, for $\ell = 1,2$.

\noindent
2. We easily calculate a formula for the perturbation:
$$
W_\epsilon = -2 p_x a_1 - 2 (p_y - \beta_\epsilon (x)) a_2 - i (\partial_x a_1) - i (\partial_y a_2) +a_1^2+ a_2^2 + q .
$$
The coefficients of the first-order operator $W_\epsilon$ are supported in $|y| < R$. Let $\chi_R(y) \in C_0^2 (\R)$ be a nonnegative function
equal to one on $[-R,R]$ and supported in $[-2R,2R]$.
We write the integral in \eqref{eq:cook1} as
$$
\int_{t_1}^{t_2} W_\epsilon e^{ -isH_0} \mathbb{P}_0 (\Delta) \varphi ~ds = \int_{t_1}^{t_2} W_\epsilon (H_0+1)^{-1}
 (H_0 +1) \chi_R (y) e^{ -isH_0} \mathbb{P}_0 (\Delta) \varphi ~ds .
$$
It is clear that $\| W_\epsilon (H_0 + 1)^{-1}\| < \infty$.
Furthermore, we compute the commutator:
$$
(H_0  + 1) \chi_R = \chi_R (H_0  + 1) + Q_\epsilon,
$$
and note that the coefficients of $Q_\epsilon$ are supported in $[-2R,-R] \cup [R, 2R]$, where
$$
Q_\epsilon = -2i(p_y - \beta_\epsilon (x)) \chi_R^\prime + \chi_R ''.
$$
As a consequence, we need to estimate integrals of the form for $\ell = 1,2$:
\beq\label{eq:cook2}
{K_{\ell}} ~{\chi}_{2R}(y) \int_{t_1}^{t_2} e^{ -isH_0} \mathbb{P}_0 (\Delta) \varphi ~ds
= K_{\ell} ~{\chi}_{2R}(y) \int_{t_1}^{t_2} ~ds ~\int_{\omega_{\epsilon,j}^{-1}(\Delta)} e^{i (ky-s \omega_{\epsilon,j}(k))} m_\ell (k) \beta_{\epsilon,j} (k) \psi_{\epsilon,j}(x,k) ~dk,
\eeq
where $\beta_{\epsilon,j}(k) = \langle \hat{\varphi}(\cdot,k), \psi_{\epsilon,j}(\cdot,k)  \rangle$, and the operator $K_1 = W_\epsilon (H_0 +1)^{-1}$ with $m_1(k) = 1 + \omega_{\epsilon,j}(k)$, and $K_2 = W_\epsilon (H_0 +1)^{-1} Q_\epsilon$ with $m_1(k) = 1$. Both operators $K_\ell$ are bounded.

\noindent
3. We use the method of stationary phase to evaluate the long time behavior of the integral in \eqref{eq:cook2}. Let $\Phi(k,y,s) := ky - s \omega_{\epsilon, j}(k)$ denote the phase function.
We then have
$$
\partial_k \Phi(k,y,s) := y - s \omega_{\epsilon, j}^\prime (k).
$$
From Theorem \ref{pr-b1} for $H_0 = H_\epsilon$, we know that for $k \in \omega_{\epsilon, j}^{-1}(\Delta)$, the derivative $\omega_{\epsilon, j}^\prime (k)$ is bounded below by the right side of \eqref{eq:current-pert1}. Consequently, we have the lower bound on the derivative of the phase:
$$
| \partial_k \Phi(k,y,s) \tilde{\chi}_{R}(y) | \geq \frac{1}{2} s c_j \delta_j^3 \left( \frac{r-1}{r^3} \right) b_-^{1/2} - 2R.
$$
This is clearly bounded from below by a positive constant for $s$ large enough.
Since $\omega_{\epsilon,j}(k)$ is analytic, we can differentiate the phase any number of times.
As a consequence, for any $N \geq 1$, the integral in \eqref{eq:cook2} may be bounded above as
$$
\left| \int_{t_1}^{t_2} ~ds ~\int_{\omega_{\epsilon,j}^{-1}(\Delta)} e^{i \Phi(k,s,y)} m_\ell (k) \beta_{\epsilon,j} (k) \psi_{\epsilon,j} (x,k) ~dk \right|
\leq C_N \int_{t_1}^{t_2} ~ ~ \frac{ds}{\langle s \rangle^N } ~ \left| \int_{\omega_{\epsilon,j}^{-1}(\Delta)} e^{i \Phi(k,s,y)}
\Sigma_{j,N}(x,k) ~dk \right| ,
$$
where $\Sigma_{j,N}(x,k) \in L^2 (\R \times \omega_{\epsilon,j}^{-1}(\Delta))$. Upon taking $N \geq 2$, the integral vanishes as $t_1, t_2 \rightarrow \infty$. This, along with the boundedness of the operators $K_\ell$, establishes \eqref{eq:cook1}.
\end{proof}

\noindent
The absolutely continuity of the spectrum of $H(A_\epsilon +a,q)$ is the interval $\Delta$ was already established using the Mourre estimate in section \ref{sec:pert1} under smallness bounds on the perturbation $(a,q)$ and $\nabla a$. This lemma establishes the absolute continuity of the spectrum in $\Delta$ under different conditions of the perturbation.
We now prove Theorem \ref{th:asymptotic1}.

\begin{proof}
1. The local wave operators $\Omega_\pm (\Delta)$ satisfy a local intertwining property:
\beq\label{eq:intertwing1}
\Omega_\pm (\Delta)^* \mathbb{P}_1(\Delta) = \mathbb{P}_0 (\Delta) \Omega_\pm (\Delta)^*.
\eeq
Let $\varphi = \mathbb{P}_1 (\Delta) \varphi$ and recall that the velocity operator for $H(A_\epsilon+a,q)$ is given in \eqref{emp1b}.
As a consequence, we compute
\beas
\langle \varphi, V_y^\pm (\Delta) \varphi \rangle & = & \langle \varphi, \Omega_\pm (\Delta) v_{y,A_\epsilon+a,q} {\Omega_\pm} (\Delta)^* \varphi \rangle \nonumber \\
&=& \langle \Omega_\pm (\Delta)^* \mathbb{P}_1 (\Delta) \varphi,  v_{y,A_\epsilon+a,q} {\Omega_\pm} (\Delta)^* \mathbb{P}_1(\Delta)\varphi \rangle \nonumber \\
 & = & \langle  \mathbb{P}_0 (\Delta) \Omega_\pm (\Delta)^* \varphi,  v_{y,A_\epsilon+a,q}
 \mathbb{P}_0 (\Delta) \Omega_\pm (\Delta)^* \varphi \rangle
 \eeas
\noindent
2. We now recall that $v_{y,A_\epsilon+a,q} = v_{y, \epsilon} - a_2$. The operator $\mathbb{P}_0 (\Delta) v_{y,\epsilon} \mathbb{P}_0 (\Delta)$
 is bounded below by $\frac{c_j}{2} \delta_j^3 \left( \frac{r-1}{r^3} \right) b_-^{1 \slash 2}$ according to Theorem \ref{pr-b1}. Our hypotheses on $a_2$ is that $\| a_2 \|_\infty \leq a_* b_-^{1/2} < \frac{c_j}{4} \delta_j^3 \left( \frac{r-1}{r^3} \right) b_-^{1 \slash 2}$. Consequently, we obtain
\beq \label{eq:z}
\langle \varphi, V_y^\pm (\Delta) \varphi \rangle \geq \frac{c_j}{4} \delta_j^3 \left( \frac{r-1}{r^3} \right) b_-^{1 \slash 2}
 \| \mathbb{P}_0 (\Delta) \Omega_\pm (\Delta)^* \varphi \|^2.
\eeq
We again use the intertwining relation \eqref{eq:intertwing1} to write
$$
\| \mathbb{P}_0 (\Delta) \Omega_\pm (\Delta)^* \varphi \| = \| \Omega_\pm (\Delta)^* \mathbb{P}_1 (\Delta) \varphi \| =
\| \varphi \|,
$$
since the local wave operators are local partial isometries. Using this identity in the lower bound \eqref{eq:z} establishes the result.
\end{proof}

For generalized Iwatsuka models, M\v{a}ntoiu and Purice \cite{MP}
 proved minimum and maximum velocity estimates. Their Theorem 5.2 requires that the magnetic field $b(x)$ be positive, bounded $0 < b_- \leq
  b(x) \leq b_+ < \infty$, and that $\lim_{x \rightarrow \pm \infty} b(x) = b_\pm$. For these general models, the band functions may have several critical points that is not the case for the sharp and soft Iwatsuka models studied here.
 In our case, their results apply to the unperturbed operator $H_\epsilon$, for $\epsilon \geq 0$.
For $H_\epsilon$, their main result may be stated as follows.
 Let $\Delta$ be any energy interval as in Theorem \ref{th:asymptotic1}. With respect to $\Delta$, we define
$\rho_\Delta = \inf_{k \in \omega_{\epsilon,j}^{-1}(\Delta)} \omega_{\epsilon,j}^\prime (k)$ and $\theta_\Delta = \sup_{k \in \omega_{\epsilon,j}^{-1}(\Delta)} \omega_{\epsilon,j}^\prime (k)$. These constants play the role of the minimum and maximum velocity, respectively.
Let $F$ be a real-valued function on $\R$ with support outside of the interval $[ \rho_\Delta, \theta_\Delta ]$.
Then, we have
$$
\int_1^\infty ~\frac{dt}{t} ~\|F(|y|/t) e^{-itH_\epsilon} \mathbb{P}_\epsilon (\Delta) \varphi \|^2 \leq C \| \varphi \|^2, ~~\forall \varphi \in {\rm L}^2 (\R^2).
$$
This means that for states with energy in $\Delta$, the time evolution of the $y$-coordinate, $y(t) = e^{itH_\epsilon} y e^{-itH_\epsilon}$ satisfies
$$
\rho_\Delta t \leq |y(t)| \leq \theta_\Delta t, ~~t \rightarrow +\infty,
$$
and a similar bound for $t \rightarrow - \infty$. The constants $\rho_\Delta$ and $\theta_\Delta$ are thus the  minimum and maximum velocities
associated with $\Delta$.  In our situation the edge current is also strongly localized to the strip $|x| \leq C b_-^{-1/2}$.



\begin{thebibliography} {[10]}
\frenchspacing \baselineskip=12 pt plus 1pt minus 1pt

\bibitem{bvesb} J.\ Bellissard, A.\ van Elst, H.\ Schultz-Baldes, {\it The non-commutative geometry of the quantum Hall effect}, J.\ Math.\ Phys. {\bf 35}, 5373--5451 (1994).

\bibitem{combes-germinet} J.-M.\ Combes, F.\ Germinet, {\it Edge and impurity effects on quantization of Hall currents}, Commun.\ Math.\ Phys. {\bf 256}, 159--180 (2005).

\bibitem {CHS} { J.-M. Combes, P. Hislop, E. Soccorsi},
{\em Edge states for quantum Hall Hamiltonians}, In: Mathematical
results in quantum mechanics (Taxco, 2001), 69--81, Contemp.
Math., {\bf 307}, Amer. Math. Soc., Providence, RI, 2002.

\bibitem{CFKS} { H. Cycon, R. Froese, W. Kirsch, B. Simon}, {\em Schr\"odinger Operators with Application to
Quantum Mechanics and Global Geometry}, Texts and Monographs in Physics, Springer-Verlag,
Berlin, Heidelberg, New York, 1987.


\bibitem{DG} J.\ Derezi\'nski, C.\ G\'erard, {\it Scattering theory of classical and quantum $N$-particle systems}, Berlin: Springer-Verlag, 1997.
\bibitem {DGR} N.\ Dombrowski, F.\ Germinet, G.\ Raikov, {\em Quantization of the edge conductance for magnetic perturbation of Iwatsuka Hamiltonians}, Ann.\ H.\ Poincar\'e {\bf 12}, 1169--1197 (2011).

\bibitem{DHS} N.\ Dombrowski, P.\ D.\ Hislop, E.\ Soccorsi,
{\em Edge currents and eigenvalue estimates for magentic barrier Schr\"odinger operators}, preprint (2013).


\bibitem{fgw} J.\ Fr\"ohlich, G.-M.\ Graf, J.\ Walcher,
{\em On the extended nature of edge states of quantum Hall systems}, Ann.\ H.\ Poincar\'e {\bf 1} (2000), 405--444.

\bibitem{HS1} P.\ D.\ Hislop, E.\ Soccorsi, {\em Edge currents for quantum Hall systems, I. One-edge, unbounded geometries},
Rev. Math. Phys. {\bf 20} vol. 1 (2008), 71--115.

\bibitem{HS2} P.\ D.\ Hislop, E.\ Soccorsi,
{\em Edge currents for quantum Hall systems, II. Two-edge, bounded and unbounded geometries},
Ann. Henri Poincar\'e {\bf 9} vol. 6 (2008), 1141--1175.

\bibitem{ikebe-saito} T.\ Ikebe, Y.\ Saito, {\em Limiting absorption method and absolute continuity for the Schr\"odinger operators}, J.\ Math.\ Kyoto Univ. {\bf 23} (1983), 475--480.

\bibitem{iwatsuka1} { A.\ Iwatsuka}, {\em Examples of absolutely continuous Schr\"odinger operators in magnetic fields}, Publ.\ RIMS, Kyoto Univ.\ {\bf 21} (1985), 385--401.

\bibitem{leinfelder} H.\ Leinfelder, {\em Gauge invariance of Schr\"odinger operators and related spectral properties}, J.\ Op.\ theory {\bf 9} (1983), 163--179.

\bibitem {K} {T.\ Kato}, {\em Perturbation Theory for Linear Operators},
  Die Grundlehren der mathematischen Wissenschaften, {\bf 132} Springer-Verlag
  New York, Inc., New York 1966.


\bibitem{MP} {N.\ M\v{a}ntoiu, R.\ Purice}, {\em Some propagation properties of the Iwatsuka model}, Commun.\ Math.\ Phys.  {\bf 188} (1997), 691--708.


\bibitem {RS} {M.\ Reed, B.\ Simon}, {\it Methods of Modern Mathematical Physics} IV,
 Academic Press, 1978.

\bibitem{reijniers-peeters} J.\ Reijniers, F.\ M.\ Peeters, {\em Snake orbits and related magnetic edge states}, J.\ Phys.\ Condens.\ Matt. {\bf 12} (2000), 9771.

\end{thebibliography}
\end{document}